\def\notes{1}
\documentclass[11pt]{article}

\usepackage{amsmath, amsthm, amssymb, graphicx, enumerate, fullpage, url}
\usepackage{color}

\newcommand{\mnote}[1]{\ifnum\notes=1{{\sf\color{red} [Madhu: #1]}}\fi}
\newcommand{\snote}[1]{\ifnum\notes=1{{\sf\color{blue} [Swastik: #1]}}\fi}

\theoremstyle{plain}
\newtheorem{thm}{Theorem}[section]
\newtheorem{theorem}[thm]{Theorem}
\newtheorem{definition}[thm]{Definition}
\newtheorem{proposition}[thm]{Proposition}

\newtheorem{prop}[thm]{Proposition}
\newtheorem{lemma}[thm]{Lemma}

\newtheorem{claim}[thm]{Claim}

\theoremstyle{definition}

\newtheorem{example}[thm]{Example}
\mathchardef\mhyphen="2D
\newcommand{\cF}{\mathcal{F}}
\newcommand{\E}{\mathbb{E}}
\newcommand{\calf}{\mathcal{F}}
\newcommand{\calL}{\mathcal{L}}

\newcommand{\F}{\mathbb{F}}
\newcommand{\K}{\mathbb{K}}

\newcommand{\wt}{{\rm wt}}

\newcommand{\modstar}[1]{~(\mathop{\rm{mod}^*} #1)}
\newcommand{\Z}{\mathbb{Z}}

\DeclareMathOperator{\Deg}{Deg}
\DeclareMathOperator{\Fam}{Fam}

\DeclareMathOperator{\Tr}{Tr}

\DeclareMathOperator{\supp}{supp}

\newcommand{\lift}{\mathop{\mathrm{Lift}}}
\newcommand{\Corr}{\mbox{{\sc Corr}}}
\newcommand{\Tester}{\mbox{{\sc Test}}}
\newcommand{\ip}[1]{\langle #1 \rangle}
\newcommand{\Lift}{\mathop{\mathrm{Lift}}}
\newcommand{\poly}{\mathop{\mathrm{poly}}}
\newcommand{\aff}{\mathop{\mathrm{Aff}}}

\renewcommand{\vec}[1]{{\mathbf #1}}
\newcommand{\bA}{{\mathbf A}}
\newcommand{\bB}{{\mathbf B}}

\begin{document}

\title{New affine-invariant codes from lifting}

\author{Alan Guo\thanks{CSAIL, Massachusetts Institute of
Technology, 32 Vassar Street, Cambridge, MA, USA. {\tt aguo@mit.edu}. Research
supported in part by NSF grants CCF-0829672, CCF-1065125,
CCF-6922462, and an NSF Graduate Research Fellowship.}
\and {Swastik Kopparty \thanks{Department of Computer Science \& Department of Mathematics, Rutgers University, Piscataway NJ, USA. {\tt swastik.kopparty@rutgers.edu}.}}
\and Madhu Sudan\thanks{Microsoft Research New England,
One Memorial Drive, Cambridge, MA 02139, USA.
 {\tt madhu@mit.edu}.}
}

\maketitle

\begin{abstract}
In this work we explore error-correcting codes derived from
the ``lifting'' of ``affine-invariant'' codes.
Affine-invariant codes are simply linear codes whose coordinates
are a vector space over a field and which are invariant under
affine-transformations of the coordinate space. Lifting takes codes
defined over a vector space of small dimension and lifts them to higher
dimensions by requiring their restriction to every subspace of
the original dimension to be a codeword of the code being lifted.
While the operation is of interest on its own, this work focusses
on new ranges of parameters that can be obtained by such codes,
in the context of local correction and testing.
In particular we present four interesting ranges of parameters that
can be achieved by such lifts, all of which are new in the context
of affine-invariance and some may be new even in general.
The main highlight is a construction of high-rate codes
with sublinear time decoding. The only prior construction of such
codes is due to Kopparty, Saraf and Yekhanin~\cite{KSY}.
All our codes are extremely simple, being just lifts of various
parity check codes (codes with one symbol of redundancy), and
in the final case, the lift of a Reed-Solomon code.

We also present a simple connection between certain lifted codes 
and lower bounds on the size of ``Nikodym sets''. Roughly, a Nikodym set
in $\F_q^m$ is a set $S$ with the property that every point has a line
passing through it which is almost entirely contained in $S$. While
previous lower bounds on Nikodym sets were roughly growing as
$q^m/2^m$, we use our lifted codes to prove a lower bound of $(1 - o(1))q^m$
for fields of constant characteristic.
\end{abstract}

\newpage

\section{Introduction}

In this work we explore the ``locality properties'' of some highly symmetric
codes constructed by ``lifting'' ``affine-invariant'' codes.
We describe these terms below.

\subsection{Basic terminology and background}

We start with some standard coding theory preliminaries.
Let $\F_q$ denote the finite field of cardinality $q$ and
for any finite set $D$, let $\{D \to \F_q\}$ denote the
set of all functions from $D$ to $\F_q$. In this work, a
code on coordinate set $D$
is a set of functions $\calf \subseteq \{D \to \F_q\}$.
A code $\calf$ is said to be linear if it forms a vector space over
$\F_q$, i.e., if for every $f,g \in \calf$ and $\alpha \in \F_q$
the function $\alpha f + g \in \calf$.
We refer to $N = |D|$ as the length of the code.
A second parameter of interest is the dimension of the code
which is the dimension of $\calf$ as a vector space.
The dual of a code $\calf$, denoted $\calf^\bot$,
is the set of functions $\{g:D \to \F_q | \ip{f,g} = 0 \text{ } \forall f \in
\calf\}$, where $\ip{f,g} = \sum_{x \in D} f(x)g(x)$ denotes the
standard inner product of vectors.
Let $\wt(f) = |\{x \in D | f(x) \ne 0\}|$ denote the weight
of $f$.
Let $\delta(f,g) = |\{ x \in D | f(x) \ne g(x) \}|/|D|$ denote
the (normalized Hamming) distance between $f$ and $g$.
(So $\delta(f,g) = \wt(f-g)/|D|$.)
We say $f$ is $\delta$-close to $g$ if $\delta(f,g) \leq \delta$
and $\delta$-far otherwise. We say $f$ is $\delta$-close
to $\calf$ if there exists $g \in \calf$ that is $\delta$-close
to $f$ and $\delta$-far otherwise.
We say $\calf$ is a code of distance $\delta$ if every pair of
distinct codewords in $\calf$ are $\delta$-far from each other.
We use $\delta(\calf)$ to denote the maximum $\delta$ such that $\calf$
is a code of distance $\delta$.

In this work we explore some aspects of affine-invariant codes. In such
codes
the domain $D$ is a vector space $\F_{q^n}^m$, i.e., an $m$-dimensional
vector space over the $n$-dimensional extension field of the
range $\F_q$. Let $Q = q^n$ and let $\F_Q$ denote the field
of size $Q$.
We say a function $A:\F_Q^m \to \F_Q^m$ is an affine function
if $A(x) = M\cdot x + b$ for some matrix $M \in \F_Q^{m \times m}$
and vector $b \in \F_Q^m$. We say $A$ is an affine permutation
if $M$ is invertible.
A code $\calf \subseteq \{\F_{Q}^m \to \F_q\}$ is said to be
affine-invariant if for every affine permutation
function $A:\F_Q^m \to \F_Q^m$ and for every $f \in \calf$
the function $f \circ A$ given by $(f \circ A)(x) = f(A(x))$
is also in $\calf$.\footnote{In some of the earlier works
invariance is defined with respect to all affine functions and
not just permutations. In Section~\ref{app:perm} we show that the
two notions are equivalent and so we use invariance with respect
to permutations in this paper.}

Affine-invariant codes are of interest to us because they exhibit,
under natural and almost necessary conditions, very good locality
properties: they tend to be locally testable and locally correctible.
We introduce these notions below.
We say a code $\calf$ is $(k,\delta)$-locally correctible
($(k,\delta)$-LCC)
if there exists a probabilistic algorithm $\Corr$ that,
given $x \in D$ and
oracle access to a function $f:D \to \F_q$ which is $\delta$-close
to some $g \in \calf$, makes at most $k$ queries to $f$ and
outputs $g(x)$ with probability at least $2/3$.
We say that $\calf$ is $(k,\epsilon,\delta)$-locally testable
($(k,\epsilon,\delta)$-LTC)
if $\calf$ is a code of distance $\delta$ and there exists a
probabilistic algorithm $\Tester$ that, given oracle access to
$f:D \to \F_q$, makes at most $k$ queries to $f$ and accepts
$f \in \calf$ with probability one, while rejecting $f$ that
is $\tau$-far from $\calf$ with
probability at least $\epsilon \cdot \tau$.

\subsection{This work: Motivation and Results}

As noted above affine-invariant lead naturally to locally decodable
codes and locally testable codes. In this work we use a certain
lifting operation to exhibit codes with very good locality. We
start by defining the lifting operation. For a function
$f:\F_Q^m \to \F_q$ and set $S \subseteq
\F_Q^m$ let $f|_S$ denote the restriction of $f$ to the domain $S$.

\begin{definition}[Lifting]
For a code $\calf \subseteq \{\F_Q^t \to \F_q\}$, and integer
$m \geq t$ its $m$-dimensional lift
$\Lift_m(\calf) \subseteq \{\F_Q^m \to \F_q\}$ is the
code
$$\{f : \F_Q^m \to \F_q ~|~ f|_V \in \calf \mbox{ for every
$t$-dimensional affine subspace $V \subseteq \F_Q^m$}\}.$$
\end{definition}

(Note that the definition above assumes some canonical way to equate
$t$-dimensional subspaces of $\F_Q^m$ with $\F_Q^t$. But for
affine-invariant families $\calf$ the exact correspondence does not
matter as long as the map is an isomorphism.)

The lift is a very natural operation on affine-invariant codes,
and builds long codes from shorter ones. Indeed, lifts may be
interpreted as the basic operation that leads to the construction
of ``(Generalized) Reed-Muller'' codes, codes formed by
$m$-variate polynomials
over $\F_q$ of total degree at most $d$: Such codes are the
``lifts'' of $t$-variate polynomials of degree at most
$d$, for $t = \lceil \frac{d+1}{q - q/p} \rceil$ where $p$
is the characteristic of $q$. (This follows from the ``characterization''
of polynomials as proven in \cite{KR06}.)
While the locality properties (testability and correctability)
of Reed-Muller codes are
well-studied~\cite{RS,ALMSS,AroraSudan,RazSaf,AKKLR,KR06,JPRZ,BKSSZ,HSS11},
they are essentially the only rich class of symmetric codes
that are well-studied.
The only other basic class of symmetric codes that are studied
seem to be sparse ones, i.e., ones with few codewords.

In this work we explore the lifting of codes as a means to building
rich new classes of {\em dense} symmetric codes.
(In Theorems~\ref{thm:one}
-~\ref{thm:four} below we describe some of the codes we obtain this way,
and contrast them with known results.) Along the way we also initiate
a systematic study of lifts of codes. Lifts of codes were introduced
first in \cite{BMSS}, who explored it to prove negative results ---
specifically, to build ``symmetric LDPC codes'' that are not
testable. (Their
definition was more restrictive than ours, and also somewhat less
clean.) Our work
is the first to explore positive use of lifts.

We remark that all codes constructed by lifting have relative distance
of at least $Q^{-t}$ and are $(Q^t,Q^{-t}/3)$-LCC's and $(Q^t,
\Omega(Q^{-2t}),Q^{-t})$-LTC's. The local correctability follows directly from
their definition, while the local testability is a consequence of
the main result of \cite[Theorem~2.9]{KS08-ECCC}. (See also
Proposition~\ref{prop:liftlcc}.) This general feature suffices for
three of our code construction, while in the fourth case we have to
analyze the decodability a little
more carefully.

\paragraph{An example.}

Let $q$ be a power of $2$, let $d = (1-\delta) q$ and let us consider the lift of
the set of all univariate polynomials over $\F_q$ of degree at most $d$ to $\F_q^2$.
Explicitly, we mean the code $\mathcal F$ consisting of all functions $f : \F_q^2 \to \F_q$ such that the
restriction of $f$ to any line of $\F_q^2$ is a univariate polynomial of degree at most $d$.
$\mathcal F$ is an affine-invariant linear space.

By construction, it is clear that $\mathcal F$ has a lot of local structure; this leads
to a simple local-correction algorithm for $\mathcal F$ based on picking random lines
and performing noisy univariate polynomial interpolation (i.e., Reed-Solomon decoding).
We will show that in fact $\mathcal F$ also has large dimension (when $\delta$ is small).
This leads to a high-rate locally correctable code.

Which functions $f : \F_q^2 \to \F_q$ lie in $\mathcal F$? We will give an answer to this
question later in the paper, in terms of the polynomial representation $f(X,Y) = \sum_{0 \leq i, j < q} a_{ij} X^i Y^j$.
Here since we are interested in showing that $\dim \mathcal F$ is large, it will suffice
for us to show that there are many linearly independent elements in $\mathcal F$.
To do this, we will study when a monomial $g(X,Y) = X^i Y^j$ is in $\mathcal F$. 
Note that if we restrict $g$ to a line $\ell(T) = (\alpha_1 T + \alpha_0, \beta_1 T + \beta_0)$, we
get the function
$$g|_\ell (T)  = (\alpha_1 T + \alpha_0)^i (\beta_1 T + \beta_0)^j = \sum_{r \leq i} \sum_{s \leq j} \alpha_1^r \alpha_0^{i-r} \beta_1^{s} \beta_0^{j-s} {i \choose r}{j\choose s} T^{r+s}.$$
This function will equal a univariate polynomial of degree at most $d$ at all points of $\F_q$
if, when we reduce it mod $T^q - T$, we see no monomials of degree $> d$. Reducing the above polynomial
mod $T^q - T$ amounts to replacing $T^{r+s}$ in the above expression with $T^{r+s \modstar q}$ (where
$a \modstar q = 0$ if $a = 0$ and $a \modstar q = b \in \{1,\ldots,
q-1\}$ if $a \ne 0$ and $a = b \pmod{q-1}$). This will happen if
$i, j$ satisfy the following criterion: for every $r \leq i, s \leq j$, if ${i \choose r} \neq 0 \mod 2$ and
${j \choose s} \neq 0 \mod 2$, then $r + s \modstar q \leq d$. Via Lucas' theorem (which gives a characterization
of when ${a \choose b} = 0 \mod 2$, we deduce that the monomial $X^iY^j$ is in $\mathcal F$ if $(i,j)$ lies in the
set:
$$ S = \{ (i, j) \mid \forall r \leq_2 i , j \leq_2 s ,  r + s \modstar q \leq d \},$$
where $a\leq_2 b$ means that set of coordinates that equal $1$ in the binary representation of $a$ is a subset
of the set of coordinates that equal $1$ in the binary representation of $b$.
Finally, an analysis of the set $S$ shows that its size is $\geq (1 - \epsilon_\delta) \cdot q^2$, where
$\epsilon_\delta \to 0$ as $\delta \to 0$. Thus the dimension of $\mathcal F$ is at least $ (1 - \epsilon_\delta) \cdot q^2$.

We will formally treat this example in greater generality in a later section.
Before that, we will build up the theory of lifts of multivariate codes.
In Proposition~\ref{prop:fam} we will see that affine-invariant codes are
completely characterized by (and in fact spanned by) the monomials
in the code; thus the dimension of the code above exactly equals $|S|$.

\paragraph{The constructions.}

For simplicity most codes are described for the case of fields
of characteristic two,
while the construction does generalize to other fields.
(The main exception is in Theorem~\ref{thm:two} where the
code is later applied in other cases, so we describe the
more general result.)
The codes in the first three theorems below 
are obtained by the lifting of the parity-check
code.
By making appropriate choices of $Q$ and $t$ we get codes with
different locality (and distance).
The fourth code works over large fields only and is obtained by lifting
the Reed-Solomon code.

Our first code has constant locality $k$, for $k$ being a power of $2$.
If the length of the code is $N$ (in our setting $N = Q^m$), then
the code has dimension $\Omega_k((\log N)^{k})$.

\begin{theorem}
\label{thm:one}
For every positive integer $t$ and $k = 2^t$, there exists a constant
$c_k > 0$ such that for every positive integer $m$ and $N = 2^m$,
there exists a binary 
code of length $N$, dimension at least $c_k (\log N)^{k-2}$
which is a $(k-1,k^{-1}/3)$-LCC, and a $(k,\Omega(k^{-2}),k^{-1})$-LTC.
\end{theorem}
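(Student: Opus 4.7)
The plan is to apply the lifting operation to a single-parity-check code, with the parameters chosen so that lines in the ambient affine space contain exactly $k$ points. Set $Q = k = 2^t$ and let $\calf_0 \subseteq \{\F_Q \to \F_Q\}$ be the single-parity-check code, i.e.\ the set of univariate polynomials of degree at most $Q-2$, equivalently the set of functions with vanishing sum. With $m' = \lfloor (m - \log_2 t)/t \rfloor$, consider
$$\calf := \Lift_{m'}(\calf_0) \subseteq \{\F_Q^{m'} \to \F_Q\}.$$
Unfolding each $\F_Q$-symbol into $t$ bits gives a binary code of length at most $N$, and routine padding brings the length to exactly $N$ without affecting distance, dimension, or locality.

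Distance, local correctability, and local testability follow from the general lifting machinery. Every one-dimensional affine subspace of $\F_Q^{m'}$ contains exactly $Q = k$ points, and the restriction of any codeword to such a line is a parity-check codeword, hence determined by any $k-1$ of its $k$ values. This immediately gives relative distance at least $k^{-1}$, and the local corrector picks a uniformly random line through the target point, queries its remaining $k-1$ points, and returns the unique value making the line sum to zero, yielding a $(k-1, k^{-1}/3)$-LCC. Proposition~\ref{prop:liftlcc} directly supplies the $(k, \Omega(k^{-2}), k^{-1})$-LTC guarantee.

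The heart of the argument is the dimension lower bound. By Proposition~\ref{prop:fam}, $\dim(\calf)$ equals the number of monomials $\prod_i x_i^{d_i}$ lying in $\calf$. A direct extension of the Lucas-theorem argument in the paper's two-variate example yields the following criterion: $\prod_i x_i^{d_i} \in \calf$ iff for every tuple $(r_i)$ with $r_i \leq_2 d_i$, the reduced sum $(\sum_i r_i) \modstar Q$ is at most $Q-2$; equivalently, no positive subset-sum of the $r_i$'s is a multiple of $Q-1 = k-1$. Specializing to the squarefree monomials $\prod_{i \in T} x_i$ indexed by $(k-2)$-element subsets $T \subseteq [m']$, the only admissible $r_i$ are $0$ or $1$, so every subset-sum lies in $\{0, 1, \ldots, k-2\}$ and none reaches $k-1$. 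Hence all $\binom{m'}{k-2}$ such monomials lie in $\calf$, giving $\dim(\calf) \geq \binom{m'}{k-2} = \Omega_k((\log N)^{k-2})$.

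The main technical obstacle is establishing the Lucas-theorem monomial criterion in general dimension $m'$: one expands an arbitrary monomial's restriction to a line, reduces modulo $T^Q - T$, and uses Lucas's theorem (applied coordinate-by-coordinate in characteristic $2$) to identify when the resulting multinomial coefficients vanish mod $2$. The derivation follows the paper's two-variate example essentially verbatim. Once the criterion is in place, the dimension bound follows from the elementary subset-sum observation above, and the LCC/LTC parameters are inherited from the parity-check structure of the base code.
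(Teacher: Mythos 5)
There is a genuine gap in your choice of base code. You work with $\calf_0 \subseteq \{\F_Q \to \F_Q\}$, lift it to get a code over $\F_Q$, and then ``unfold each $\F_Q$-symbol into $t$ bits'' to obtain a binary code. The problem is that this unfolding inflates the locality: the lifted code over $\F_Q$ has a corrector making $k-1$ queries to $\F_Q$-\emph{symbols}, but each such symbol is $t = \log_2 k$ bits, so the resulting binary code requires $(k-1)\log_2 k$ bit-queries, not $k-1$. The same factor-$\log_2 k$ degradation hits the relative distance (a single differing $\F_Q$-symbol may contribute only one differing bit out of $t$) and the LTC query complexity, so the stated parameters $(k-1, k^{-1}/3)$-LCC and $(k, \Omega(k^{-2}), k^{-1})$-LTC are not achieved by your construction. (One might hope to decode each of the $t$ bit-constraints separately with $k-1$ bit-queries, but adversarial noise can concentrate all its weight on a single bit position, defeating that decoder at the claimed noise level.)

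The fix, which is what the paper does, is to take the base code to be $\calf_1 = \{f : \F_Q \to \F_2 \mid \sum_{\alpha \in \F_Q} f(\alpha) = 0\}$, a code with range $\F_2$ rather than $\F_Q$. Then $\calL_1 = \Lift_m(\calf_1) \subseteq \{\F_Q^m \to \F_2\}$ is already a binary code of length $Q^m$, no unfolding needed, and Proposition~\ref{prop:liftlcc} directly yields $(Q-1, \tfrac{1}{3}Q^{-1}) = (k-1, k^{-1}/3)$-LCC and $(k, \Omega(k^{-2}), k^{-1})$-LTC parameters with bit-queries. Everything else in your proposal is correct and matches the paper: $\Deg(\calf_1) = \{0,\ldots,Q-2\}$ by the same vanishing-sum computation (the remark after Claim~\ref{clm:deg-one} notes this holds regardless of whether the range is $\F_q$ or $\F_Q$), the Lucas-theorem monomial criterion is exactly Proposition~\ref{prop:univ-lift-deg} specialized to this degree set, and the squarefree-monomial counting giving $\dim \geq \binom{m}{Q-2} = \Omega_k((\log N)^{k-2})$ is precisely Lemma~\ref{lem:dim-one}. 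So the argument you outline is sound once the base alphabet is corrected.
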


To contrast this with other known codes, essentially the only
symmetric binary code known in this regime is the Reed-Muller code,
which has dimension $\Omega((\log N)^{\log k})$ for locality $k$.
Thus our code has significantly greater dimension in this regime.
Our results are also asymptotically optimal for affine-invariant
codes, by a result of Ben-Sasson and Sudan~\cite{BS11} which shows
that any affine-invariant code with such local correctability or
testability must have dimension $(\log N)^{k + O(1)}$.

For local correctability, these codes asymptotically
match the performance of
best-known codes, which would be obtained by taking Generalized
Reed-Muller codes over a field of size roughly $k$ and then
composing it with some binary code. Our codes are simpler to
describe and the symmetry comes without any loss of parameters.
Furthermore, for really small constants, say $k=4$ or $k=8$, these
codes seem to be better than previously known locally correctible
codes.

Our next two codes consider relatively large locality (growing
with $N$). The advantage with these codes is that the redundancy
(the difference between the length and the dimension) grows exceedingly
slowly. The first of these two codes considers the setting where
the locality is $N^{\epsilon}$ for some positive (but tiny) $\epsilon$.
In such cases, we get codes of dimension $N - N^{1 - \epsilon'}$
where $\epsilon' > 0$ if $\epsilon > 0$. Thus the dimension is
extremely close to the length.

\begin{theorem}
\label{thm:two}
For every $\epsilon > 0$ and prime $p$,
there exists $\epsilon'>0$ such
that for infinitely many $N$,
there is a $p$-ary code of length $N$, 
dimension $N - N^{1 - \epsilon'}$
which is a $(N^{\epsilon},N^{-\epsilon}/3)$-LCC and
a $(N^{\epsilon},\Omega(N^{-2\epsilon}),N^{-\epsilon})$-LTC.
\end{theorem}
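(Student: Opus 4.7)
We exhibit the code as the lift of a parity-check code. Fix the prime $p$ and $\ve>0$, and choose integers $n,m,t$ with $Q=p^n$ and $t=\lceil \ve m\rceil$ so that $N := Q^m$ satisfies $Q^t = N^{\ve+o(1)}$. Let $\calf := \{f:\F_Q^t \to \F_p \mid \sum_{x \in \F_Q^t} f(x)=0\}$ be the parity-check code on $\F_Q^t$; it is $\F_p$-linear, affine-invariant, and of codimension $1$. Take $C := \Lift_m(\calf)$, a code of length $N$. By the general features of lifts recalled earlier in the introduction, $C$ automatically has relative distance at least $Q^{-t}=N^{-\ve}$ and is simultaneously a $(Q^t,Q^{-t}/3)$-LCC and a $(Q^t,\Omega(Q^{-2t}),Q^{-t})$-LTC, matching all locality and distance requirements. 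The only substantive task is therefore to prove the dimension bound $\dim(C)\geq N-N^{1-\ve'}$ for a suitable $\ve'>0$.

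Because $C$ is affine-invariant, the forthcoming monomial characterization (Proposition~\ref{prop:fam}) reduces the dimension computation to counting monomials $X^{\vec i}=X_1^{i_1}\cdots X_m^{i_m}$, $0\le i_j<Q$, that lie in $C$. The code $\calf$ consists exactly of those (reduced) polynomials whose expansion omits the top monomial $Y_1^{Q-1}\cdots Y_t^{Q-1}$, since $\sum_{x\in\F_Q}x^e=-1$ for $e=Q-1$ and $0$ for $0\le e<Q-1$. So $X^{\vec i}\in C$ iff, for every subspace parametrised by $\vec x(\vec T)=\sum_l \vec a_l T_l+\vec b$, the coefficient of $T_1^{Q-1}\cdots T_t^{Q-1}$ in the reduction (mod $T_l^Q-T_l$) of $\prod_j \bigl(\sum_l a_{l,j}T_l+b_j\bigr)^{i_j}$ vanishes identically in the parameters $\vec a,\vec b$. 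Expanding by the multinomial theorem and invoking Lucas's theorem for multinomial coefficients modulo $p$, this is equivalent to: \emph{no} carry-free base-$p$ decomposition $i_j = k_{0,j}+k_{1,j}+\cdots+k_{t,j}$ (one for each $j$) satisfies $\sum_j k_{l,j}\in (Q-1)\Z_{\geq 1}$ for every $l=1,\ldots,t$ simultaneously.

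It remains to count the ``bad'' tuples $\vec i$ admitting such a decomposition, and this is the main obstacle. The plan is as follows: each of the $mn$ base-$p$ digits of $\vec i$ is freely distributed among the $t+1$ bins by any carry-free splitting, but imposing the $t$ simultaneous divisibility-plus-positivity constraints on the bin sums $\sum_j k_{l,j}$ restricts the admissible splittings to a subset of density at most $Q^{-t'}$, for some $t'>0$ proportional to $t$; this density bound follows from a direct combinatorial estimate on the distribution of the digit sums modulo $Q-1$. Summing the resulting counts over splittings and applying inclusion-exclusion to avoid over-counting tuples $\vec i$ that admit multiple bad splittings gives $\#\{\text{bad }\vec i\}\leq N\cdot Q^{-t'} = N^{1-\ve'}$, with $\ve'>0$ scaling linearly with $\ve$, as required.
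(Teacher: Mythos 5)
Your construction is the same as the paper's in essence --- lift a parity-check code --- and the locality, distance and degree-set characterization via Proposition~\ref{prop:lift-deg} are all handled correctly. (The $t$-variate base code with $t=\lceil\ve m\rceil$ is a needless generalization over the paper's $t=1$, $m=\lceil 1/\ve\rceil$, which gives $Q=N^{1/m}\le N^\ve$ directly; also note $t=\lceil\ve m\rceil$ gives locality $Q^t\ge N^\ve$, i.e.\ the wrong direction of the inequality, so you'd want a floor.)

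The genuine gap is in the dimension count, which you yourself flag as ``the main obstacle'' and then dispatch in a single sentence. The assertion that the admissible splittings have ``density at most $Q^{-t'}$'' is not argued, and the follow-up --- ``summing the resulting counts over splittings and applying inclusion-exclusion \ldots gives $\#\{\text{bad }\vec i\}\le N\cdot Q^{-t'}$'' --- cannot be right as stated: a single $\vec i$ has exponentially many (in $mn$) carry-free decompositions into $t+1$ bins (already $(t+2)^{mn}$ pairs $(\vec i,\vec E)$ when $p=2$, which dwarfs $N=p^{mn}$), so neither a union bound over splittings nor an uncomputed inclusion-exclusion yields a bound of the form $N\cdot Q^{-t'}$. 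Furthermore, the parametrization ``choose integers $n,m,t$ \ldots so that $Q^t=N^{\ve+o(1)}$'' leaves open the regime $m\to\infty$ with $Q$ fixed, where the conclusion is simply false: there nearly every $\vec i$ admits a bad splitting, and $\dim C=o(N)$. What the argument actually needs is a \emph{sufficient} condition for $\vec i$ to be good, as in Lemma~\ref{lem:dim-two}: if all coordinates $i_j$ share a cyclic block of $b:=1+\lceil\log_p m\rceil$ consecutive zero $p$-ary digits, then after a cyclic $p$-shift (legitimate by Proposition~\ref{prop:q-shift}) every bin sum satisfies $\sum_j k_{l,j}\le\sum_j i_j< m\cdot Q/p^b< Q-1$, so no column can be a positive multiple of $Q-1$. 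Counting $\vec i$ with such a block --- partition the $n$ digit positions into $\lfloor n/b\rfloor$ windows, each all-zero across all $m$ coordinates with probability $p^{-mb}$ --- gives $\dim C\ge N-N^{1-\ve'}$ with $\ve'=\Omega\bigl(1/(mb\,p^{mb})\bigr)$. This requires $m$ and $t$ to be constants (depending on $\ve$) with $Q\to\infty$; your write-up does not make that constraint explicit, and the density-plus-inclusion-exclusion step needs to be replaced by something like the above.
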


The codes from Theorem~\ref{thm:two} are not new. These codes, and
in particular their exact dimension are well-known in the
literature in combinatorics~\cite{CH:proj,Smith:proj}.
Their locality was first
noted by Yekhanin~\cite{Yekhanin:pc}
who noticed in particular that they are LCCs. Our main contribution
is to note that these are (naturally) obtained from lifts.
In the process we get that these are affine-invariant codes and
so are also LTCs, a fact that was not known before. Finally, our
bounds while cruder, give better asymptotic sense of the redundancy
of these codes (and in particular note that the redundancy is sublinear
in the code length).

We remark that these codes have very poor distance and very poor
error-correcting capability. However, in the context of applications
such as constructions of PCPs (probabilistically checkable proofs,
see e.g.,~\cite{AroraBarak}) one does not need distance or error-correction capability
per se. All one seems to need is the local correction and decoding
capability. So the theorem above motivates the search for extremely
efficient PCPs, where the difference between the length of the PCP
and the length of the classical proof is sublinear, while allowing
for sublinear query complexity. Such a result, if at all possible,
would really be transformative in the use of PCPs as a positive
concept. We also note that these codes play a useful 
role in giving lower bounds on the size of Nikodym sets --- we will
elaborate on this shortly.

Next, we consider codes
of locality $\Omega(N)$, so linear in the length of the code.
This range of parameters was motivated by the recent result of
Barak et al.~\cite{BGHMRS} who used such codes (with additional
properties that we are not yet able to prove) to build ``small-set
expanders'' with many ``large eigenvalues''. We won't describe
the application here, but instead turn to the parameters they
sought. They wanted codes of length $N$ with locality $\epsilon N$
and dimension $N - \poly(\log N)$. The codes they used were
Reed-Muller codes. By exploring lifts we are able to suggest
some alternate codes. These codes do have slightly better dimension,
though unfortunately, the improvement is not asymptotically significant
(and certainly not close to any known limits).
Nevertheless we report the codes below.

\begin{theorem}
\label{thm:three}
For every $\epsilon > 0$ and for infinitely many $N$,
there is a binary code of length $N$, dimension $N - O_{\epsilon}((\log N)^{\log
1/\epsilon})$,
which is a $(\epsilon N,\frac13 (\epsilon N)^{-1})$-LCC and
a $(\epsilon N,\Omega((\epsilon N)^{-2}),(\epsilon N)^{-1})$-LTC.
\end{theorem}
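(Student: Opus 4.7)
The plan is to take $\mathcal C$ to be the $m$-dimensional lift of the binary parity-check code. Fix $s = \lceil \log_2(1/\epsilon) \rceil$, let $N = 2^m$ for any $m > s$, and set $t = m - s$ so that $2^t \le \epsilon N$. Let $\mathcal{P}_t \subseteq \{\F_2^t \to \F_2\}$ be the parity-check code, consisting of all functions whose sum of values is $0$; this is affine-invariant and has codimension $1$ with $\mathcal{P}_t^{\perp}$ spanned by the all-ones vector. Define
\[
\mathcal{C} \;=\; \Lift_m(\mathcal{P}_t) \;=\; \Bigl\{ f:\F_2^m \to \F_2 \;\Bigm|\; \sum_{x \in V} f(x) = 0 \text{ for every $t$-dim.\ affine subspace $V \subseteq \F_2^m$}\Bigr\}.
\]
This is a binary affine-invariant code of length $N$.

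The local correctability and local testability claims follow immediately from the general features of lifts recorded in the introduction (specialized to $Q=2$): $\mathcal C$ is automatically a $(2^t, 2^{-t}/3)$-LCC and a $(2^t, \Omega(2^{-2t}), 2^{-t})$-LTC. Since $2^t \le \epsilon N$, this is exactly the required $(\epsilon N, (\epsilon N)^{-1}/3)$-LCC and $(\epsilon N, \Omega((\epsilon N)^{-2}), (\epsilon N)^{-1})$-LTC guarantee (relaxing the query bound from $2^t$ up to $\epsilon N$ can only weaken the statement, and the correspondingly larger error-correction radius is an improvement).

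The substantive ingredient is the dimension lower bound, which I would establish via duality. Because $\mathcal{P}_t^\perp$ is one-dimensional, each $t$-dimensional affine subspace $V \subseteq \F_2^m$ contributes exactly one linear constraint $\langle \mathbf{1}_V, f \rangle = 0$ on $\mathcal{C}$. Thus $\mathcal{C}^\perp = \mathrm{span}\{\mathbf{1}_V : V \text{ is a $t$-dim.\ affine subspace of } \F_2^m\}$, and the codimension of $\mathcal C$ is at most the dimension of this span. Now any such $V$ is the common zero set of $s = m - t$ affine-linear forms $\ell_1, \ldots, \ell_s : \F_2^m \to \F_2$, and over $\F_2$ the indicator factors as $\mathbf{1}_V(x) = \prod_{i=1}^{s}(1 + \ell_i(x))$, which is a multilinear polynomial of total degree at most $s$. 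The space of multilinear polynomials in $m$ variables of degree at most $s$ has dimension $\sum_{j=0}^{s} \binom{m}{j} = O_\epsilon(m^s)$, so
\[
\dim \mathcal C \;\ge\; N - O_\epsilon(m^s) \;=\; N - O_\epsilon\bigl((\log N)^{\lceil \log_2(1/\epsilon)\rceil}\bigr),
\]
yielding the claimed bound (matching the stated exponent $\log(1/\epsilon)$ exactly when $1/\epsilon$ is a power of $2$, and only up to a $+O(1)$ in the exponent otherwise, which is absorbed into the $O_\epsilon$ for fixed $\epsilon$).

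There is no real obstacle once one guesses that the right code to lift is the parity-check code. The two ingredients needed are (i) the general LCC/LTC properties of lifts, quoted from the introduction, and (ii) the observation that over $\F_2$ the indicator of a codimension-$s$ affine subspace is a polynomial of degree at most $s$; everything else is routine bookkeeping in the monomial/dual basis.
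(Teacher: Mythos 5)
Your proof is correct, but it takes a genuinely different route from the paper's. You work over the prime field $\F_2$, lifting the parity-check code on $\F_2^{m-s}$, with $s = \lceil\log_2(1/\epsilon)\rceil$, all the way up to $\F_2^m$ --- an $s$-step lift. Over $\F_2$ this lift is precisely the binary Reed--Muller code $\RM(m-s-1,m)$ (this is exactly the Kaufman--Ron characterization quoted in the introduction, and it also falls out of your duality argument), and your codimension bound $\sum_{j=0}^{s}\binom{m}{j}$ is its codimension on the nose. The paper, by contrast, works over the extension field $\F_Q$ with $Q=2^{\lceil\log 1/\epsilon\rceil}$: it lifts the parity-check code on $\F_Q^{m-1}$ a single step up to $\F_Q^{m}$ and computes the dimension via the degree-set machinery of Section~\ref{sec:prelim}, characterizing which exponent vectors $\vec e$ are admissible by a condition on the bit-matrix $M(\vec e)$, arriving at a redundancy of $(m+1)^{\lceil\log 1/\epsilon\rceil}$. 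These are not the same code: the paper explicitly observes that the Reed--Muller code you construct sits strictly inside its lift, with the lift saving roughly a factor of $e^{\lceil\log 1/\epsilon\rceil}$ in redundancy, and exhibiting a lifted code that strictly beats Reed--Muller in this regime is part of the section's point. Both constructions yield the stated $N - O_\epsilon((\log N)^{\lceil\log 1/\epsilon\rceil})$ bound (and both, including the paper's own proof, match the stated exponent $\log 1/\epsilon$ only up to a ceiling, as you correctly note). What your route buys is simplicity and self-containedness: the observation that the dual is spanned by indicators of $t$-flats, each a polynomial of degree at most $s$ over $\F_2$, cleanly sidesteps the monomial-by-monomial degree-set analysis --- at the cost of landing exactly on the code the paper set out to improve.
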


We note that
Barak et al. also require the codes to be ``absolutely testable'',
a strong notion of testability that we do not achieve in this work. 
Indeed, it is unclear if the codes as described above will turn out
to be absolutely testable. In followup work to ours, Haramaty et
al.~\cite{HRS12}, do show that some codes constructed by the above
principle (but not all) are absolutely testable. The dimensions
of their codes are somewhere between those of Barak et al. and those
from the above theorem (so are still of no asymptotic significance).

Finally, we describe the most interesting choice of parameters.
Our final code has locality $N^\delta$ for arbitrarily small
$\delta > 0$, while achieving dimension $(1-\epsilon)N$ for arbitrarily
small $\epsilon > 0$. While the dimension of this code is smaller
than that of the codes of Theorem~\ref{thm:three}, it corrects a
constant positive fraction of errors.

\begin{theorem}
\label{thm:four}
For every $\epsilon,\delta > 0$ there exists $\tau > 0$
such that for infinitely many $N$,
there is a $q$-code of length $N$ over $\F_Q$, for $Q \approx N^{\delta}$,
of dimension $(1 - \epsilon)N$
which is a $(N^{\delta},\tau)$-LCC, for some $q \approx N^{\delta}$.
\end{theorem}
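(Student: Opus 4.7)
The plan is to instantiate the lifting construction with a high-rate Reed-Solomon code as the base. Given $\epsilon, \delta > 0$, I choose an integer $m \geq 1/\delta$, a small constant $\eta > 0$ to be fixed by the dimension analysis, and a prime power $Q$; set $N = Q^m$, $q = Q$, $d = (1-\eta)Q$, and define
\[
\calf := \Lift_m(\mathrm{RS}_d) \subseteq \{\F_Q^m \to \F_Q\},
\]
where $\mathrm{RS}_d$ is the Reed-Solomon code of univariate polynomials $\F_Q \to \F_Q$ of degree at most $d$. Then $Q = N^{1/m} \leq N^\delta$ and any codeword of $\calf$ restricts on any affine line to a polynomial of degree at most $d$, as required.

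For local correction I use the natural algorithm: to recover $g(x)$ from a corrupted oracle $f$ that is $\tau$-close to some $g \in \calf$, sample a uniformly random line $\ell$ through $x$, query $f$ on all $Q$ points of $\ell$, run Berlekamp--Welch to find the unique polynomial $p(T)$ of degree at most $d$ within Hamming distance $(Q - d - 1)/2 = (\eta Q - 1)/2$ of the received vector, and output the value of $p$ at the parameter corresponding to $x$. Correctness hinges only on $\ell$ carrying fewer than $\eta Q / 2$ errors. Each non-$x$ point of $\ell$ is uniform in $\F_Q^m \setminus \{x\}$ when $\ell$ is chosen uniformly, so the expected number of errors on $\ell$ is $(1+o(1))\tau Q$; Markov's inequality gives failure probability at most $(2+o(1))\tau/\eta$, which is $\leq 1/3$ for $\tau \leq \eta/6$. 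The query complexity is $Q \leq N^\delta$.

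The technical heart is the dimension bound. By Proposition~\ref{prop:fam}, $\dim \calf$ equals the number of monomials $\prod_j X_j^{i_j}$ lying in $\calf$. Expanding the restriction of $\prod_j X_j^{i_j}$ to a line and reducing modulo $T^Q - T$, exactly as in the two-variable example from the introduction, Lucas's theorem in base $p = \mathrm{char}(\F_Q)$ shows that $\prod_j X_j^{i_j} \in \calf$ iff every tuple $(r_1, \ldots, r_m)$ with $r_j \leq_p i_j$ digitwise in base $p$ satisfies $(r_1 + \cdots + r_m) \modstar Q \leq d$. Let $S$ denote this set of good tuples. It suffices to show that for $\eta$ small in terms of $\epsilon$ and $m$, the complementary bad set has density at most $\epsilon$ in $\{0, \ldots, Q-1\}^m$, whence $\dim \calf = |S| \geq (1-\epsilon) N$.

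The main obstacle is this density estimate. A naive union bound over subshadows, weighted by the probabilities $\prod_j p^{-w_p(r_j)}$ that a random $(i_1, \ldots, i_m)$ contains a given witness, is too weak, since witnesses with full base-$p$ support are not suppressed. I would instead partition bad tuples by the ``wrap count'' of the witnessing subshadow sum modulo $Q-1$, and for each wrap count use a careful combinatorial argument (exploiting the downward closure of $S$ under the subshadow partial order together with the near-uniformity of sums of Lucas-subshadows modulo $Q-1$) to bound the contribution by $O_m(\eta) \cdot Q^m$. With $\eta = \Theta(\epsilon/m)$ this yields $|S| \geq (1-\epsilon) Q^m$ and hence dimension $(1-\epsilon) N$. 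Combining with the correction analysis produces $\tau = \Theta(\epsilon/m)$, a positive constant depending only on $\epsilon$ and $\delta$, as claimed.
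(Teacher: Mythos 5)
Your plan is essentially the paper's proof: lift a Reed--Solomon code of degree $d = (1-\eta)Q$, decode locally by restricting to a random line through the query point and running a Reed--Solomon decoder, control the number of errors on a random line via Markov, and count the monomials in the lift via the subshadow/Lucas criterion. The construction, the decoder, the Markov bound (including the constant $\tau = \eta/6$), and the degree-set characterization all match the paper.

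The genuine gap is the density estimate, exactly where you flag the ``main obstacle,'' and the quantitative claim you assert there is false. Already for $m=2$, $p=2$, take $d = Q-2$, so $\eta = 2/Q$. A pair $(i_1,i_2)$ is bad iff some $r_1 \le_2 i_1$, $r_2 \le_2 i_2$ satisfy $(r_1+r_2)\modstar Q = Q-1$. Since $Q-1$ is all-ones in binary, $r_1+r_2 = Q-1$ forces no carries, hence $r_2 = \overline{r_1}$ (bitwise complement), and such a pair of shadows exists precisely when $i_1 \vee i_2 = Q-1$. This gives $3^s$ bad pairs out of $4^s$ (with $Q = 2^s$), density $(3/4)^s = Q^{-\log_2(4/3)} \approx Q^{-0.42}$, which dwarfs $O(\eta) = O(Q^{-1})$. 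So the bad density is $\Theta(\eta^{\log_2(4/3)})$, not $O_m(\eta)$, the ``near-uniformity of sums of Lucas-subshadows modulo $Q-1$'' is not a valid heuristic, and the dependence $\eta = \Theta(\epsilon/m)$ is unobtainable: the true dependence is polynomial in $\epsilon$ with an exponent strictly larger than $1$ that degrades with $m$. The paper never attempts this refined count. Instead it proves a much cruder sufficient condition: with $b = 1+\lceil\log m\rceil$ and $d = (1-2^{-c})Q$, if some window of $b$ consecutive bit positions among the top $c$ bits is all-zero in every coordinate of $\vec d$, then $\vec d \in \Deg(\calL_4)$ (cyclically shift the zero window to the top $b$ positions, which is licit by $q$-shift closure; then each shifted $\overline e_i < 2^{s-b}$, so $\sum_i \overline e_i < m\,2^{s-b} \le 2^{s-1}$ and no wrap occurs). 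Partitioning the top $c$ bits into $c/b$ disjoint blocks and choosing $c = \lceil b\,2^{bm}\log(1/\epsilon)\rceil$ gives good density at least $1-\epsilon$, whence $\eta = 2^{-c}$ and $\tau = \eta/6$: positive, as the theorem requires, but super-exponentially small in $m$ rather than $\Theta(\epsilon/m)$. To complete your proof you must either establish a correct density exponent or fall back to a simple sufficient condition of this type.
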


Till 2010, no codes achieving such a range of parameters were known.
In particular no code was known that achieved dimension greater
than $N/2$ while achieving $o(N)$ locality to correct constant
fraction of errors. In 2010, Kopparty et al.~\cite{KSY} introduced
what they called the ``multiplicity codes'' which manage to
overcome the rate $1/2$ barrier. Other than their codes, no other
constructions were known that achieved the parameters of
Theorem~\ref{thm:four} and our construction provides the first
alternate. We remark that while qualitatively our theorem matches
theirs, the behavior of $\tau$ as a function of $\epsilon$ and
$\delta$ is much worse in our construction. Nevertheless for
concrete values of $N$, $\epsilon$ and $\delta$ our
construction actually seems to perform quite well.
Also, whereas in the basic codes of \cite{KSY} are over larger
alphabets than $N$, our codes are naturally over much smaller alphabets.
(Of course, one can always use concatenation to reduce alphabet
sizes, but such operations do result in a loss in concrete settings
of parameters.)

Theorems~\ref{thm:one}-\ref{thm:four} are proved in
Section~\ref{sec:codes}.
While each of the codes above may be of interest on their own,
the underlying phenomenon, of constructing codes with interesting
parameters by lifting shorter codes is an important one. Given
our belief that lifting is an important operation that deserves
study, we also do some systematic analysis of lifts. In particular
in this work we show that lifting of a base code essentially preserves
distance. This preservation is not exact and we give examples proving this
fact.

\paragraph{Bounds on the size of Nikodym sets.}

One of the applications of our results is to bounding, from below,
the size of ``Nikodym sets'' over finite fields (of small characteristic).
We define this concept before describing our results.

A set $N \subseteq \F_q^m$ is said to be a {\em Nikodym set} if
every point $x$ has a line passing through it such that all points of the line,
except possibly the point $x$ itself, are elements of $N$. More precisely,
$N$ is a Nikodym set if for every $x \in \F_q^m$ there exists
$y \in \F_q^m \setminus \{\vec 0\}$ such that 
$\{x + t y | t \in \F^*_q\} \subseteq N$.

Nikodym sets are closely related to ``Kakeya sets'' --- the latter
contain a line in every direction, while the former contain almost
all of a line through every point.
A lower bound for Kakeya sets was proved by Dvir \cite{Dvir} using the polynomial
method and further improved by using ``method of multiplicities '' by 
Saraf and Sudan \cite{SS08} and Dvir et al.~\cite{DKSS}.
Kakeya sets have seen applications connecting its study to the
study of randomness extractors,
esp. \cite{DvirShpilka07,DvirWigderson}.
Arguably Nikodym sets are about as natural in this connection as
Kakeya sets. 

Previous lower bounds on Kakeya sets
were typically also applicable to Nikodym sets and led to bounds
of the form $|N| \geq (1 - o(1)) q^m/2^m$ where the $o(1)$ term
goes to zero as $q \to \infty$\footnote{In the $m=2$ case, better bounds
are known for Nikodym sets~\cite{FLS, Li}.}. In particular previous lower bounds
failed to separate the growth of Nikodym sets from those of Kakeya
sets. In this work we present a simple connection (see
Proposition~\ref{prop:nikodym})
that shows that existence of (high-rate) affine-invariant codes
that are lifts of non-trivial univariate codes yield (large) lower bounds on
the size of Nikodym sets. Using this connection we significantly
improve the known lower bound on the size of Nikodym sets over fields of
constant characteristic. 

\begin{theorem}
\label{thm:nikodym}
For every prime $p$, and every integer $m$, there exists
$\epsilon = \epsilon(p,m) > 0$ such that for every finite
field $\F_q$ of characteristic $p$, if $N \subseteq \F_q^m$
is a Nikodym set, then $|N| \geq q^m - q^{(1-\epsilon)m}$.
In particular if $q \to \infty$, then $|N| \geq (1 - o(1))\cdot q^m$.
\end{theorem}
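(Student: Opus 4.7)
The plan is to apply the ``Proposition~\ref{prop:nikodym}''-style reduction: exhibit an affine-invariant code $\calf \subseteq \{\F_q^m \to \F_q\}$, obtained as the $m$-dimensional lift of a univariate code $\calL$, with the property that any codeword vanishing on a Nikodym set must vanish identically. This forces $|N| \geq \dim(\calf)$, reducing the theorem to a lower bound on the dimension of a suitable lifted code.

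The base code I would take is the univariate parity-check code $\calL = \{g : \F_q \to \F_q ~|~ \deg g \leq q-2\}$, i.e.\ the Reed-Solomon code of degree $q-2$; its defining property is that any codeword with $q-1$ zeros is identically zero. Let $\calf = \Lift_m(\calL)$. Fix $f \in \calf$ with $f|_N \equiv 0$, and pick an arbitrary $x \in \F_q^m$. By the Nikodym condition there is a direction $y \neq 0$ with $\ell := \{x + ty : t \in \F_q^*\} \subseteq N$. The restriction $f|_{\ell \cup \{x\}}$ is a univariate polynomial of degree at most $q-2$ (by the definition of the lift), and it vanishes at the $q-1$ points of $\ell$, so it is identically zero; in particular $f(x) = 0$. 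Thus $f \equiv 0$, so the evaluation map $f \mapsto f|_N$ is injective on $\calf$, giving the key inequality $|N| \geq \dim(\calf)$.

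What remains is to show that $\dim(\calf) \geq q^m - q^{(1-\epsilon)m}$ for some $\epsilon = \epsilon(p,m) > 0$ whenever $q$ has characteristic $p$. By the monomial characterization of affine-invariant codes (Proposition~\ref{prop:fam}), this amounts to counting exponent vectors $\vec e = (e_1,\ldots,e_m) \in \{0,\ldots,q-1\}^m$ such that the monomial $X_1^{e_1}\cdots X_m^{e_m}$ lies in $\calf$. A direct calculation (in the style of the example in the introduction): substituting $x_i = \alpha_i T + \beta_i$ and expanding, the exponent $\vec e$ is ``good'' precisely when every $(r_1,\ldots,r_m)$ with $r_i \leq_p e_i$ in each base-$p$ digit (by Lucas's theorem) satisfies $(r_1 + \cdots + r_m) \modstar q \leq q-2$. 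The bad exponents are those where some such $(r_1,\ldots,r_m)$ forces a carry to $q-1$; a direct digit-by-digit count shows the bad set has size at most $q^{(1-\epsilon)m}$ for $\epsilon$ depending only on $p$ and $m$, since small characteristic provides enough ``room'' between consecutive achievable digit-sum residues.

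The main obstacle is the last step: one must argue the dimension bound holds for \emph{every} $q$ of characteristic $p$, not merely for the infinitely-many-$N$ regime of Theorem~\ref{thm:two}. This boils down to a careful combinatorial estimate of the number of $\vec e \in \{0,\ldots,q-1\}^m$ whose base-$p$ digit configurations admit, via a sum-of-digit-subsets of total at most $\log_p q$ digits, a configuration summing (modulo $q-1$, with the $\modstar$ convention) to exactly $q-1$. Everything else -- the Nikodym-to-injectivity argument and the passage from $\dim(\calf)$ to $|N|$ -- is essentially immediate from the definitions.
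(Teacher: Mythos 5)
Your proposal is correct and follows essentially the same route as the paper: the inequality $|N| \geq \dim\Lift_m(\calf)$ for any nontrivial univariate affine-invariant $\calf$ is exactly Proposition~\ref{prop:nikodym}, and your choice of the degree-$(q-2)$ Reed--Solomon code with the digit-block counting argument is precisely the construction and dimension bound underlying Theorem~\ref{thm:two} (i.e.\ Lemma~\ref{lem:dim-two}, with $\calf_2 = \{f \mid \sum_\alpha f(\alpha)=0\}$, whose degree set is $\{0,\ldots,q-2\}$). You are right to flag that one should quote the dimension estimate itself rather than the ``infinitely many $N$'' statement of Theorem~\ref{thm:two}, since the estimate in Lemma~\ref{lem:dim-two} does hold for every $q = p^\ell$ and every fixed $m$, which is what Theorem~\ref{thm:nikodym} requires.
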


Thus whereas previous lower bounds on the size of Nikodym sets allowed
for the possibility that the density of the Nikodym sets vanishes as
$m$ grows,
ours show that Nikodym sets occupy almost all the space.
One way to view our results is that they abstract the polynomial 
method in a more general way, and thus lead to stronger lower bounds
(in some cases).

\paragraph{Previous work on affine-invariance.}
The study of invariance, and in particular affine-invariance,
in property testing was initiated by Kaufman and Sudan~\cite{KS08-ECCC}
and there have been many subsequent works
\cite{BS11,GKS08,GKS09,GoldKauf,KaufWig,KaufLub,BGMSS11-ECCC,BMSS,KL11,BRS,GS:sums}.
Most of the works, with the exceptions of \cite{KaufWig,KaufLub}, study
the broad class with the aim of characterizing all the testable properties.
The exceptions, Kaufman and Wigderson~\cite{KaufWig} and Kaufman and
Lubotzky~\cite{KaufLub}, are
the few that attempt to find new codes using invariance. While the
performance of their codes is very good,
unfortunately they do not
seem to lead to local testability and the performance is too good to be locally
decodable (or locally correctible). Our work seems to be the first in
this context to
explore new codes that do guarantee some locality properties.

A second, more technical, point of departure is that our work refocusses
attention on invariance of ``multivariate properties''. Since the
work of \cite{KS08-ECCC} most subsequent works focussed on
univariate properties. While this study seemed to be without loss
of generality, for the purpose of constructions it seems necessary
to go back to the multivariate setting.
One specific
contribution in this direction is that we show that invariance under
general affine-transformations and under affine-permutations lead to
the same set of properties (see Section~\ref{app:perm}).

\paragraph{Organization.}
In Section~\ref{sec:prelim} we present some of the background
material on affine-invariant codes and present some extensions
in the multivariate setting. In Section~\ref{sec:codes} we describe
our codes and analyze them. 
In Section~\ref{sec:nikodym} we describe our application to lower
bounding Nikodym sets.
In Section~\ref{sec:distance} we
describe how distance of lifted codes behave. Some of the technical
proofs are deferred to the appendix.

\paragraph{Version.} A previous version of this paper appeared,
as \cite{GS12}. The main difference in the results is the addition,
in this version, of lower bounds on the size of Nikodym sets
(Theorem~\ref{thm:nikodym}).

\section{Preliminaries}
\label{sec:prelim}

In this section we describe some basic aspects of affine-invariant
properties, specifically their degree sets.
We mention in particular
the fact that the size of degree sets determines the dimension of a given
affine-invariant code. Finally we conclude by relating the degree
set of a base code to the degree set of a lifted code. In later
sections we will use this relationship to lower bound the size of the
degree set of lifted codes, and thus lower bound their dimension.
We note that the results of this section are described for general
$q$ (and not for the special case of $q=2$).

For a function $f:\F_Q^m \to \F_q$, we associate with it the
unique polynomial in $\F_Q[x_1,\ldots,x_m]$ of degree at most
$Q-1$ in each variable that evaluates to $f$. (We abuse notation
by using the same notation to refer to a function and the associated
polynomial.) For $\vec{d} = \langle d_1,\ldots,d_m \rangle$
and $\vec x = \langle x_1,\ldots,x_m \rangle$, let $\vec x^{\vec d}$
denote the monomial $\prod_{i=1}^m x_i^{d_i}$.
For a function $f = \sum_{\vec d} c_{\vec d} \vec x^{\vec d}$, let
its support, denoted $\supp(f)$, be the set of degrees with
non-zero coefficients in $f$, i.e., $\supp(f) = \{\vec d \mid
c_{\vec d} \ne 0\}$.

\begin{definition}[Degree set]
\label{def:degree-set}
For a code $\calf \subseteq \{\F_Q^m \to \F_q\}$, its
degree set, denoted $\Deg(\calf)$, is the
set $\Deg(\calf) = \cup_{f \in \calf} \supp(f)$.
For a set $D \subseteq \{0,\ldots,Q-1\}^m$, let its
code, denoted $\Fam(\calf)$ be the set
$\Fam(\calf) = \{f:\F_Q^m \to \F_q \mid \supp(f) \subseteq D\}$.
\end{definition}

For an affine-invariant code, its degree set uniquely determines
the code and in particular the following proposition holds.

\begin{proposition}\label{prop:fam}
For linear affine-invariant codes $\calf \subseteq
\{\F_Q^m \to \F_q\}$, we have $\Fam(\Deg(\calf)) = \calf$.
\end{proposition}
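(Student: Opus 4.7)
The inclusion $\calf \subseteq \Fam(\Deg(\calf))$ is immediate from the definitions, since every $f \in \calf$ has $\supp(f) \subseteq \Deg(\calf)$. The content lies in the reverse inclusion.

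The plan is to exploit the diagonal subgroup of the affine group, together with the trace $\Tr = \Tr_{\F_Q/\F_q}$, to extract from $\calf$ every $\F_q$-valued function supported on any Frobenius orbit that meets $\Deg(\calf)$. First I observe that $\F_q$-valuedness of codewords, via $f^q = f$, forces coefficients to satisfy $c_{q\vec d}(f) = c_{\vec d}(f)^q$; hence $\Deg(\calf)$ is closed under the Frobenius action $\vec d \mapsto q\vec d \pmod{Q-1}$, and any $g \in \Fam(\Deg(\calf))$ decomposes uniquely as an $\F_q$-sum of $\F_q$-valued functions supported on single Frobenius orbits inside $\Deg(\calf)$. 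It therefore suffices to place every such orbit-wise piece into $\calf$. Fix $\vec d \in \Deg(\calf)$ and pick $f \in \calf$ with $c_{\vec d}(f) \neq 0$. For each $\vec a \in (\F_Q^*)^m$ the diagonal permutation $A_{\vec a} : \vec x \mapsto (a_1 x_1, \ldots, a_m x_m)$ yields $f \circ A_{\vec a} = \sum_{\vec d'} c_{\vec d'} \vec a^{\vec d'} \vec x^{\vec d'} \in \calf$, and I form the trace-weighted character sum
\[
h_\alpha(\vec x) \;:=\; \sum_{\vec a \in (\F_Q^*)^m} \Tr\bigl(\alpha\, \vec a^{-\vec d}\bigr)\cdot (f \circ A_{\vec a})(\vec x),\qquad \alpha \in \F_Q.
\]
Since each weight $\Tr(\alpha \vec a^{-\vec d})$ lies in $\F_q$, $h_\alpha$ is an honest $\F_q$-linear combination of codewords, so $h_\alpha \in \calf$. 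Applying the identity $\sum_{a \in \F_Q^*} a^e = -1$ if $(Q-1)\mid e$ and $0$ otherwise, coordinate by coordinate, together with $c_{q^i \vec d} = c_{\vec d}^{q^i}$, the sum collapses to
\[
h_\alpha(\vec x) \;=\; (-1)^m \sum_{i=0}^{n-1} (\alpha c_{\vec d})^{q^i}\, \vec x^{q^i \vec d}.
\]
As $\alpha$ varies over $\F_Q$, so does $\alpha c_{\vec d}$, and the family $\{h_\alpha\}$ sweeps out the entire $\F_q$-space of $\F_q$-valued functions supported on the Frobenius orbit of $\vec d$, completing the reverse containment.

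The main obstacle is that $\calf$ is only $\F_q$-linear, so the naive character sum $\sum_{\vec a} \vec a^{-\vec d}(f \circ A_{\vec a})$ with $\F_Q$-valued weights is unavailable; interposing the trace forces the weights into $\F_q$ and resolves this. A minor bookkeeping point is that when some coordinate of $\vec d$ is $0$ or $Q-1$, the character-sum identity fires for two values of the corresponding coordinate of $\vec d'$, so $h_\alpha$ a priori has a slightly enlarged support; this is handled by the $\modstar$ convention and, if needed, supplemented with affine-shift invariance $f(\vec x) \mapsto f(\vec x + \vec b)$ to refine the orbits further.
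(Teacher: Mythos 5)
Your approach is essentially the one the paper takes (routed through the appendix Lemma~\ref{lem:monomial}, which is where the diagonal multiplicative character sum lives), and your trace-weighted sum $h_\alpha$ is a clean packaging of it. But there is a genuine gap in the step you dismiss as ``minor bookkeeping.''

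When some coordinate of $\vec d$ lies in $\{0,Q-1\}$, the inner character sum $\prod_j \sum_{a_j \in \F_Q^*} a_j^{\,d'_j - q^i d_j}$ fires for \emph{both} $d'_j = 0$ and $d'_j = Q-1$, so $h_\alpha$ is supported not on the Frobenius orbit of $\vec d$ but on the larger ``weak'' orbit (degrees congruent mod $Q-1$ in each coordinate). Your displayed collapse $h_\alpha = (-1)^m \sum_i (\alpha c_{\vec d})^{q^i} \vec x^{q^i \vec d}$ is therefore wrong in this case: it silently omits the contributions $c_{\vec d'} \vec x^{\vec d'}$ from those extra $\vec d'$. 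Worse, the diagonal sum literally cannot distinguish two degrees that differ only by swapping $0 \leftrightarrow Q-1$ in some coordinate --- for example with $Q=4, q=2, m=2$, the sums you form for $\vec d = (0,1)$ and $\vec d = (3,1)$ are identical, so if both $c_{(0,1)}$ and $c_{(3,1)}$ are nonzero you land in $\calf$ a combination of two basic functions, not a single orbit piece. So this is not something $\modstar$ conventions fix; you genuinely need the additive/translation structure to separate these orbits, and that is not a one-line supplement. In the paper this is exactly the content of Lemma~\ref{lem:zeroQ-1} (and Lemma~\ref{lem:weak-equiv-split}), which is a triple induction and is arguably the hardest part of the whole appendix. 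Your waving at ``affine-shift invariance $f(\vec x) \mapsto f(\vec x + \vec b)$, if needed'' names the right tool but does not carry out the argument.

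A secondary quibble: even granting the collapse, the assertion that $\{h_\alpha\}_{\alpha \in \F_Q}$ sweeps the full $\F_q$-space of orbit-supported $\F_q$-valued functions deserves one more line --- it follows because $h_\alpha = (-1)^m \Tr(\alpha c_{\vec d}\, \vec x^{\vec d})$ as a function and $\alpha \mapsto \alpha c_{\vec d}$ is a bijection of $\F_Q$ --- which is fine but worth stating; the paper handles the analogous point via a subfield argument ($S = \F_{q^b}$) inside Lemma~\ref{lem:monomial}. In short: the generic-orbit case of your proposal is correct and matches the paper's mechanism, but the $\{0,Q-1\}$ case is a real and substantial gap, not bookkeeping.
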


We prove the proposition below. The proof uses some basic
facts about linear affine-invariant codes that are proved
in Section~\ref{app:perm}. (We note that this would be the logical
place to read/verify the contents.)

\begin{proof}
Trivially $\calf \subseteq \Fam(\Deg(\calf))$.
For the other direction,
consider $f \in \Fam(\Deg(\calf))$.
Express $f = \Tr \circ g$ (see, e.g., Proof of Lemma~\ref{lem:monomial}),
where $g \in \F_Q[\vec x]$ is chosen among all such to
be minimal in its support. We have $\supp(g) \subseteq \supp(f) \subseteq
\Deg(\calf)$. Suppose $g = \sum_{\vec d \in \Deg(\calf)}
c_{\vec d} \vec x^{\vec d}$, then by Lemma~\ref{lem:monomial}
we have $\Tr(c_{\vec d} \vec x^{\vec d}) \in \calf$ for every
$\vec d$. Now by linearity of $\calf$ it follows that
$\sum_{\vec d} \Tr(c_{\vec d} \vec x^{\vec d}) \in \calf$,
but by the linearity of the Trace function we have that this function
is $f$.
\end{proof}

Our reason to study the degree sets is that the size of
the degree set gives the dimension of a code exactly.

\begin{proposition}
For a linear affine-invariant code $\calf \subseteq
\{\F_Q^m \to \F_q\}$, we have the dimension of
$\calf$ equals $|\Deg(\calf)|$.
\end{proposition}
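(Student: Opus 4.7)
My plan is to leverage Proposition~\ref{prop:fam}, which gives $\calf = \Fam(\Deg(\calf))$, and reduce the claim to computing $\dim_{\F_q}\Fam(D)$ for $D := \Deg(\calf)$. The main structural ingredient I would use is that $D$ is closed under the ``$q$-shift'' map $\phi \colon \vec{d} \mapsto q\vec{d}$, where multiplication is componentwise and each entry is reduced modulo $Q-1$ (in the $\modstar$ sense). This follows because every $f \in \calf$ is $\F_q$-valued, so $f^q = f$ as polynomial functions in $\F_Q[\vec{x}]/(x_i^Q - x_i)$; writing $f = \sum_{\vec{d}} c_{\vec{d}}\,\vec{x}^{\vec{d}}$ and expanding, the identity $f^q = f$ forces $c_{\phi(\vec{d})} = c_{\vec{d}}^q$, which in turn makes $\supp(f)$ and hence $D$ invariant under $\phi$.

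Once $\phi$-invariance is established, I would partition $D$ into its $\phi$-orbits $O_1,\ldots,O_r$ with sizes $k_1,\ldots,k_r$ (each $k_i$ dividing $n$, since $\phi^n$ is the identity). Because distinct reduced monomials are linearly independent in the quotient ring, this yields the direct-sum decomposition $\Fam(D) = \bigoplus_{i=1}^{r} \Fam(O_i)$ as $\F_q$-vector spaces, reducing the computation to one orbit at a time. For a single orbit $O_i = \{\vec{d}_i, \phi(\vec{d}_i), \ldots, \phi^{k_i-1}(\vec{d}_i)\}$, the relation $c_{\phi(\vec{d})} = c_{\vec{d}}^q$ forces $c_{\phi^l(\vec{d}_i)} = c_{\vec{d}_i}^{q^l}$ for each $l$, while the cyclicity $\phi^{k_i}(\vec{d}_i) = \vec{d}_i$ further imposes $c_{\vec{d}_i} \in \F_{q^{k_i}}$. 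Thus $\Fam(O_i)$ is parameterized freely by an element of $\F_{q^{k_i}}$, giving $\dim_{\F_q} \Fam(O_i) = k_i$.

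Summing over orbits yields $\dim_{\F_q}\calf = \sum_{i=1}^{r} k_i = |D| = |\Deg(\calf)|$, proving the proposition. The step I expect to require the most care is the polynomial manipulation computing $f^q$ modulo $(x_i^Q - x_i)$ --- one must correctly handle the $\modstar$ reduction (including the case where some $d_i = 0$) --- but this is mechanical rather than genuinely difficult. Everything else (linear independence of distinct reduced monomials, the orbit decomposition, and the $\F_{q^{k_i}}$ parameterization of an orbit) is standard once the Frobenius closure of $D$ is in hand.
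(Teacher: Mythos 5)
Your proof is correct and follows essentially the same route as the paper: both arguments partition $\Deg(\calf)$ into orbits under the $q$-shift map, use $f^q = f$ to force $c_{q\vec d} = c_{\vec d}^q$ and hence $c_{\vec d} \in \F_{q^{k_i}}$ on an orbit of size $k_i$, and then sum orbit sizes to get $|\Deg(\calf)|$. Your version is marginally more careful in explicitly invoking $\calf = \Fam(\Deg(\calf))$ and the direct-sum decomposition across orbits, both of which the paper leaves implicit, but these are presentational rather than substantive differences.
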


\begin{proof}
We generalize the proof of~\cite[Lemma~2.14]{BGMSS11-ECCC} to
the multivariate setting.
For degree $\vec d \in \{0,\ldots,Q-1\}^m$, define
$S(\vec d) = \{q^i \vec d \mid i \in \Z\}$. For every
$\vec d, \vec e$, either $S(\vec d) = S(\vec e)$ or $S(\vec d) \cap S(\vec e) = \varnothing$.
Write $f \in \calf$ as $f(\vec x) = \sum_{\vec d \in \Deg(\calf)} f_{\vec d} \vec x^{\vec d}$.
Since $f^q = f$, it follows that $f_{q \cdot \vec d} = f_{\vec d}^q$ for all $\vec d$
and hence $f_{\vec d} \in \F_{q^{|S(\vec q)|}}$.
From each $S(\vec d)$ pick a representative, and let $S$ be the set of these
representatives, so that $\Deg(\calf) = \cup_{\vec d \in S} S(\vec d)$ is a partition.
Then we may write
$f(\vec x) = \sum_{\vec d \in S} \Tr_{\F_{q^{|S(\vec d)|}}, \F_q}(f_{\vec d} \vec x^{\vec d})$.
For each $\vec d \in S$ there are $q^{|S(\vec d)|}$ choices for $f_{\vec d}$, so the total
number of choices for $f$ is
$\prod_{\vec d \in S} q^{|S(\vec d)|}
= q^{\sum_{\vec d \in S} |S(\vec d)|}
= q^{|\Deg(\calf)|}$
\end{proof}

Next we attempt to describe how the degree set of a lifted
code can be determined from the degree set of a base code.
We start by mentioning a simple property of degree sets that
will be quite useful in our analysis.

Let $\modstar{Q}$ denote the operation that maps non-negative
integers to the set $\{0,\ldots,Q-1\}$ as given by
$a \modstar Q = 0$ if $a = 0$ and $a \modstar Q = b \in \{1,\ldots,
Q-1\}$ if $a \ne 0$ and $a = b \pmod{Q-1}$. (Note that if $a \modstar
Q = b$, then $x^a = x^b \pmod{x^Q - x}$.)

For $Q = q^n$ and 
$\vec e, \vec d \in \{0,\ldots,Q-1\}^n$, we say that
$\vec e$ is a $q$-shift of $\vec d$ if there exists
$j$ such that for every $i$, we have $e_i = q^j \cdot d_i \modstar{Q}$.
Note that $\vec e$ is a $q$-shift of $\vec d$ if and only if
$\vec d$ is a $q$-shift of $\vec e$.

\begin{proposition}
\label{prop:q-shift}
Let $\calf \subseteq \{\F_Q^m \to \F_q\}$ be a linear affine-invariant
code and let $D = \Deg(\calf)$ be its degree set.
Then $D$ is $q$-shift closed, i.e., if $\vec d \in D$ and
$\vec e$ is a $q$-shift of $\vec d$ then $\vec e \in D$.
\end{proposition}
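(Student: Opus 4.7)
The key observation is that every $f \in \calf$ satisfies the pointwise identity $f(\vec x)^q = f(\vec x)$, because $f$ takes values in $\F_q$ and every $y \in \F_q$ satisfies $y^q = y$. The plan is to use this identity to match coefficients and deduce that $\supp(f)$, and hence $\Deg(\calf) = \bigcup_{f \in \calf} \supp(f)$, is closed under coordinatewise multiplication by $q$ under $\modstar Q$.

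First I would write $f$ in its canonical polynomial representation $f = \sum_{\vec d' \in \{0,\ldots,Q-1\}^m} c_{\vec d'} \vec x^{\vec d'}$. Since the Frobenius map $a \mapsto a^q$ is additive over $\F_Q$, raising to the $q$-th power term-by-term yields $f^q = \sum_{\vec d'} c_{\vec d'}^q \vec x^{q \vec d'}$. To bring this back to canonical form (degree at most $Q-1$ in each variable), I apply the relations $x_i^Q \equiv x_i$, which replace the exponent $q d'_i$ by $q d'_i \modstar Q$. Define $\sigma_q : \{0,\ldots,Q-1\}^m \to \{0,\ldots,Q-1\}^m$ by $(\sigma_q(\vec d'))_i = q d'_i \modstar Q$. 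Since $q^n \equiv 1 \pmod{Q-1}$, the integer $q$ is a unit modulo $Q-1$, so $\sigma_q$ is a bijection (fixing the coordinates that equal $0$ and permuting the rest). Hence the canonical form of $f^q$ is $\sum_{\vec d'} c_{\vec d'}^q \vec x^{\sigma_q(\vec d')}$.

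Next I would invoke uniqueness of the canonical representation: comparing the canonical forms of $f$ and $f^q$ using $f = f^q$, I conclude $c_{\sigma_q(\vec d')} = c_{\vec d'}^q$ for every $\vec d'$. Therefore $\vec d' \in \supp(f)$ iff $\sigma_q(\vec d') \in \supp(f)$, so $\supp(f)$ is closed under $\sigma_q$, and by iteration under every power $\sigma_q^j$ — which is exactly the definition of closure under $q$-shifts. Taking the union over $f \in \calf$ yields the proposition. There is no real obstacle in this argument; the only care needed is in tracking the reduction modulo $x_i^Q - x_i$ and the edge case $d'_i = 0$, which is handled by the convention built into $\modstar Q$. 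It is worth noting that affine-invariance is not actually used — only the linearity of $\calf$ and the fact that the codomain is $\F_q$ rather than $\F_Q$.
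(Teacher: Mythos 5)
Your proof is correct and takes essentially the same approach as the paper: the paper's proof is a one-line observation that $f^{q^j} \equiv f \pmod{\vec x^Q - \vec x}$ forces $\supp(f)$ to be closed under $q$-shifts, and you have simply filled in the details (Frobenius additivity, reduction via $\modstar Q$, bijectivity of $d' \mapsto qd' \modstar Q$ since $q$ is a unit mod $Q-1$). Your side remark that neither affine-invariance nor even linearity is used — the support of each individual $f$ is already $q$-shift closed — is accurate and a fair observation about the proposition's hypotheses.
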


\begin{proof}
Follows immediately from the fact that for every function
$f:\F_Q^m \to \F_q$, we have $\vec d \in \supp(f)$ if
and only if $\vec e \in \supp(f)$, which follows from
the fact that $f(\vec x)^{q^j} = f(\vec x) \mod (\vec x^Q - \vec x)$
for every $j$.
\end{proof}

We now turn to identifying the degree sets of lifted codes.
We start with the case of lifts of univariate codes, which
are somewhat simpler to describe. The lifts of multivariate
codes come from the same principles, but are messier to describe.

It turns out that the structure of the degree set (not every
set $D$ is the degree set of an affine-invariant code) is strongly
influenced by the base $p$ representation of its members,
where $p$ is the characteristic of $q$, the alphabet of our codes.
We start with some notions related to such representations.
For non-negative integers $a$ and $b$, let $a^{(0)},a^{(1)},\ldots,$
and $b^{(0)},b^{(1)},\ldots,$ be their base $p$ expansion, i.e.,
$0 \leq a^{(i)},b^{(i)} < p$, $a = \sum_i a^{(i)} p^i$  and
$b = \sum_i b^{(i)} p^i$. We say $a$ is in the $p$-shadow of
$b$, denoted $a \leq_p b$, if $a^{(i)} \leq b^{(i)}$ for every $i$.
We extend the notion to vectors coordinate-wise. So for, 
$\vec e, \vec d \in \Z^n$, we say $\vec e \leq_p \vec d$ if
$e_i \leq_p d_i$ for all $i \in [n]$.

\begin{definition}
\label{def:univ-lift-deg}
For a set $D \subseteq \{0,\ldots,Q-1\}$, its $m$th lift,
denoted $\Lift_m(D)$ is given by 
\[
\Lift_m(D) \triangleq \left\{
\vec d = \langle d_1,\ldots,d_m \rangle \in \{0,\ldots,Q-1\}^m | 
\forall \vec e \leq_p \vec d, ~
\sum_{i=1}^m e_i \modstar Q \in D 
\right\}.
\]
\end{definition}

The following proposition makes the implied connection between
lifts of codes and their degree sets explicit.
We note that this proposition is
implicit in \cite{BMSS}.

\begin{proposition}
\label{prop:univ-lift-deg}
For every linear affine-invariant code $\calf \subseteq \{\F_Q \to
\F_q\}$, and for every $m \geq 1$, we have
$\Lift_m(\Deg(\calf)) = \Deg(\Lift_m(\calf))$.
\end{proposition}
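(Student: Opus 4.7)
The plan is to reduce both inclusions to one explicit computation, namely the restriction of $\vec x^{\vec d}$ to an affine line $\ell(T) = \vec a + T\vec b$: by binomial expansion and Lucas' theorem,
$$\vec x^{\vec d}|_\ell(T) \;=\; \sum_{\vec e \leq_p \vec d} \Bigl(\prod_{i=1}^m \tbinom{d_i}{e_i}\Bigr)\, \vec a^{\vec d - \vec e}\,\vec b^{\vec e}\, T^{\sum_i e_i},$$
where every binomial is nonzero in $\F_p$ and distinct $\vec e \leq_p \vec d$ contribute distinct monomials in $(\vec a,\vec b)$. Reducing exponents of $T$ modulo $T^Q - T$, the coefficient of $T^{k'}$ becomes a polynomial $P_{k'}(\vec a,\vec b)$ whose monomial support (in $\vec a, \vec b$) is exactly $\{\vec a^{\vec d-\vec e}\vec b^{\vec e} : \vec e \leq_p \vec d,\ (\sum_i e_i) \modstar Q = k'\}$. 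Since $P_{k'}$ has degree at most $Q-1$ in each variable, it vanishes as a polynomial iff as a function iff no such $\vec e$ exists.

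For the forward inclusion $\Lift_m(\Deg(\calf)) \subseteq \Deg(\Lift_m(\calf))$, I take $\vec d \in \Lift_m(D)$ with $D = \Deg(\calf)$, so by hypothesis $P_{k'} \equiv 0$ whenever $k' \notin D$. Setting $f = \Tr(c\, \vec x^{\vec d})$ for $c \in \F_Q$ and commuting Frobenius through the expansion, every exponent of $T$ appearing in $f|_\ell$ is a $q$-power shift of some $k' \in D$ and hence still in $D$ by the $q$-shift closure of $D$ (Proposition~\ref{prop:q-shift}); thus $f|_\ell$ has support in $D$, so $f|_\ell \in \Fam(D) = \calf$ by Proposition~\ref{prop:fam}, and $f \in \Lift_m(\calf)$. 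A brief calculation shows the coefficient of $\vec x^{\vec d}$ in $f$ is the subfield trace $\Tr_{\F_Q/\F_{q^s}}(c)$, where $s$ is the size of the $q$-orbit of $\vec d$; picking $c$ so that this is nonzero, $\vec d \in \supp(f) \subseteq \Deg(\Lift_m(\calf))$.

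For the reverse inclusion, rather than fighting the trace on the $\F_q$-valued side, I plan to pass to the $\F_Q$-linear companion $\tilde\calf \subseteq \{\F_Q \to \F_Q\}$ consisting of all functions whose support lies in $D$. Since $\calf \subseteq \tilde\calf$ (view $\F_q$-valued functions as $\F_Q$-valued), we have $\Lift_m(\calf) \subseteq \Lift_m(\tilde\calf)$ and therefore $\Deg(\Lift_m(\calf)) \subseteq \Deg(\Lift_m(\tilde\calf))$, so it suffices to show $\Deg(\Lift_m(\tilde\calf)) \subseteq \Lift_m(D)$. In that cleaner $\F_Q$-linear setting, if $\vec d \in \Deg(\Lift_m(\tilde\calf))$, a direct $\F_Q$-linear monomial extraction (the trace-free analogue of Lemma~\ref{lem:monomial}) yields $\vec x^{\vec d} \in \Lift_m(\tilde\calf)$, so $\vec x^{\vec d}|_\ell \in \tilde\calf$ for every line $\ell$; this forces $P_{k'} \equiv 0$ for all $k' \notin D$, and the linear-independence observation from the first paragraph then rules out any $\vec e \leq_p \vec d$ with $(\sum_i e_i) \modstar Q = k' \notin D$, proving $\vec d \in \Lift_m(D)$. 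I expect the most delicate point is precisely the $\F_q$/$\F_Q$ boundary: monomial extraction over $\F_q$ only produces $\Tr(c_{\vec d}\, \vec x^{\vec d}) \in \Lift_m(\calf)$, whose restrictions mix the whole $q$-shift orbit of $\vec d$ in a way that makes a direct argument painful, and the detour through $\tilde\calf$ is designed exactly to sidestep that mixing.
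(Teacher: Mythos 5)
Your proposal takes a genuinely different route from the paper's. The paper picks a linear-constraint description of $\calf$ (functionals $\sum_j \lambda_{ij} f(t_{ij})=0$) and establishes the identity by a single chain of biconditionals, so both inclusions fall out simultaneously from one calculation; it never isolates the polynomial coefficients $P_{k'}$ you introduce. Your plan instead splits the two inclusions and reduces each to the observation that the $T^{k'}$-coefficient of $\vec x^{\vec d}|_\ell$ is a low-degree polynomial $P_{k'}(\vec a,\vec b)$ whose monomial support is indexed by $\{\vec e\le_p\vec d : \sum_i e_i \modstar Q = k'\}$. This is a more geometric way to arrive at the same combinatorial criterion, and the forward inclusion as you lay it out is complete and correct: the observation that the $p$-shadow of $q^j\vec d\modstar Q$ is exactly the $q^j$-shift of the $p$-shadow of $\vec d$, combined with Propositions~\ref{prop:q-shift} and~\ref{prop:fam}, is all one needs to see that $f|_\ell\in\calf$ and that a suitable $c$ keeps $\vec d\in\supp(\Tr(c\vec x^{\vec d}))$. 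The trade-off is that you had to split directions, which is exactly where the work shows up.

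The reverse inclusion as written has a gap. Your detour through the $\F_Q$-linear companion $\tilde\calf=\{g:\F_Q\to\F_Q \mid \supp(g)\subseteq D\}$ is the right idea to avoid the trace mixing the whole $q$-orbit of $\vec d$, but the key step --- ``$\F_Q$-linear monomial extraction yields $\vec x^{\vec d}\in\Lift_m(\tilde\calf)$'' --- silently uses that $\Lift_m(\tilde\calf)$ is affine-invariant, which in turn requires that $\tilde\calf$ itself be invariant under affine permutations of $\F_Q$. Unwinding that condition, one needs $D=\Deg(\calf)$ to be closed under the $p$-shadow relation $\le_p$: otherwise $g(ax+b)$ for $g\in\tilde\calf$ need not have support in $D$. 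This $p$-shadow closure of degree sets is a true and standard fact about affine-invariant codes (it follows from the structure theory in \cite{KS08-ECCC,BGMSS11-ECCC}), but the paper never states or proves it --- it establishes only $q$-shift closure (Proposition~\ref{prop:q-shift}). Likewise, the ``trace-free analogue of Lemma~\ref{lem:monomial}'' is plausible and easier than the $\F_q$-valued version, but it, too, would need to be proved (the $0$-vs-$(Q{-}1)$ ambiguity in Fourier extraction over $\F_Q^*$ still requires the affine part of the invariance). So the reverse inclusion, as written, rests on two unproven ingredients. Neither is wrong, and both are folklore, but to make the argument self-contained you would need to either prove $p$-shadow closure of $\Deg(\calf)$ and the $\F_Q$-linear extraction lemma, or fall back on the paper's constraint-description calculation, which sidesteps the issue by never needing to extract a single monomial inside the lifted code.
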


\begin{proof}
Let $\F = \F_q$ and $\K = \F_Q$. In what follows
we will use the notation $\vec x^{\vec e}$ to denote
$\prod_{i=1}^n x_i^{e_i}$. And we use ${\vec d \choose
\vec e}$ to denote $\prod_{i=1}^n {d_i \choose e_i}$.

Since $\calf$ is linear, we have that there exist
some $I \leq Q$ linear constraints given by $t_{i,j}
\in \K$ and $\lambda_{ij} \in \F$ for 
$1 \leq i \leq I$ and $1 \leq j \leq J$  such that
$f \in \calf$ if and only if $\sum_{j\leq J} \lambda_{ij}
f(t_{ij}) = 0$ for every $i \leq I$.

We now have the following equivalences:
\begin{eqnarray*}
\vec d \in \Deg(\lift_m(\calf))
&\substack{\text{Lemma~\ref{lem:monomial}} \\ \iff}&
\forall \lambda \in \K~~\Tr(\lambda {\vec x}^{\vec d} ) \in \lift_m(\calf) \\
&\iff& \forall \lambda \in \K~\forall \vec a \in \K^m~
\forall \vec b \in \K^m~~\Tr(\lambda(t \cdot \vec a  + \vec b)^{\vec d}) \in \calf \\
&\iff& \forall \lambda, \vec a, \vec b~~\forall i~~
\sum_j \lambda_{ij} \Tr(\lambda(t_{ij} \vec a + \vec b)^{\vec d}) = 0 \\
&\substack{\text{Lemmas~\ref{lemma:genlucas},~\ref{lemma:expansion}}  \\ \iff}&
\forall \lambda, \vec a, \vec b~~\forall i~~
\Tr\left(\lambda
\sum_{\vec e \le_p \vec d} {\vec d \choose \vec e}
{\vec a}^{\vec e} {\vec b}^{\vec d - \vec e}
\sum_j \lambda_{ij} {t_{ij}}^{\sum_{\ell=1}^n e_\ell}
\right) =0 \\
&\iff& \forall \vec a, \vec b~~\forall i~~
\sum_{\vec e \le_p \vec d} {\vec d \choose \vec e}
{\vec a}^{\vec e} {\vec b}^{\vec d - \vec{e})}
\sum_j \lambda_{ij} {t_{ij}}^{\sum_\ell e_\ell} = 0 \\
&\iff& \forall \vec e \le_p \vec d~~\forall i~~
\sum_j \lambda_{ij} {t_{ij}}^{\sum_\ell e_\ell} = 0 \\
&\iff& \forall \vec e \le_p \vec d~~\Sigma(\vec e) \modstar{Q} \in
\Deg(\calf).
\end{eqnarray*}
\end{proof}

We now extend the above definition and proposition to the 
case where the code being lifted is itself a multivariate one.

To this end we extend some of the notations from the previous parts to
matrices.
For matrices $\bA, \bB \in \Z^{n \times \ell}$ we say 
$\bA \leq_p \bB$ if $(\bA)_{ij} \leq_p (\bB)_{ij}$ for every
pair $(i,j) \in [n] \times [\ell]$.

Next, we extend the notion 
to compare vectors to elements and matrices to vectors.
For
$\vec e \in \Z^\ell$ and $d \in \Z$
we say $\vec e \leq_p d$ if 
for every $\vec f \leq_p \vec e$
we have $\sum_{i \in [\ell]} f_i \leq_p d$. (This notion corresponds to
the support of $(1 + \sum_{i=1}^\ell x_i)^d$: $\vec x^{\vec e}$ appears
with a non-zero coefficient only if $\vec e \leq_p d$.)
Extending to matrices and vectors,
$\bA \in \Z^{n \times \ell}$ with rows $(\bA)_j \in \Z^\ell$
and
$\vec d = \langle d_1,\ldots,d_n \rangle \in \Z^n$
we say $\bA \leq_p \vec d$ if
$(\bA)_j \leq_p d_j$ for every $j \in [n]$.

Finally, we need one more piece of notation before defining the
degree sets of multivariate lifts. 
For matrix $\bA \in \Z^{n \times \ell}$,
let $\Sigma(\bA) \in \Z^\ell$ denote its row sum given by
$\Sigma(\bA)_j = \sum_{i=1}^n (\bA)_{ij}$.

We are now ready to define the lifts of multivariate degree sets.

\begin{definition}[Degree sets of lifts]
For a set $D \subseteq \{0,\ldots,Q-1\}^t$, its
$m$th {\em lift}, denoted $\Lift_m(D)$, is given by
$$\{\vec d \in \{0,\ldots,Q-1\}^m \mid \forall~
\vec E \in \Z^{m \times t} \leq_p \vec d, \mbox{ we have }
\Sigma(\vec E) \modstar Q \in D\}.$$
\end{definition}

The following proposition is the multivariate analog of 
Proposition~\ref{prop:univ-lift-deg}.

\begin{proposition}
\label{prop:lift-deg}
For every linear affine-invariant code $\calf \subseteq \{\F_Q^t \to
\F_q\}$, and for every $m \geq t$, we have
$\Lift_m(\Deg(\calf)) = \Deg(\Lift_m(\calf))$.
\end{proposition}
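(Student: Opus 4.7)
The plan is to mimic the proof of Proposition~\ref{prop:univ-lift-deg} almost verbatim, with scalar variables replaced by $t$-tuples. First, using that $\calf$ is linear and affine-invariant, I fix a family of linear tests $\sum_{j} \lambda_{ij} f(t_{ij}) = 0$ characterizing membership in $\calf$, where now the test points $t_{ij} \in \K^{t}$ are $t$-dimensional and $\lambda_{ij}\in\F$. By Lemma~\ref{lem:monomial}, $\vec d \in \Deg(\Lift_m(\calf))$ is equivalent to $\Tr(\lambda\, \vec x^{\vec d}) \in \Lift_m(\calf)$ for every $\lambda \in \K$; unwinding the definition of the lift, this becomes: for every $\lambda \in \K$, every $\bA \in \K^{m \times t}$, every $\vec b \in \K^{m}$, and every $i$,
\[
\sum_{j} \lambda_{ij}\, \Tr\bigl(\lambda\,(\bA\, t_{ij} + \vec b)^{\vec d}\bigr) \;=\; 0.
\]

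Next, I expand $(\bA \vec y + \vec b)^{\vec d} = \prod_{k=1}^{m} \bigl(b_k + \sum_{j'} A_{kj'}\, y_{j'}\bigr)^{d_k}$ by iterating the multinomial/generalized-Lucas expansions (Lemmas~\ref{lemma:genlucas} and~\ref{lemma:expansion}). Exactly as in the univariate argument, the nonzero terms are indexed by matrices $\bE \in \Z^{m \times t}$ with $\bE \leq_p \vec d$; the coefficient of $\vec y^{\Sigma(\bE)}$ is a nonzero product of multinomial coefficients multiplied by $\prod_{k} b_k^{d_k - (e_{k,1}+\cdots+e_{k,t})} \prod_{k,j'} A_{kj'}^{e_{kj'}}$. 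Substituting $\vec y = t_{ij}$, pulling the trace through, and plugging back into the displayed identity yields a polynomial identity in the entries of $\bA$ and $\vec b$ over $\K$.

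Now I strip away the outer $\Tr(\lambda \cdot\,)$ by letting $\lambda$ range over $\K$, so that the $\K$-valued expression inside must vanish. Since the total degree in each entry of $\bA,\vec b$ is less than $Q$, pointwise vanishing on $\K^{m\times t}\times\K^{m}$ forces each individual monomial coefficient (indexed by a distinct $\bE\leq_p \vec d$) to vanish. This gives, for every $\bE \leq_p \vec d$ and every $i$,
\[
\sum_{j} \lambda_{ij}\, t_{ij}^{\Sigma(\bE)} \;=\; 0,
\]
where $t_{ij}^{\Sigma(\bE)}$ denotes the $t$-variate monomial. Reversing Lemma~\ref{lem:monomial}, this system of identities is precisely equivalent to $\Tr(\mu\, \vec y^{\Sigma(\bE)}) \in \calf$ for all $\mu \in \K$, i.e., $\Sigma(\bE) \modstar{Q} \in \Deg(\calf)$, which is the defining condition of $\Lift_m(\Deg(\calf))$.

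The main bookkeeping obstacle is checking that the $p$-adic nonvanishing condition produced by the iterated multinomial expansion matches exactly the convention ``$\bE \leq_p \vec d$'' set up just before the definition of $\Lift_m(D)$: the relation $\vec e_k \leq_p d_k$ on a row of $\bE$ must correspond to the absence of base-$p$ carries when forming the partial sums of the $e_{k,j'}$, since these are the multinomial coefficients appearing in the $k$-th factor. Once this identification is made, the argument is a straightforward, if notationally heavy, matrix-variate analog of Proposition~\ref{prop:univ-lift-deg}.
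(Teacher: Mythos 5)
Your proposal matches the paper's proof essentially step for step: the same reduction via Lemma~\ref{lem:monomial} to testing monomials, the same use of the $\F_q$-linear tests characterizing $\calf$, the same multinomial expansion of $(\bA\vec t_{ij}+\vec b)^{\vec d}$ via Lemmas~\ref{lemma:genlucas} and~\ref{lemma:expansion}, and the same coefficient-extraction argument in the entries of $\bA,\vec b$ (valid because all exponents are below $Q$). The closing remark about matching the $\leq_p$ convention on matrix rows is accurate but is exactly what Lemma~\ref{lemma:genlucas} encodes, so it introduces no new content beyond the paper's argument.
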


\begin{proof}
The proof is very similar to the proof of
Proposition~\ref{prop:univ-lift-deg}, with enriched notation.
For a matrix $\vec E$, let $\vec E^{\top}$ denote the transpose of $\vec E$,
so that $(\vec E)_{ij} = (\vec E^\top)_{ji}$.
For an integer $d$ and vector $\vec e = \langle e_1,\ldots,e_t \rangle$, define
${d \choose \vec e} = \frac{d!}{e_1!\cdots e_t!(d-e_1-\cdots-e_n)!}$, the
standard multinomial coefficient and also the coefficient of $\vec x^{\vec e}$
in the expansion of $(1 + \sum_{i=1}^t x_i)^d$.
Extending this notation, for a vector $\vec d = \langle d_1,\ldots,d_m \rangle$
and a matrix $\vec E \in \Z^{m \times t}$ with rows $\vec e_1,\ldots,\vec e_m$,
define ${\vec d \choose \vec E} = \prod_{i=1}^m {d_i \choose \vec e_i}$.
We will use the fact that ${\vec d \choose \vec E} \not\equiv 0 \pmod{p}$ if and only if
$\vec E \le_p \vec d$ (see Lemma~\ref{lemma:genlucas}).
Finally, for two matrices $\vec A, \vec E \in \Z^{m \times t}$, define
$\vec A^{\vec E} = \prod_{i,j} a_{ij}^{e_{ij}}$.
For convenience, let $\F = \F_q$ and let $\K = \F_Q$.

Now, we begin the proof.
There exist $\vec t_{ij} \in \K^t,\lambda_{ij} \in \F$ such that $f \in \calf \iff \forall i~
\sum_j \lambda_{ij} f(\vec t_{ij}) =~0$.
The assertion then follows from the
following equivalences:
\begin{eqnarray*}
\vec d \in \Deg(\lift_m(\calf))
&\substack{\text{Lemma~\ref{lem:monomial}} \\ \iff}&
\forall \lambda \in \K~~\Tr(\lambda {\vec x}^{\vec d} ) \in \lift_m(\calf) \\
&\iff& \forall \lambda \in \K~\forall \vec A \in \K^{m \times t}~
\forall \vec b \in \K^m~~\Tr(\lambda(\vec A \vec x + \vec b)^{\vec d}) \in \calf \\
&\iff& \forall \lambda, \vec A, \vec b~~\forall i~~
\sum_j \lambda_{ij} \Tr(\lambda(\vec A \vec t_{ij} + \vec b)^{\vec d}) = 0 \\
&\substack{\text{Lemmas~\ref{lemma:genlucas},~\ref{lemma:expansion}}  \\ \iff}&
\forall \lambda, \vec A, \vec b~~\forall i~~
\Tr\left(\lambda
\sum_{\vec E \le_p \vec d} {\vec d \choose \vec E}
{\vec A}^{\vec E} {\vec b}^{\vec d - \Sigma(\vec E^{\top})}
\sum_j \lambda_{ij} {\vec t_{ij}}^{\Sigma (\vec E)}
\right) \\
&\iff& \forall \vec A, \vec b~~\forall i~~
\sum_{\vec E \le_p \vec d} {\vec d \choose \vec E}
{\vec A}^{\vec E} {\vec b}^{\vec d - \Sigma(\vec E^{\top})}
\sum_j \lambda_{ij} {\vec t_{ij}}^{\Sigma (\vec E)} = 0 \\
&\iff& \forall \vec E \le_p \vec d~~\forall i~~
\sum_j \lambda_{ij} {\vec t_{ij}}^{\Sigma (\vec E)} = 0 \\
&\iff& \forall \vec E \le_p \vec d~~\Sigma(\vec E) \modstar{Q} \in \Deg(\calf)
\end{eqnarray*}
\end{proof}


The definition of $\Lift_m(D)$ is somewhat cumbersome and
not easy to work with. However in the upcoming sections we
will try to gain some combinatorial insights about it
to derive bounds on the dimension of the codes of interest.

Finally, before concluding we mention explicitly the locality
properties of lifted codes. We start with a simple observation.

\begin{proposition}
\label{prop:distance-simple}
Let $\calf \subsetneq \{\F_Q^m \to \F_q\}$ be a linear
affine-invariant code. The $\delta(\calf) \geq 2\cdot Q^{-m}$.
\end{proposition}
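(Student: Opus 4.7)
The plan is to show the contrapositive of a weight-1 codeword being present: any linear affine-invariant code that contains a codeword of Hamming weight $1$ must already be the entire ambient space $\{\F_Q^m \to \F_q\}$, contradicting the strict inclusion $\calf \subsetneq \{\F_Q^m \to \F_q\}$. Once this is established, every nonzero codeword has weight at least $2$, so the normalized distance is at least $2/Q^m = 2 \cdot Q^{-m}$.

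First, I would suppose for contradiction that $\calf$ contains a codeword $f$ with $\wt(f) = 1$, i.e., there is a unique point $x_0 \in \F_Q^m$ with $f(x_0) = \alpha \ne 0$ for some $\alpha \in \F_q^*$, while $f(x) = 0$ for all $x \ne x_0$. Next, I would exploit affine invariance under permutations: for any target point $y \in \F_Q^m$, choose an affine permutation $A_y : \F_Q^m \to \F_Q^m$ with $A_y(y) = x_0$ (e.g.\ the translation $A_y(x) = x + (x_0 - y)$). Then $f \circ A_y \in \calf$ is the $\alpha$-scaled indicator function of $y$.

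Finally, I would invoke linearity of $\calf$ over $\F_q$: since $\alpha \in \F_q^*$ is invertible in $\F_q$, scaling gives that the indicator function $\mathbf{1}_y$ of every single point $y$ lies in $\calf$. But every function $g : \F_Q^m \to \F_q$ can be written as $g = \sum_{y \in \F_Q^m} g(y) \cdot \mathbf{1}_y$, so by linearity $g \in \calf$. This forces $\calf = \{\F_Q^m \to \F_q\}$, contradicting $\calf \subsetneq \{\F_Q^m \to \F_q\}$.

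There is no serious obstacle here; the only tiny subtlety is being careful that the nonzero value $\alpha$ lies in $\F_q$ (so it is invertible in the scalar field of the code), which is immediate from the definition of the code as $\F_q$-valued, and that affine permutations suffice to move any point to any other point, which is clear from translations alone.
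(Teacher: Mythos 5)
Your proposal is correct and takes essentially the same approach as the paper: assume a weight-1 codeword exists, use affine invariance (translations) to place the unique nonzero point anywhere, and use linearity to span the whole function space, contradicting strictness. Your write-up is slightly more careful than the paper's in explicitly scaling by $\alpha^{-1}$ to pass from an $\alpha$-scaled indicator to the actual indicator $\mathbf{1}_y$, but this is a minor bookkeeping detail, not a different argument.
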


\begin{proof}
For $\vec a \in \F_Q^m$ let 
$\Delta_{\vec a}:\F_Q^m \to \F_q$ be the function satisfying
$\Delta_{\vec a}(\vec a) =1$ and $\Delta_{\vec a}$ is zero
everywhere else. For contradiction assume $\Delta_{\vec a} \in \calf$
for some $\vec a \in \F_Q^m$. But then by affine-invariance
we have $\Delta_{\vec b} \in \calf$ for every $\vec b \in \F_Q^m$
and then by linearity we have every function in
$\{\F_Q^m \to \F_q\}$ is contained in $\calf$, contradicting 
our hypothesis on $\calf$.
\end{proof}

\begin{proposition}
\label{prop:liftlcc}
Let $\calf \subsetneq \{\F_Q^t \to \F_q\}$ be a linear
affine-invariant code. Let $\calL = \Lift_m(\calf)$ be
its $m$-ary lift. Then $\calL$ is a $(Q^t-1,\frac13Q^{-t})$-LCC
and a $(Q^t,\Omega(Q^{-2t}),Q^{-t})$-LTC.
\end{proposition}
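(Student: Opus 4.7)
The plan is to handle the LCC and LTC claims separately using simple subspace-sampling algorithms, and in both cases to exploit the fact that $\Lift_m(\calf)$ is, by construction, characterized by $t$-dimensional local constraints.

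For the LCC, I propose the following self-corrector. On input $x \in \F_Q^m$, sample a uniformly random $t$-dimensional affine subspace $V$ through $x$ (e.g.\ pick $v_1, \ldots, v_t \in \F_Q^m$ uniformly subject to linear independence and set $V = x + \mathrm{span}(v_1, \ldots, v_t)$), query $f$ at the $Q^t - 1$ points of $V \setminus \{x\}$, and output the unique value $\alpha \in \F_q$ (if any) such that extending $f|_{V \setminus \{x\}}$ by the value $\alpha$ at $x$ yields an element of $\calf$. Correctness rests on two observations: first, $g|_V \in \calf$ for the intended codeword $g$ by the very definition of $\Lift_m$; second, by Proposition~\ref{prop:distance-simple} every codeword of $\calf$ has absolute weight at least $2$, so $g|_V$ is uniquely determined by its values on any $Q^t - 1$ positions of $V$. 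Consequently, whenever $V \setminus \{x\}$ is error-free, the corrector outputs $g(x)$.

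For the success probability, note that for any fixed $y \neq x$ a short counting argument gives $\Pr[y \in V] = (Q^t - 1)/(Q^m - 1)$, so by linearity of expectation the expected number of corrupted points in $V \setminus \{x\}$ is at most $\delta(f, g) \cdot Q^t \leq \tfrac{1}{3}$. Markov's inequality then yields that $V \setminus \{x\}$ is error-free with probability at least $2/3$, giving the $(Q^t - 1, \tfrac{1}{3} Q^{-t})$-LCC bound.

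For the LTC, the natural tester picks a uniform $t$-dimensional affine subspace $V \subseteq \F_Q^m$, queries $f$ on all $Q^t$ points of $V$, and accepts iff $f|_V \in \calf$. Completeness is immediate from the definition of $\Lift_m(\calf)$, and the distance bound $\delta(\calL) \geq Q^{-t}$ is established in Section~\ref{sec:distance}. The main obstacle is the soundness analysis, i.e.\ showing rejection probability $\Omega(\tau \cdot Q^{-2t})$ on inputs that are $\tau$-far from $\calL$; rather than redo this from scratch I would invoke \cite[Theorem 2.9]{KS08-ECCC}, which guarantees exactly such a rejection rate for any linear affine-invariant code characterized by $t$-dimensional local constraints. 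Since $\Lift_m(\calf)$ is defined by precisely such constraints (``$f|_V \in \calf$ for every $t$-dimensional affine $V$''), the citation immediately yields the claimed $(Q^t, \Omega(Q^{-2t}), Q^{-t})$-LTC parameters.
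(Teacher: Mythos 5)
Your proposal is correct and takes essentially the same approach as the paper: a self-corrector that samples a random $t$-dimensional affine subspace through the target point, queries all other points of it, uses the lower bound $\delta(\calf)\ge 2Q^{-t}$ from Proposition~\ref{prop:distance-simple} to uniquely extend, and applies a union/Markov bound to show the sampled subspace is error-free with probability at least $2/3$; for the LTC both you and the paper simply invoke \cite[Theorem 2.9]{KS08-ECCC}.
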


\begin{proof}
Given $f : \F_Q^m \to \F_q$ that is $Q^{-t}/3$-close to $p \in \calL$
and $\vec a \in \F_Q^m$, the local decoding algorithm
works as follows:
Pick random linearly independent $\vec b_1,\ldots \vec b_t \in \F_Q^m$ and let
$h : \F_Q^t \to \F_q$ be given by 
$h(\vec 0) = 0$ and
$h(u_1,\ldots,u_t) = f(\vec a + u_1 \vec b_1 + \cdots
+ u_t \vec b_t)$ for all other $u_1,\ldots,u_t$.
Compute $g \in \calf$ such that $g(\vec u) = h(\vec u)$ for all
$\vec u \in \F_Q^t \setminus \{\vec 0\}$ and
output~$g(\vec 0)$.

It is clear that the decoder makes at most $Q^t - 1$ queries.
We show that the decoder succeeds with high probability.
Let $p \in \calf$ satisfy $\delta(p,f) \leq Q^{-t}/3$.
Let $A$ be the $t$-dimensional subspace
$A = \{\vec a + u_1 \vec b_1 + \cdots + u_t \vec b_t \mid \vec u \in \F_Q^t\}$.
For every $\vec u \in \F_Q^t \setminus \{\vec 0\}$ we have
$\Pr_{\vec b_1,\ldots,\vec b_t}
[h(\vec u) \ne p|_A(\vec u)] \leq Q^{-t}/3$.
By a union bound 
$\Pr_{\vec b_1,\ldots,\vec b_t}[
\exists \vec u \in \F_Q^t \setminus \{\vec 0\} | 
h(\vec u) \ne p|_A(\vec u)] \leq (Q^t - 1)/(3Q^t) < 1/3$. 
So, with probability at least $2/3$, we have that $p|_A \in \calf$ 
agrees
with $h$ on all of $\F_Q^t - \vec{0}$. Furthermore,
by the fact that $\delta(\calf) \geq 2Q^{-t}$
(Proposition~\ref{prop:distance-simple}), $p|_A$ is the unique such
function with this property. It follows that the decoder outputs
$p|_A(\vec 0) = p(\vec a)$ with probability at least $2/3$ as
desired. 

The local testability follows directly from~\cite[Theorem~2.9]{KS08-ECCC}.
\end{proof}

\section{Constructions}
\label{sec:codes}

\subsection{Codes of constant locality}
\label{ssec:one}

In this section we prove Theorem~\ref{thm:one} which promised binary
codes
of locality $k$ and length $N$ with dimension $\Omega_k(\log N)^{k-2}$.

\paragraph{The Code:}
Fix $k = Q = 2^\ell$ and $N = 2^{m\ell}$. Let
$\calf_1 \subseteq \{\F_{Q} \to \F_2\}$ be the
code given by $\{f:\F_Q \to \F_2 \mid \sum_{\alpha \in \F_Q}
f(\alpha) = 0\}$.
Let $\calL_1 = \Lift_m(\calf_1)$. In what follows
we verify that $\calL_1$ has the properties claimed
in Theorem~\ref{thm:one}.

We start with some obvious aspects.

\begin{proposition}
\label{prop:length-one}
$\calL_1$ is a binary code of length $N$ and a $(k-1,k^{-1}/3)$-LCC
and a $(k,\Omega(k^{-2}),k^{-1})$-LTC.
\end{proposition}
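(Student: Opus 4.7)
The plan is to verify that $\calL_1$ satisfies the hypotheses of Proposition~\ref{prop:liftlcc} with base dimension $t = 1$ and $Q = k$, and then simply invoke that proposition.

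First I would check the basic parameters. The code $\calL_1$ is a set of functions $\F_Q^m \to \F_2$, so its length is $Q^m = 2^{m\ell} = N$, and its alphabet is $\F_2$, confirming it is a binary code of length $N$.

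Next I would verify that $\calf_1 \subseteq \{\F_Q \to \F_2\}$ is a proper, linear, affine-invariant code, so that Proposition~\ref{prop:liftlcc} applies. Linearity is immediate since the defining condition $\sum_{\alpha\in\F_Q} f(\alpha)=0$ is linear in $f$. Properness follows since, for instance, the indicator function $\Delta_0$ of the point $0 \in \F_Q$ has sum equal to $1 \neq 0$ and hence is not in $\calf_1$. For affine-invariance, given any affine permutation $A:\F_Q\to\F_Q$ and any $f\in\calf_1$, we have
\[
\sum_{\alpha\in\F_Q} (f\circ A)(\alpha) \;=\; \sum_{\alpha\in\F_Q} f(A(\alpha)) \;=\; \sum_{\beta\in\F_Q} f(\beta) \;=\; 0,
\]
since $A$ is a bijection on $\F_Q$; hence $f\circ A \in \calf_1$.

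Finally, with $t=1$ and $Q=k$, Proposition~\ref{prop:liftlcc} yields directly that $\calL_1 = \Lift_m(\calf_1)$ is a $(Q^t-1,\tfrac13 Q^{-t})$-LCC $= (k-1,k^{-1}/3)$-LCC and a $(Q^t,\Omega(Q^{-2t}),Q^{-t})$-LTC $= (k,\Omega(k^{-2}),k^{-1})$-LTC, which are exactly the claimed locality parameters. There is no real obstacle here — the proposition is essentially a direct specialization of the generic locality statement already established for lifts, once one confirms that $\calf_1$ is a proper linear affine-invariant code.
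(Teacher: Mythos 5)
Your proof is correct and takes the same route as the paper: both read off the length directly and invoke Proposition~\ref{prop:liftlcc} for the locality claims. You additionally spell out the verification that $\calf_1$ is a proper linear affine-invariant code (which the paper leaves implicit), but that is just filling in routine detail, not a different approach.
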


\begin{proof}
The length is immediate from the construction.
The local correctability and testability follow from
Proposition~\ref{prop:liftlcc}.
\end{proof}

The main aspect to be verified is the dimension of $\calL_1$.
We first describe the degree set of $\calf_1$.

\begin{claim}\label{clm:deg-one}
$\Deg(\calf_1) = \{0,\ldots,Q-2\}$.
\end{claim}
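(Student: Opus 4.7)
The plan is to prove both inclusions separately. For the upper bound $\Deg(\calf_1) \subseteq \{0, \ldots, Q-2\}$, I would first write any function $f : \F_Q \to \F_2$ uniquely as a polynomial $f(x) = \sum_{d=0}^{Q-1} c_d x^d$ with $c_d \in \F_Q$ (of degree $\leq Q-1$) and compute
$$\sum_{\alpha \in \F_Q} f(\alpha) = \sum_{d=0}^{Q-1} c_d \sum_{\alpha \in \F_Q} \alpha^d.$$
Then I would invoke the standard power-sum identity
$$\sum_{\alpha \in \F_Q} \alpha^d = \begin{cases} 0 & \text{if } 0 \le d \le Q-2, \\ -1 & \text{if } d = Q-1, \end{cases}$$
(using the convention $0^0 = 1$, and noting that $\sum_\alpha 1 = Q = 0$ in $\F_Q$ since $Q$ is a power of the characteristic). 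This collapses the sum to $-c_{Q-1}$, so $f \in \calf_1$ forces $c_{Q-1} = 0$; hence no $f \in \calf_1$ can have $Q-1$ in its support.

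For the lower bound $\{0, \ldots, Q-2\} \subseteq \Deg(\calf_1)$, I would exhibit an explicit member of $\calf_1$ for each candidate degree. For $d = 0$, the constant function $f \equiv 1$ lies in $\calf_1$ (its sum is $Q \cdot 1 = 0$ in characteristic $2$) and obviously has $0$ in its support. For each $d \in \{1, \ldots, Q-2\}$, I would take $f(x) = \Tr(x^d) = \sum_{i=0}^{\ell-1} x^{2^i d}$: this is $\F_2$-valued by construction, and its sum over $\F_Q$ equals $\Tr\bigl(\sum_\alpha \alpha^d\bigr) = \Tr(0) = 0$, so $f \in \calf_1$. Reducing the exponents $2^i d$ modulo $Q-1$ (they are already in $\{1,\ldots,Q-2\}$ since $d \not\equiv 0 \pmod{Q-1}$), we see that $d$ itself appears (from the $i=0$ term) in the polynomial representation, so $d \in \supp(f) \subseteq \Deg(\calf_1)$.

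There is no real obstacle here: the content is the classical power-sum identity on finite fields combined with the standard trace-based construction of $\F_q$-valued functions. The only detail worth flagging is the $d=0$ case, where $\Tr(1) = \ell \bmod 2$ could vanish, so one must use the constant $1$ function rather than $\Tr(x^0)$.
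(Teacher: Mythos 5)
Your upper-bound argument ($\Deg(\calf_1) \subseteq \{0,\ldots,Q-2\}$) via the power-sum identity is exactly the paper's proof, which in fact only carries out this direction explicitly: the paper establishes the characterization $f \in \calf_1 \iff \deg f < Q-1$ and leaves the reverse inclusion to the reader. You are right to insist on proving it.

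However, your construction for the reverse inclusion has a real gap. The obstruction you flag for $d=0$ — that $\Tr(x^d)$ may be identically zero — is \emph{not} special to $d=0$. Writing $Q = 2^\ell$, if the $2$-orbit of $d$ modulo $Q-1$ has size $b$ with $\ell/b$ even, then after reducing mod $x^Q - x$ every exponent in $\Tr(x^d) = \sum_{i=0}^{\ell-1} x^{2^i d}$ appears $\ell/b$ times, and in characteristic $2$ the whole polynomial collapses to zero. A concrete counterexample to your claim ``$d$ itself appears from the $i=0$ term'': take $Q = 16$ and $d = 5$; the orbit of $5$ is $\{5,10\}$, so $\Tr(x^5) = x^5 + x^{10} + x^5 + x^{10} \equiv 0$, and $5 \notin \supp(\Tr(x^5))$. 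The easy repair is to use $\Tr(\lambda x^d)$ for a suitable $\lambda \in \F_Q$: the coefficient of $x^d$ is then $\Tr_{\F_Q/\F_{2^b}}(\lambda)$, which is nonzero for some $\lambda$ by surjectivity of the relative trace. A cleaner route, which avoids the trace computation altogether, is to invoke the paper's Proposition that $\dim \calf = |\Deg(\calf)|$: the constraint $\sum_\alpha f(\alpha)=0$ is a single nontrivial $\F_2$-linear equation on the $Q$-dimensional space $\{\F_Q \to \F_2\}$, so $\dim\calf_1 = Q-1$, and combined with the inclusion $\Deg(\calf_1) \subseteq \{0,\ldots,Q-2\}$ (a set of size $Q-1$) the equality follows by counting.
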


\begin{proof}
Write $f : \F_Q \to \F_2$ as $f(x) = \sum_{d=0}^{Q-1} f_d x^d$.
Then
$\sum_{\alpha \in \F_Q} f(\alpha)
= \sum_{\alpha \in \F_Q} \sum_{d=0}^{Q-1} f_d \alpha^d
= \sum_{d=0}^{Q-1} f_d \left( \sum_{\alpha \in \F_Q} \alpha^d \right)
= -f_{Q-1}$ where we have used the fact that
$\sum_{\alpha \in \F_Q} \alpha^d = -1$ if $d=Q-1$ and is equal to $1$ otherwise.
Therefore $f \in \calf_1$ if and only if $\deg(f) < Q-1$.
\end{proof}

{\bf Remark:} Note that the proof above applies without change
to the case of the range being $\F_q$, for any $q$,
provided $\F_Q$ extends $\F_q$.

The next claim interprets the definition of $\Lift_m(D)$
in our setting.

\begin{claim}
$\vec d \in \{0,\ldots,Q-1\}^m$ is contained in $\Lift_m(\Deg(\calf_1))$
if and only if for every $\vec e \leq_2 \vec d$ we have
$\sum_{i=1}^m e_i \modstar Q \ne Q-1$.
\end{claim}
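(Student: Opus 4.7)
The plan is to derive the claim as a direct unwrapping of Definition~\ref{def:univ-lift-deg} applied to the concrete degree set computed in Claim~\ref{clm:deg-one}. There is no substantive obstacle here; the content is purely a translation of notation.

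First I would recall Definition~\ref{def:univ-lift-deg}: for a set $D \subseteq \{0,\ldots,Q-1\}$,
\[
\Lift_m(D) = \left\{\vec d \in \{0,\ldots,Q-1\}^m ~\middle|~ \forall \vec e \leq_p \vec d,\ \sum_{i=1}^m e_i \modstar Q \in D\right\},
\]
where $p$ is the characteristic of $\F_q$. In our setting $q=2$, so $p=2$ and the relation $\leq_p$ is exactly the relation $\leq_2$ that appears in the claim statement.

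Next, I would substitute the specific degree set. By Claim~\ref{clm:deg-one}, $\Deg(\calf_1) = \{0,1,\ldots,Q-2\}$. Since by construction $\sum_{i=1}^m e_i \modstar Q$ always lies in $\{0,1,\ldots,Q-1\}$, the condition ``$\sum_{i=1}^m e_i \modstar Q \in \{0,1,\ldots,Q-2\}$'' is logically equivalent to ``$\sum_{i=1}^m e_i \modstar Q \neq Q-1$''. Feeding this into the definition of $\Lift_m$ yields exactly the characterization in the claim.

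Thus the proof reduces to one line invoking the definition and one line replacing $\Deg(\calf_1)$ by its explicit description. No further argument (and in particular no appeal to Proposition~\ref{prop:univ-lift-deg} relating $\Lift_m(\Deg(\calf_1))$ to $\Deg(\Lift_m(\calf_1))$) is needed for this claim itself; that proposition will be invoked in the subsequent dimension count, where $|\Lift_m(\Deg(\calf_1))|$ must be lower-bounded to establish Theorem~\ref{thm:one}.
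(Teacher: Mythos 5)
Your proof is correct, and it takes a slightly different (and arguably cleaner) route than the paper's. The paper's proof cites Proposition~\ref{prop:univ-lift-deg} together with Claim~\ref{clm:deg-one}, but as you observe, the claim as literally stated concerns only the set-theoretic operation $\Lift_m(\cdot)$ applied to $\Deg(\calf_1)$, and Definition~\ref{def:univ-lift-deg} already defines this set as
$\{\vec d \in \{0,\ldots,Q-1\}^m \mid \forall \vec e \leq_p \vec d,\ \sum_{i=1}^m e_i \modstar Q \in D\}$
with $p=2$; substituting $D = \{0,\ldots,Q-2\}$ from Claim~\ref{clm:deg-one} and noting that $\sum_i e_i \modstar Q \in \{0,\ldots,Q-1\}$ always, so membership in $\{0,\ldots,Q-2\}$ is equivalent to $\ne Q-1$, gives the claim by pure notation unwinding. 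The paper's appeal to Proposition~\ref{prop:univ-lift-deg} is best read as anticipating the next step (identifying $\Lift_m(\Deg(\calf_1))$ with $\Deg(\calL_1)$, whose cardinality is the dimension), which is precisely where you correctly say that proposition enters. Both readings arrive at the same place; yours is more precise about what the claim by itself does and does not require.
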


\begin{proof}
Follows immediately by applying Proposition~\ref{prop:univ-lift-deg}
to Claim~\ref{clm:deg-one}.
\end{proof}

Given the claim, it is simple to get a lower bound on the
dimension of our code.

\begin{lemma}
\label{lem:dim-one}
The dimension of $\calL_1$ is at least ${m \choose Q-2}$.
\end{lemma}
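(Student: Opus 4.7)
The plan is to lower bound $\dim(\calL_1) = |\Deg(\calL_1)| = |\Lift_m(\Deg(\calf_1))|$ (using the dimension proposition together with Proposition~\ref{prop:univ-lift-deg}) by exhibiting ${m \choose Q-2}$ explicit distinct elements of $\Lift_m(\Deg(\calf_1))$.

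Concretely, for each subset $S \subseteq [m]$ with $|S| = Q-2$, I will take the indicator vector $\vec d^{(S)} \in \{0,1\}^m$ defined by $d_i^{(S)} = 1$ if $i \in S$ and $0$ otherwise. Any $\vec e \leq_2 \vec d^{(S)}$ is forced to lie in $\{0,1\}^m$ and to have support contained in $S$, so $\sum_{i=1}^m e_i \leq |S| = Q-2$. Being a nonnegative integer strictly less than $Q-1$, this sum equals $\sum_i e_i \modstar Q$, and it lies in $\{0,\ldots,Q-2\} = \Deg(\calf_1)$; in particular it is never $Q-1$. By the preceding claim (the characterization of membership in $\Lift_m(\Deg(\calf_1))$), this shows $\vec d^{(S)} \in \Lift_m(\Deg(\calf_1))$.

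Finally, the vectors $\{\vec d^{(S)}\}_{|S|=Q-2}$ are pairwise distinct since they are $0/1$-indicators of distinct subsets, so the number of degrees produced is exactly ${m \choose Q-2}$, completing the bound. I do not expect any real obstacle here: once the characterization of $\Lift_m(\Deg(\calf_1))$ provided by the preceding claim is invoked, the construction is immediate and the verification is just the observation that a sum of at most $Q-2$ zeros and ones cannot equal $Q-1$.
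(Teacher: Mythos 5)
Your proof is correct and follows essentially the same approach as the paper: both lower bound the dimension by exhibiting the $0/1$-indicator vectors of subsets of size $Q-2$ (the paper considers subsets of size at most $Q-2$, which yields the same bound) and verifying via the claimed characterization that every $\vec e \leq_2 \vec d$ gives a digit-sum strictly below $Q-1$. The student's write-up is merely more detailed about the $\modstar Q$ step, but the argument is the same.
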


\begin{proof}
For $S \subseteq [m]$ let $\vec {d_S}$ denote the vector
that is one on coordinates from $S$ and zero outside.
It is clear that for $|S| \leq Q-2$, $\vec{d_S} \in \Lift_m(\Deg(\calf_1))$
and there are at least ${m \choose Q-2}$ such sets.
\end{proof}

\begin{proof}[Proof of Theorem~\ref{thm:one}]
Theorem~\ref{thm:one} follows Proposition~\ref{prop:length-one}
and Lemma~\ref{lem:dim-one} and plugging the values of $m$
and $Q$ from the construction. Specificalle we have that
the dimension of the code is at least
${m \choose
Q-2} \geq \frac{1}{k^{k-2} k!} (\log N)^{k-2}$.
So the theorem follows for $c_k = \frac{1}{k^{k-2} k!}$.
\end{proof}

\subsection{Codes of sublinear locality}
\label{ssec:two}

Next we turn to Theorem~\ref{thm:two}, which asserts the existence
of codes of locality $N^{\epsilon}$ with dimension
$N - N^{1 - \epsilon'}$.

\paragraph{The Code:}
Given $\epsilon > 0$ and prime $p$, 
let $m = \lceil 1/\epsilon \rceil$.
Let $\ell$ be an integer such that $p^{m\ell} \geq N$.
Let $Q = p^\ell$.
Let $\calf_2 \subseteq \{\F_Q \to \F_p\}$ be
the code $\{f:\F_Q \to \F_p \mid \sum_{\alpha \in \F_Q} f(\alpha) = 0 \}$.
Let $\calL_2 = \Lift_m(\calf_2)$.

As usual we get the following proposition.

\begin{proposition}
\label{prop:length-two}
$\calf_2$ is a $p$-ary code of length at least $N$ and
locality at most $N^{\epsilon}$. Specifically
it is a $(N^{\epsilon},\Omega(N^{-\epsilon}))$-LCC
and a $(N^{\epsilon},\Omega(N^{-2\epsilon}),N^{-\epsilon})$-LTC.
\end{proposition}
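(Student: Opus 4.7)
The plan is to apply the general locality guarantees of Proposition~\ref{prop:liftlcc} to $\calL_2 = \Lift_m(\calf_2)$ (the proposition statement's ``$\calf_2$'' appears to be a typo for $\calL_2$) and then translate the parameters, expressed in terms of the underlying field size $Q$, into the desired bounds in terms of $N$. The only genuine work is bookkeeping: verify that with the particular choices of $m$, $\ell$, and $Q$ made in the construction, $Q$ is bounded above by $N^{\epsilon}$ in a clean way.

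First I would handle the length: $\calL_2$ is a code on domain $\F_Q^m$, which has size $Q^m = p^{m\ell} \geq N$ by the choice of $\ell$. To get the cleanest parameters, I would restrict attention to the infinite family of $N$ of the form $N = p^{m\ell}$, so that $Q^m = N$ exactly and hence $Q = N^{1/m}$. Since $m = \lceil 1/\epsilon \rceil \geq 1/\epsilon$, we get $1/m \leq \epsilon$ and therefore $Q \leq N^{\epsilon}$.

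Next I would invoke Proposition~\ref{prop:liftlcc} with $t = 1$ (since $\calf_2$ is univariate) after noting that $\calf_2$ is linear and affine-invariant (it is a linear condition on the sum over all of $\F_Q$, which is preserved by affine permutations of $\F_Q$) and is a proper subcode of $\{\F_Q \to \F_p\}$. The proposition yields that $\calL_2$ is a $(Q-1,\tfrac{1}{3}Q^{-1})$-LCC and a $(Q,\Omega(Q^{-2}),Q^{-1})$-LTC.

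Finally I would translate: the query complexities $Q-1$ and $Q$ are at most $N^{\epsilon}$; the LCC distance parameter is $\tfrac{1}{3}Q^{-1} = \tfrac{1}{3}N^{-1/m} \geq \tfrac{1}{3}N^{-\epsilon} = \Omega(N^{-\epsilon})$; the LTC soundness is $\Omega(Q^{-2}) = \Omega(N^{-2/m}) \geq \Omega(N^{-2\epsilon})$; and the distance parameter $Q^{-1} = N^{-1/m} \geq N^{-\epsilon}$ (the code's actual distance being at least $Q^{-1}$ means it is in particular a code of distance $N^{-\epsilon}$). There really is no main obstacle here—everything follows from Proposition~\ref{prop:liftlcc} together with the inequality $1/m \leq \epsilon$; the only thing to keep track of is that one picks $N$ in the specified subsequence so that $Q$ is exactly $N^{1/m}$ rather than somewhere in the gap $[N^{1/m}, pN^{1/m}]$ that a general $N$ would leave.
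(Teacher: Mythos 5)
Your proposal is correct and follows exactly the route the paper intends: the paper gives no explicit proof for this proposition (mirroring the one-line proof of Proposition~\ref{prop:length-one}), and the intended argument is precisely to invoke Proposition~\ref{prop:liftlcc} with $t=1$ and translate $Q$-parameters into $N$-parameters using $Q = N^{1/m} \leq N^{\epsilon}$. You also correctly spot that ``$\calf_2$'' in the proposition statement is a typo for $\calL_2$, and you correctly restrict to the subsequence $N = p^{m\ell}$, which is the standard reading of the ``for infinitely many $N$'' hypothesis in Theorem~\ref{thm:two}.
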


We now turn to the task of analyzing the dimension of this code.
We first describe the degree sets of $\calf_2$ and $\calL_2$.

\begin{claim}
\label{clm:deg-two}
$\Deg(\calf_2) = \{0,\ldots,Q-2\}$ and
$$\Deg(\calL_2) = \{\vec d \in \{0,\ldots,Q-1\}^m \mid
\forall \vec e \leq_p \vec d, \sum_i e_i \modstar Q \ne Q-1\}.$$
\end{claim}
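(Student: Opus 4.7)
The plan is to handle the two equalities in turn, with the second following almost immediately from the first via Proposition~\ref{prop:univ-lift-deg}.

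For the first equality $\Deg(\calf_2) = \{0,\ldots,Q-2\}$, I would essentially repeat the computation from Claim~\ref{clm:deg-one}, relying on the remark that the argument there is independent of whether the range is $\F_2$ or a larger field $\F_p$ (as long as $\F_Q$ extends $\F_p$). Concretely, for any $f:\F_Q \to \F_p$ with polynomial representation $f(x) = \sum_{d=0}^{Q-1} f_d x^d$, the standard identity $\sum_{\alpha \in \F_Q} \alpha^d = -1$ if $d = Q-1$ and $0$ otherwise yields $\sum_{\alpha \in \F_Q} f(\alpha) = -f_{Q-1}$, so membership in $\calf_2$ is equivalent to $f_{Q-1} = 0$. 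This shows $\Deg(\calf_2) \subseteq \{0,\ldots,Q-2\}$. For the reverse inclusion, I would invoke Lemma~\ref{lem:monomial} (used in the proof of Proposition~\ref{prop:fam}): for every $d \leq Q-2$ and every $\lambda \in \F_Q$, we have $\sum_{\alpha \in \F_Q} \Tr(\lambda \alpha^d) = \Tr\bigl(\lambda \sum_{\alpha} \alpha^d\bigr) = \Tr(0) = 0$, so $\Tr(\lambda x^d) \in \calf_2$, and hence $d \in \Deg(\calf_2)$.

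For the second equality, I would simply apply Proposition~\ref{prop:univ-lift-deg} to the first equality. That proposition asserts $\Deg(\Lift_m(\calf_2)) = \Lift_m(\Deg(\calf_2))$, and unpacking Definition~\ref{def:univ-lift-deg} with $D = \{0,\ldots,Q-2\}$ gives exactly
\[
\{\vec d \in \{0,\ldots,Q-1\}^m \mid \forall \vec e \leq_p \vec d, \; \textstyle\sum_i e_i \modstar Q \in \{0,\ldots,Q-2\}\},
\]
which is the same as requiring $\sum_i e_i \modstar Q \neq Q-1$ for every $\vec e \leq_p \vec d$.

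There is no real obstacle here; the proof is essentially a direct translation of Claim~\ref{clm:deg-one} combined with an invocation of the general lifted-degree-set identity of Proposition~\ref{prop:univ-lift-deg}. The only minor point to be careful about is the reverse inclusion in the first part, where one must use the trace-monomial characterization of the degree set (Lemma~\ref{lem:monomial}) to account for the fact that $\calf_2$ consists of $\F_p$-valued (rather than $\F_Q$-valued) functions.
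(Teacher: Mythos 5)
Your proposal is correct and follows essentially the same route as the paper: the paper's proof simply cites the proof of Claim~\ref{clm:deg-one} together with the remark that it works over $\F_p$, and then invokes the lifted-degree-set identity (the paper cites Proposition~\ref{prop:lift-deg}, you cite the univariate version Proposition~\ref{prop:univ-lift-deg}, which is equally valid here since $\calf_2$ is univariate and indeed matches the vector notation in the claim's statement). Your extra care about the reverse inclusion via trace monomials is sound but is already implicitly handled by the same reasoning as in Claim~\ref{clm:deg-one}.
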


\begin{proof}
The first part follows from the proof of Claim~\ref{clm:deg-one}
(see the remark following the proof).
The second part follows immediately from Proposition~\ref{prop:lift-deg}.
\end{proof}

\begin{lemma}
\label{lem:dim-two}
The dimension of $\calL_2$ is at least $N - N^{1 - \epsilon'}$
for some $\epsilon' = \Omega(2^{-2/\epsilon})$.
\end{lemma}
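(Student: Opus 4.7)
By Claim~\ref{clm:deg-two} together with the proposition (just after Proposition~\ref{prop:fam}) asserting $\dim(\calf) = |\Deg(\calf)|$ for linear affine-invariant codes, it suffices to upper bound the size of the \emph{bad} set
\[
B \;=\; \{\vec d \in \{0,\ldots,Q-1\}^m \,:\, \exists\, \vec e \leq_p \vec d \text{ with } \textstyle\sum_i e_i \modstar Q = Q-1\}
\]
by $N^{1-\epsilon'}$. Since $\sum_i e_i \leq m(Q-1)$, the condition $\sum_i e_i \modstar Q = Q-1$ is equivalent to $\sum_i e_i \in \{Q-1, 2(Q-1), \ldots, m(Q-1)\}$.

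\textbf{Reduction to column sums.} Write each $d_i$ in base $p$ as $d_i = \sum_{j=0}^{\ell-1} d_i^{(j)} p^j$, so that $\vec d$ corresponds to an $m\times\ell$ matrix of digits in $\{0,\ldots,p-1\}$. The relation $\vec e \leq_p \vec d$ then becomes $e_i^{(j)} \leq d_i^{(j)}$ for all $i,j$. Setting $s_j := \sum_i e_i^{(j)}$ and the column sums $C_j := \sum_i d_i^{(j)}$, one has $s_j \in [0,C_j]$ and $\sum_i e_i = \sum_j s_j p^j$. Hence $\vec d \in B$ iff its column-sum profile $\vec C = (C_0,\ldots,C_{\ell-1})$ admits some nonzero $\vec s \leq \vec C$ with $\sum_j s_j p^j \in (Q-1)\Z_{>0}$, so membership in $B$ depends on $\vec d$ only through $\vec C$.

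\textbf{Union bound over witnesses.} For each fixed nonzero witness $\vec s$ with $\sum_j s_j p^j = k(Q-1)$ for some $k\in\{1,\ldots,m\}$, the number of $\vec d$'s with $\vec C(\vec d)\geq \vec s$ factorizes by column as $\prod_{j=0}^{\ell-1} A(s_j)$, where $A(s):=|\{(x_1,\ldots,x_m)\in\{0,\ldots,p-1\}^m : \sum_i x_i \geq s\}|$. The \emph{canonical} witness for the $k$-th multiple is $\vec s = (k(p-1), \ldots, k(p-1))$, contributing $A(k(p-1))^\ell \leq A(p-1)^\ell$. Using $A(p-1)\leq p^m(1-p^{-m})$ bounds this canonical contribution by $N(1-p^{-m})^\ell$; for $m = \lceil 1/\epsilon \rceil$ and $N = p^{m\ell}$ this is already $\leq N^{1 - \Omega(\epsilon\,p^{-m}/\log p)}$, comfortably within the claimed exponent.

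\textbf{Main obstacle.} The main difficulty is controlling the contribution from \emph{non-canonical} witnesses, which arise from the canonical one via carry shifts $(s_j, s_{j+1}) \to (s_j + p, s_{j+1} - 1)$ and their composites. There can be combinatorially many such witnesses, each contributing $\prod_j A(s_j)$ to the union bound. Showing that the total remains within $\epsilon' = \Omega(2^{-2/\epsilon})$ — rather than the stronger $\Omega(\epsilon\cdot 2^{-1/\epsilon})$ suggested by the canonical witness alone — requires exploiting the accelerated decay of $A(s)/p^m$ for $s>p-1$ to offset the growing number of non-canonical witnesses. The factor-of-2 loss in the exponent of $\epsilon'$ plausibly reflects the slack absorbed by this cancellation argument.
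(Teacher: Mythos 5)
Your proposal is incomplete, and you yourself flag the gap: the final ``Main obstacle'' paragraph admits that the argument is not finished and merely speculates (``plausibly reflects'') that the contribution of non-canonical witnesses can be absorbed. A proof cannot rest on a plausibility claim for its hardest step. Moreover, the obstacle you identify is real: once carries are allowed, the number of witnesses $\vec s$ with $\sum_j s_j p^j \in (Q-1)\Z_{>0}$ is large, and the product $\prod_j A(s_j)$ does not obviously decay fast enough to beat the combinatorial explosion. Without a concrete estimate showing that the total over all witnesses is $O(N^{1-\epsilon'})$, the union bound does not close.

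The paper sidesteps the witness-enumeration problem entirely by working with a simple \emph{sufficient} condition for a degree $\vec d$ to be \emph{good} (i.e., in $\Lift_m(D)$), rather than trying to upper-bound the bad set $B$ directly. Concretely, writing the digit matrix $(d_i^{(j)})_{i\in[m],\, j\in[\ell]}$, the paper observes (Claim~\ref{clm:lift-deg-two}) that if some cyclic window of $t = 1+\lceil\log m\rceil$ consecutive columns is identically zero, then $\vec d$ is good: by the $q$-shift closure of degree sets (Proposition~\ref{prop:q-shift}) one may rotate that zero window into the most significant positions, which forces every $e_i \le_p d_i$ to satisfy $e_i < (Q-1)/m$, hence $\sum_i e_i < Q-1$ and no wraparound modulo $Q-1$ can ever occur. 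This eliminates all the carry bookkeeping your approach is struggling with. The count then becomes trivial: partition the $\ell$ columns into $\ell/t$ fixed disjoint blocks; the number of digit matrices in which \emph{no} block is all-zero is $(p^{mt}-1)^{\ell/t} \le N^{1-\epsilon'}$ with $\epsilon' = \Omega(1/(mt\,p^{mt}))$. The crucial point you are missing is the use of $q$-shift invariance to turn the ``exists a zero window somewhere'' condition into a ``most significant digits are zero'' condition, which is what makes the sufficient condition both easy to verify (no modular arithmetic) and easy to count (independence across blocks). If you want to salvage your union-bound route, you would need a sharp uniform estimate on $\sum_{\vec s} \prod_j A(s_j)$ over all carried witnesses; the paper's route avoids this altogether and is the intended proof.
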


\begin{proof}
Let $D = \Deg(\calf_2)$.
Let $\vec e = \langle e_1,\ldots,e_m \rangle$ and
$e_i^{(0)},e_i^{(1)},\ldots,e_i^{(\ell-1)}$ denote the
$p$-ary expansion of $e_i$.

\begin{claim}
\label{clm:lift-deg-two}
If there exists integer $s \in \{0,\ldots,\ell-1\}$
such that for every $i\in [m]$ and every $j \in [1 + \lceil \log m \rceil ]$ we have
$e_i^{(s + j \pmod \ell)} = 0$, then
$\vec e \in \Lift_m(D)$.
\end{claim}

\begin{proof}
Recall, by Proposition~\ref{prop:q-shift} that
$\vec e \in \Lift_m(D)$ if and only if $\vec{e'} \in D$
for every $\vec{e'}$ that is a $p$-shift of $\vec e$.
Thus without loss of generality we can
assume (by shifting $\vec e$ appropriately), that the block of
zeroes are the most significant digits in the $e_i$'s. (i.e., $s = \ell -
\lceil \log m \rceil - 2$.)

With this assumption, we now have $e_i < p^{\ell - \log m - 1}
< Q/(pm) < (Q-1)/m$. We thus conclude that for
every $\vec f \leq_2 \vec e$, $\sum_{i=1}^m f_i
\leq \sum_{i=1}^m e_i < Q-1$ and so (by Claim~\ref{clm:deg-two})
$\vec e \in \Lift_m(D)$.
\end{proof}

The lemma follows by an easy counting argument. Let $t = 1 + \lceil \log m
\rceil$.
We partition the
set $[\ell]$ into $\ell/t$ blocks of $t$ successive integers each.
For each such block the number of possible assignments of digits that
do not make the entire block zero in each $e_i$ is
$p^{mt} - 1$. Thus the total number of vectors $\vec e$ that do not
have any of these blocks set to zero is
$(p^{mt} - 1)^{\ell/t} = p^{m\ell}(1 - p^{-mt})^{\ell/t}
\approx p^{m \ell} e^{\ell/(tp^{mt})} = p^{m \ell (1 - \Omega(1/mt p^{mt}))}
= N^{1 - {\epsilon'}}$ for
$\epsilon' = 1/({mt} p^{mt})$. Recalling that $\epsilon = 1/m$,
we have $\epsilon' = \Omega(p^{-2/\epsilon})$.
The lemma follows by noting that if $\vec e \not\in \Lift_m(D)$
then in each of these blocks it must be non-zero somewhere
(by Claim~\ref{clm:lift-deg-two} above).
\end{proof}

\begin{proof}[Proof of Theorem~\ref{thm:two}]
Theorem~\ref{thm:two} follows immediately from
Proposition~\ref{prop:length-two} and Lemma~\ref{lem:dim-two}.
\end{proof}

\subsection{Codes of linear locality}
\label{ssec:three}

Finally, we prove Theorem~\ref{thm:three}, which claims codes of locality
$\epsilon N$ with dimension $N - \poly\log N$.
This construction is different from the previous two in that here we
lift a multivariate code, whereas in both previous constructions we lifted
univariate codes.

\paragraph{The Code:}
Let $\ell = \lceil \log 1/\epsilon \rceil$ (so that $2^{-\ell}
\leq \epsilon$). Let $Q = 2^{\ell}$.
For integer $m$ let $N = 2^{m \ell}$
and let $t = m-1$.
Let $\calf_3 \subseteq \{\F_Q^t \to \F_2\}$
be given by
$\calf_3 = \{f:\F_Q^t \to \F_2 \mid \sum_{\vec \alpha \in \F_Q^t}
f(\vec \alpha) = 0 \}$.
Let $\calL_3 = \Lift_m(\calf_3)$.

\begin{proposition}
$\calL_3$ is a code of block length $N$ with locality
$\epsilon N$. Specifically, it is a $(\epsilon N, \frac13(\epsilon N)^{-1})$-LCC
and a $(\epsilon N, (\epsilon N){-2},(\epsilon N)^{-1})$-LTC.
\end{proposition}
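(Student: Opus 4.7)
The plan is to derive both locality claims from Proposition~\ref{prop:liftlcc} applied to the base code $\calf_3$, after verifying that $\calf_3$ is a proper linear affine-invariant subcode of $\{\F_Q^t \to \F_2\}$, and then translating the output parameters via the key relation $Q^t = N/Q \leq \epsilon N$ forced by the choice $t = m-1$ and $Q = 2^\ell \geq 1/\epsilon$.

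First I would verify the structural hypotheses on $\calf_3$ so that Proposition~\ref{prop:liftlcc} applies. Linearity is immediate from the defining equation $\sum_{\vec\alpha \in \F_Q^t} f(\vec\alpha) = 0$. Affine-invariance holds because for any affine permutation $A:\F_Q^t \to \F_Q^t$ and any $f \in \calf_3$, $\sum_{\vec\alpha} (f \circ A)(\vec\alpha) = \sum_{\vec\beta} f(\vec\beta) = 0$ since $A$ merely permutes $\F_Q^t$. Properness ($\calf_3 \subsetneq \{\F_Q^t \to \F_2\}$) holds because the indicator function of a single point has coordinate sum equal to $1 \neq 0$ in $\F_2$. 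The block length of $\calL_3$ is $|\F_Q^m| = Q^m = 2^{m\ell} = N$, as claimed.

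The main step is then to invoke Proposition~\ref{prop:liftlcc}, which yields that $\calL_3$ is a $(Q^t - 1, \tfrac{1}{3} Q^{-t})$-LCC and a $(Q^t, \Omega(Q^{-2t}), Q^{-t})$-LTC. With $t = m-1$ and $Q = 2^\ell \geq 1/\epsilon$, I substitute to obtain $Q^t = N/Q \leq \epsilon N$, hence also $Q^{-t} \geq (\epsilon N)^{-1}$ and $Q^{-2t} \geq (\epsilon N)^{-2}$. A $(k,\delta)$-LCC remains one for any $k' \geq k$ and $\delta' \leq \delta$ (the decoder just ignores extra query budget, and tolerating smaller error is easier); analogously, a $(k,\eta,\delta)$-LTC remains one for $k' \geq k$, $\eta' \leq \eta$, and $\delta' \leq \delta$ (here also using that a code of distance $\delta$ is trivially a code of distance $\delta' \leq \delta$). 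Chaining these monotonicities with the inequalities above upgrades the parameters to $(\epsilon N, \tfrac{1}{3}(\epsilon N)^{-1})$-LCC and $(\epsilon N, \Omega((\epsilon N)^{-2}), (\epsilon N)^{-1})$-LTC, matching the proposition.

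There is essentially no obstacle: the statement is a direct corollary of the already-proved Proposition~\ref{prop:liftlcc}, and the only arithmetic to check is that the exponent choice $t = m - 1$ (so that one ``loses'' exactly one factor of $Q$ relative to the full dimension $m$) translates the generic $Q^{-t}$-scale locality into the desired $(\epsilon N)^{-1}$-scale locality. The rest is routine parameter bookkeeping.
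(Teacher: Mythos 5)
Your proposal is correct and takes the same route the paper (implicitly) takes: the statement is a direct corollary of Proposition~\ref{prop:liftlcc} applied to $\calf_3$, with the parameter translation $Q^t = Q^{m-1} = N/Q \le \epsilon N$ coming from the choice $t = m-1$ and $Q = 2^\ell \ge 1/\epsilon$. The paper does not spell out the verification that $\calf_3$ is a proper linear affine-invariant subcode or the monotonicity of the LCC/LTC parameters, but your checks of these are accurate and fill in exactly what is left implicit.
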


The proposition below asserts that
every degree except the vector that is $Q-1$ in every
coordinate is in the degree set of $\calf_3$.
(Here $(Q-1)^t$ denotes the $t$-tuple all of whose entries is $Q-1$, rather than
$(Q-1)$ exponentiated to the $t$-th power).

\begin{proposition}
$\Deg(\calf_3) = \{0,\ldots,Q-1\}^t - \{(Q-1)^t\}$.
\end{proposition}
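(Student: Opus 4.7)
The plan is to compute the sum $\sum_{\vec{\alpha} \in \F_Q^t} f(\vec{\alpha})$ explicitly in terms of the $\F_Q$-polynomial representation of $f$, and then read off both containments.

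Writing $f(\vec{x}) = \sum_{\vec{d}} f_{\vec{d}} \vec{x}^{\vec{d}}$ with $f_{\vec{d}} \in \F_Q$ (using degrees below $Q$ in each variable), the sum factors as
$$ \sum_{\vec{\alpha} \in \F_Q^t} f(\vec{\alpha}) = \sum_{\vec{d}} f_{\vec{d}} \prod_{i=1}^{t} \Big(\sum_{\alpha_i \in \F_Q} \alpha_i^{d_i}\Big). $$
Using the univariate character-sum identity already invoked in the proof of Claim~\ref{clm:deg-one}, namely that $\sum_{\alpha \in \F_Q} \alpha^d$ equals $-1$ when $d = Q-1$ and $0$ otherwise, the product is nonzero only when every $d_i = Q-1$, in which case it equals $(-1)^t$. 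Thus in characteristic two we simply get $\sum_{\vec{\alpha}} f(\vec{\alpha}) = f_{(Q-1)^t}$.

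For the forward inclusion, I would observe that any $f \in \calf_3$ satisfies $f_{(Q-1)^t} = 0$ by the above identity, so $(Q-1)^t$ lies in no support, hence $(Q-1)^t \notin \Deg(\calf_3)$.

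For the reverse inclusion, I would fix $\vec{d} \in \{0,\ldots,Q-1\}^t \setminus \{(Q-1)^t\}$ and use the monomial characterization (Lemma~\ref{lem:monomial}), which reduces membership of $\vec{d}$ in $\Deg(\calf_3)$ to showing that $\Tr(\lambda \vec{x}^{\vec{d}}) \in \calf_3$ for every $\lambda \in \F_Q$. Since $\vec{d} \neq (Q-1)^t$, some factor $\sum_{\alpha_i} \alpha_i^{d_i}$ vanishes, so $\sum_{\vec{\alpha}} \lambda \vec{\alpha}^{\vec{d}} = 0$ in $\F_Q$. Applying the trace and using its $\F_2$-linearity gives $\sum_{\vec{\alpha}} \Tr(\lambda \vec{\alpha}^{\vec{d}}) = \Tr(0) = 0$, so $\Tr(\lambda \vec{x}^{\vec{d}}) \in \calf_3$, completing the proof.

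There is no real obstacle; the only mildly delicate point is keeping straight that although $f$ takes values in $\F_2$, its polynomial coefficients live a priori in $\F_Q$, and the parity-check constraint only pins down the single top-degree coefficient $f_{(Q-1)^t}$. Once that is noted, the proposition is an immediate consequence of the character-sum identity combined with Lemma~\ref{lem:monomial}.
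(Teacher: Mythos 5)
Your core computation is identical to the paper's: factor the sum over $\F_Q^t$, invoke $\sum_{\alpha \in \F_Q} \alpha^d = -1$ if $d = Q-1$ and $0$ otherwise, and conclude $\sum_{\vec\alpha} f(\vec\alpha) = (-1)^t f_{(Q-1)^t} = f_{(Q-1)^t}$ in characteristic two. The paper's proof stops there, declaring $f \in \calf_3 \iff f_{(Q-1)^t} = 0$, and the degree-set identity is left implicit (it follows from the general machinery: Proposition~\ref{prop:fam} plus the proposition that $\dim \calf = |\Deg(\calf)|$). You spell out both inclusions explicitly, which is a reasonable elaboration.

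There is, however, a gap in your reverse inclusion. Lemma~\ref{lem:monomial} asserts only the forward implication $\vec d \in \Deg(\calf) \Rightarrow \Tr(\lambda \vec x^{\vec d}) \in \calf$ for all $\lambda$; you need the converse. Knowing that $\Tr(\lambda \vec x^{\vec d}) \in \calf_3$ for every $\lambda$ does not by itself put $\vec d$ in $\Deg(\calf_3)$: you must additionally argue that for some $\lambda \in \F_Q$ the monomial $\vec x^{\vec d}$ actually survives (has nonzero coefficient) in $\Tr(\lambda \vec x^{\vec d})$ after reduction mod $\vec x^Q - \vec x$. This is true — take $\lambda$ with $\Tr_{\F_Q/\F_{q^{b(\vec d)}}}(\lambda) \neq 0$, where $b(\vec d)$ is the orbit size of $\vec d$ under $q$-shifts — but it is a separate fact, and indeed it is the content of the argument inside the proof of Lemma~\ref{lem:monomial}, not its statement. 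A cleaner patch that avoids the issue entirely: the parity constraint gives $\dim_{\F_2} \calf_3 = Q^t - 1$, and $|\Deg(\calf_3)| = \dim \calf_3$ by the proposition after Proposition~\ref{prop:fam}; since your forward inclusion shows $\Deg(\calf_3) \subseteq \{0,\ldots,Q-1\}^t \setminus \{(Q-1)^t\}$, which also has size $Q^t - 1$, equality is forced. (To be fair, the paper itself invokes Lemma~\ref{lem:monomial} as if it were an iff in the proof of Proposition~\ref{prop:univ-lift-deg}, so your usage follows that informal convention; but as written it leaves a hole.)
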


\begin{proof}
Write $f : \F_Q^t \to \F_2$ as
$f(\vec x) = \sum_{\vec d \in \{0,\ldots,Q-1\}^t} f_{\vec d} \vec x^{\vec d}$.
Then
$\sum_{\alpha \in \F_Q^t} f(\alpha)
= \sum_{\alpha \in \F_Q^t} \sum_{\vec d} f_{\vec d} \vec x^{\vec d}
= \sum_{\vec d} f_{\vec d} \left( \sum_{\alpha \in \F_Q^t} \alpha^{\vec d} \right)
= \sum_{\vec d} f_{\vec d} \prod_{i=1}^t \left( \sum_{\alpha \in \F_Q} \alpha^{d_i} \right)
= (-1)^t f_{(Q-1)^t}
$
where we have used the fact that $\sum_{\alpha \in \F_Q} \alpha^d = -1$ if
$d = Q-1$ and is equal to $0$ otherwise. Therefore $f \in \calf_3$ if and only if
$f_{(Q-1)^t} = 0$.
\end{proof}

While in general degree sets of lifts of multivariate families
are not easy to characterize, in this particular case we have
a clean characterization of the degree set.

Given $\vec e = \langle e_1,\ldots,e_m \rangle$ let
$e_i^{(j)}$ denote the $j$th bit in the binary expansion
of $e_i$. Let $M(\vec e)$ denote the $m \times \ell$
matrix with entries $M(\vec e)_{i,j} = e_i^{(j)}$.

\begin{lemma}
$\vec e \in \Lift_m(\Deg(\calf_3))$ if and only if
there exists a column in $M(\vec e)$ with at least
two zeroes.
\end{lemma}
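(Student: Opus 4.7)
The plan is to apply Proposition~\ref{prop:lift-deg} so that $\vec e \in \Lift_m(\Deg(\calf_3))$ becomes equivalent to: for every $\vec E \in \Z^{m \times t}$ with $\vec E \le_2 \vec e$, the column-sum vector $\Sigma(\vec E) \modstar Q$ is \emph{not} equal to $(Q-1)^t$. Unwinding $\vec E \le_2 \vec e$ via Lemma~\ref{lemma:genlucas} (with $p=2$), the matrices to consider are exactly those obtained by a bit-coloring: for each row $i$, assign each binary digit $j'$ of $e_i$ to one of $t+1$ slots (``columns'' $1,\dots,t$, or ``residual'' $R$), set $E_{ij}$ to be the sum of the powers of $2$ whose bits got colored $j$, and let $R_i$ collect the rest. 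The theorem becomes: can we bit-color the $e_i$'s so that every column sum $\Sigma_j(\vec E) = \sum_i E_{ij}$ is a nonzero multiple of $Q-1 = 2^\ell-1$?

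For the ``only if'' direction (no column of $M(\vec e)$ has two zeros $\Rightarrow \vec e \notin \Lift$), the weight $w_{j'} := \sum_i e_i^{(j')}$ is $\ge t$ at each bit position, so for each $j'$ we may fix an injection $\pi_{j'}\colon [t]\hookrightarrow \{i : e_i^{(j')}=1\}$; assigning bit $j'$ of $e_{\pi_{j'}(j)}$ to column $j$ and sending every unassigned bit to its row's residual yields a valid $\vec E \le_2 \vec e$ with $\Sigma_j(\vec E) = \sum_{j'=0}^{\ell-1} 2^{j'} = Q-1$ for every $j$. Hence $\Sigma(\vec E) \modstar Q = (Q-1)^t$ and $\vec e \notin \Lift$.

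For the ``if'' direction, suppose some column $j^*$ satisfies $w_{j^*} \le t-1$ and, for contradiction, fix a valid $\vec E$ with $\Sigma_j(\vec E) = c_j(Q-1)$, $c_j \ge 1$. I would use two bounds in sequence. The first is a global mass bound: $\sum_j c_j (Q-1) = \sum_i (e_i - R_i) \le \sum_i e_i = \sum_{j'} w_{j'}2^{j'} \le m(Q-1) - 2^{j^*+1}$, which either contradicts $\sum_j c_j \ge t$ outright (when $j^* = \ell-1$, since then $2^{j^*+1} > Q-1$) or forces $c_j = 1$ for every $j$. The second is a low-bit refinement: with each $\Sigma_j(\vec E) = Q-1$, reduction mod $2^{j^*+1}$ gives $\sum_i (E_{ij} \bmod 2^{j^*+1}) \equiv -1 \pmod{2^{j^*+1}}$ and hence $\ge 2^{j^*+1}-1$. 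Summing over $j$ and using the disjoint-support inequality $\sum_j (E_{ij} \bmod 2^{j^*+1}) \le e_i \bmod 2^{j^*+1}$ (a direct consequence of the Lucas reformulation restricted to bit positions $\le j^*$), I get $\sum_{j'\le j^*} (w_{j'}-t)\,2^{j'} \ge 0$; but with $w_{j^*} \le t-1$ and $w_{j'} \le t+1$ otherwise, the left-hand side is at most $(2^{j^*}-1) - 2^{j^*} = -1$, a contradiction.

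The main obstacle is the ``if'' direction: the crude total-mass inequality alone is sharp only when $j^*$ is large, so one really needs the modular/low-bit argument, and this in turn is clean only once the step $c_j = 1$ has been established. The bit-coloring/disjoint-support reformulation of $\le_2$ afforded by Lemma~\ref{lemma:genlucas} is what makes both the positive construction and the low-bit bound tractable.
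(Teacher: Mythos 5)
Your proof is correct, and it takes a genuinely different route from the paper's for the ``if'' direction, while supplying an explicit argument for the ``only if'' direction that the paper leaves implicit. The paper's proof first invokes $q$-shift closure (Proposition~\ref{prop:q-shift}) to normalize so that the column with two zeroes is the \emph{most significant} bit column; after that normalization, two of the $e_i$ are below $Q/2$ and the rest below $Q$, so $\sum_i e_i \le (m-1)(Q-1)-1 < t(Q-1)$, and the single crude mass inequality $\sum_j \Sigma_j(\vec E) \le \sum_i e_i$ already rules out $\Sigma(\vec E)\modstar Q = (Q-1)^t$ (forcing at least one column sum to fail to be a positive multiple of $Q-1$). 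That normalization makes the low-bit/modular refinement in your argument unnecessary; you pay for working with a general column $j^*$ by needing both the mass bound (which forces $c_j=1$) and the refined bound $\sum_{j' \le j^*}(w_{j'}-t)2^{j'}\ge 0$. Both are valid, but the shift trick is the cleaner path and is what the paper intends. Conversely, your ``only if'' construction --- choosing per-bit-position injections $\pi_{j'}\colon [t]\hookrightarrow\{i: e_i^{(j')}=1\}$ and assembling a witness $\vec E$ with every column sum exactly $Q-1$ --- is a nice explicit argument that the paper does not spell out at all. (Incidentally, the paper's write-up contains a sign slip, asserting at the end that ``$\vec e$ is \emph{not} in $\Lift_m(\Deg(\calf_3))$'' when the preceding computation in fact shows membership; your version has the directions straight.)
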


\begin{proof}
As in the proof of Lemma~\ref{lem:dim-two} we have that
$\vec e \in \Lift_m(\Deg(\calf_3))$ if and only if
$2\vec e \modstar Q \in \Lift_m(\Deg(\calf_3))$.
So without loss of generality we can assume that
$\vec e$ is shifted so that the two zeroes are in
the most significant bits. Thus we have that
$m-2$ of the $e_i$'s, say $e_1,\ldots,e_{m-2}$,
are at most $Q-1$ and the remaining two are at most
$Q/2 - 1$. We thus have that $\sum_{i=1}^m e_i
< (m-1)Q - 1$. Using this and applying Proposition~\ref{prop:lift-deg}
it is easy to verify that $\vec e$ is not in $\Lift_m(\Deg(\calf_3))$.
\end{proof}

The following lemma now follows by simple counting.

\begin{lemma}
The dimension of $\calL_3$ is $2^{m \ell} - (m+1)^{\ell}$.
\end{lemma}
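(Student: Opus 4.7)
The plan is to count the complement: by the preceding lemma, $\vec e \notin \Lift_m(\Deg(\calf_3))$ exactly when every column of the $m \times \ell$ binary matrix $M(\vec e)$ has at most one zero entry. Since $Q = 2^\ell$, the map $\vec e \mapsto M(\vec e)$ is a bijection between $\{0,\ldots,Q-1\}^m$ and binary $m \times \ell$ matrices, so counting degree vectors reduces to counting matrices with a prescribed column structure.

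First I would invoke the proposition (proved earlier in Section~\ref{sec:prelim}) that the dimension of a linear affine-invariant code equals the size of its degree set, so $\dim \calL_3 = |\Deg(\calL_3)| = |\Lift_m(\Deg(\calf_3))|$ by Proposition~\ref{prop:lift-deg}. Next I would count the ``excluded'' vectors: by the previous lemma, $\vec e$ is excluded from $\Lift_m(\Deg(\calf_3))$ if and only if every one of the $\ell$ columns of $M(\vec e)$ has at most one zero among its $m$ entries. For a single column of length $m$ there are exactly $m+1$ binary vectors with at most one zero (namely the all-ones vector together with the $m$ vectors having exactly one zero), and columns are independent, so the total number of excluded vectors is $(m+1)^\ell$.

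Subtracting from the total count $|\{0,\ldots,Q-1\}^m| = Q^m = 2^{m\ell}$ yields
\[
\dim \calL_3 = |\Lift_m(\Deg(\calf_3))| = 2^{m\ell} - (m+1)^\ell,
\]
as claimed. There is essentially no obstacle here: the only step that required real work was the characterization of $\Lift_m(\Deg(\calf_3))$ in terms of columns of $M(\vec e)$, which has already been established; the present lemma is a pure counting corollary.
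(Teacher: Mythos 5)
Your proof is correct and is exactly the ``simple counting'' the paper alludes to without writing out: use the dimension-equals-degree-set-size proposition, then count the complement column-by-column, where each of the $\ell$ columns of $M(\vec e)$ independently has $m+1$ admissible patterns (all ones or exactly one zero), giving $(m+1)^\ell$ excluded vectors out of $2^{m\ell}$. Nothing to add.
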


\begin{proof}[Proof of Theorem~\ref{thm:three}]
Follows by plugging in the
values for the parameters, specifically by setting
$\ell = \log 1/\epsilon$ and $m = (\log N/\log 1/\epsilon)$.
We get that the dimension of $\calL_3$ is $N - (1 + \log N/\log 1/\epsilon)^{
\log 1/\epsilon}$.
\end{proof}

We remark that the construction in \cite{BGHMRS} is very close in
parameters. In their construction (i.e., the Reed-Muller codes)
the matrix $M(\vec e)$ must have at least $\ell+1$
zeroes. Since any such matrix must have two zeroes in a single column
it follows that every matrix their construction admits is also admissible in
ours, while our allow for other matrices also. However the difference
between the length and dimension is at most a constant factor
(depending on $\ell$). (More precisely, the dimension of their code
is $2^{m \ell} - \sum_{i=0}^\ell {m \ell \choose i} \approx 2^{m\ell}
- (e m)^{\ell}$.) Of course, for their application the code needs to
have much better local testability than given here. But the local
testability given
here is just what follows immediately from the definition and previous
works, and it is quite possible that better bounds can be achieved by
more careful examination of this code.

\subsection{High-rate high-error LCCs}

Finally, we prove Theorem~\ref{thm:four}.
This construction is a departure from the others in that the
code is not binary, and the code being lifted is not the parity
check code. Finally the decoding algorithm is a bit more complex
to explain, though even this algorithm is by now folklore.

The code itself is a generalization of the classical multivariate polynomial code.
Here we consider the set of all functions $f: \F_q^m \to \F_q$ such that
the restriction of $f$ to any line has degree $d$. As is well known,
every multivariate polynomial of degree at most $d$ is such a function.
The remarkable fact is that if $q$ has small characteristic, then there are
many more such functions.

\paragraph{The Code:}
Recall that we are given $\delta$, $\epsilon$ and some $N_0$ and
we wish a code of length $N \geq N_0$ of dimension $(1-\epsilon)N$
and locality $N^{\delta}$.
Let $m = \lceil 1/\delta \rceil$ and $s$ be such that
$Q = 2^s \geq N_0^{\delta}$.
Let $b = 1 + \lceil \log m \rceil$ and
$c = \lceil b 2^{bm} \log 1/\epsilon \rceil$.
Let $\gamma = 2^{-c}$ and $\tau = \gamma/6$ (so
that
$6\tau \le \gamma \le \epsilon^{-(1 + \lceil \log m \rceil)2^{m(1 + \lceil \log m
\rceil)}}$
and let $d = (1 - 2^{-c})Q$.
Let $\calf_4 =
\{f : \F_Q\to \F_Q \mid \deg(f) \leq d\}$.
Let $\calL_4 = \Lift_m(\calf_4)$.
In words, it is the set of all degree $m$-variate functions that have degree at most 
$d$ when restricted to a line.

\paragraph{Decoding:}

The general idea for decoding $\calL_4$ is the same as that for
multivariate polynomials, and in particular the algorithm from
Gemmell et al.~\cite{GLRSW}.

Given $f:\F_Q^m \to \F_Q$ that is $\tau$-close to $p \in \calL_4$
and $\vec a \in \F_Q^m$, the decoding algorithm works as follows:
Pick a random $\vec b \in \F_Q^m$ and let $h: \F_Q \to \F_Q$
be given by $h(t) = f(\vec a + t \vec b)$. Compute, using a
Reed-Solomon decoder (see, for instance,~\cite[Appendix]{GemmellS}),
a polynomial
$g \in \F_Q[t]$ of degree at most $d$ such that $\delta(h,g) <
\gamma/2$. Output $g(0)$.

\begin{lemma}
\label{lemma:lcc-four}
$\calL_4$ is a code of block length $N$ with locality
$N^{\delta}$. Specifically, it is a $(N^{\delta}, \gamma/6)$-LCC.
\end{lemma}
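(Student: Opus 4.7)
The plan is to adapt the classical line-restriction analysis of Gemmell et al.~\cite{GLRSW}. The block length and locality assertions are immediate: $\calL_4 \subseteq \{\F_Q^m \to \F_Q\}$ has length $Q^m$, which by the choice of $Q \geq N_0^\delta$ and $m = \lceil 1/\delta \rceil$ meets the length requirement, and each affine line queried by the decoder consists of $Q \leq N^\delta$ points, matching the query bound.

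For correctness, fix $\vec a \in \F_Q^m$ and a codeword $p \in \calL_4$ with $\delta(f,p) \leq \tau$. Write $\ell(t) = \vec a + t \vec b$ for a uniformly random $\vec b \in \F_Q^m$, so that $h(t) = f(\ell(t))$. First I would observe that by the very definition of $\Lift_m$, the restriction $p \circ \ell$ is a codeword of $\calf_4$, i.e., a univariate polynomial of degree at most $d$. Next I would bound the expected agreement: for each fixed $t \neq 0$ the point $\ell(t)$ is uniformly distributed over $\F_Q^m$, so $\Pr_{\vec b}[h(t) \neq (p\circ\ell)(t)] \leq \tau$; combined with the trivial bound of $1$ at $t = 0$, this yields $\E_{\vec b}[\delta(h, p\circ\ell)] \leq \tau + 1/Q$. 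Markov's inequality then gives, with probability at least $2/3$ over $\vec b$, that $\delta(h, p\circ\ell) < 3\tau + 3/Q$, and for $\tau = \gamma/6$ with $Q$ sufficiently large this is strictly below $\gamma/2$.

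In this good event, since the minimum distance of $\calf_4$ equals $1 - d/Q = \gamma$, the polynomial $p\circ\ell$ is the unique degree-$\leq d$ polynomial at distance $< \gamma/2$ from $h$; thus the Reed--Solomon decoder returns $g = p\circ\ell$, and its output $g(0) = p(\vec a)$ is correct. The only subtlety is that the constants are tight: with $\tau = \gamma/6$ one has $3\tau = \gamma/2$ on the nose, so one must appeal either to the strictness of Markov (using that $\delta(h,p\circ\ell)$ lies in $\tfrac{1}{Q}\Z$) or absorb the $1/Q$ slack by enlarging $s$ beyond the minimum required for the block length. I do not expect any deeper obstacle, since the argument invokes nothing about $\calL_4$ beyond its defining degree bound on line restrictions.
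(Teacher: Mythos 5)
Your approach is the same as the paper's: restrict to a random line through $\vec a$, observe the restriction of the true codeword is a low-degree univariate polynomial, bound the expected line-error rate, apply Markov, and run a Reed--Solomon unique decoder. The one place you diverge is in how the $t = 0$ query is charged, and your version of that step does not actually close.

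You include $t=0$ with a trivial error bound of $1$, arriving at $\E[\delta(h,p\circ\ell)] \le \tau + 1/Q$ and hence, after Markov, $\delta(h,p\circ\ell) < 3\tau + 3/Q = \gamma/2 + 3/Q$ with probability $\ge 2/3$. You then assert this is ``strictly below $\gamma/2$'' for $Q$ large — but it is not: $\gamma/2 + 3/Q > \gamma/2$ for every $Q$, and since $\gamma = 2^{-c}$ and $\tau = \gamma/6$ are fixed independently of $Q$, enlarging $s$ never helps. Your second suggested fix (discreteness of $\delta$) also does not help, since $\gamma Q/2 = 2^{s-c-1}$ is itself an integer, so rounding buys nothing. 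The paper avoids this entirely by counting errors only over $t \ne 0$: there $\ell(t)$ is uniform, so the expected fraction of corrupted nonzero-$t$ points is $\le \tau$ with no $1/Q$ additive term, and Markov directly gives $\Pr[\text{errors on } L \ge \gamma/2] \le 2\tau/\gamma = 1/3$. The decoder correspondingly should not trust $h(0)$ (treat it as an erasure, exactly as the generic decoder in Proposition~\ref{prop:liftlcc} does by setting $h(\vec 0)$ arbitrarily); with $< \gamma Q/2$ errors among the $Q-1$ nonzero positions plus one erasure, unique Reed--Solomon decoding still recovers $p\circ\ell$ since $2e + 1 < \gamma Q = Q - d$. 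So the fix you need is not a different constant or a larger $s$, but simply to exclude the $t=0$ query from the error budget. With that change the rest of your argument is exactly the paper's.
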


\begin{proof}
Let $L = \{\vec a + t \vec b \mid t \in \F_Q - \{0\}\}$ be the line through
$\vec a$ with slope $\vec b$.
We first claim that with probability at least $2/3$, the line
$L$ contains fewer that $\gamma/2$ fraction errors (i.e.,
points $t \ne 0$ such $h(t) \ne p|_L(t)$).

\begin{claim}
For every $\vec a$, $\Pr_{\vec b} [ \delta(h,p|_L)  \geq \gamma/2 ]
< 2\tau/\gamma$.
\end{claim}

The above claim follows easily from an application of Markov's
inequality.
Next we note that if the fraction of errors on $L$ is less than
$\gamma/2$ then the decoder satisfies $g= p|_L$ and so outputs
$g(0) = p|_L(0) = p(\vec a)$ as desired.
\end{proof}

Next we turn to the analysis of the dimension of $\calL_4$ which
is similar to the analysis of $\calL_2$.
First we note the obvious fact.

\begin{proposition}
$\Deg(\calf_4) = \{0,\ldots,d\}$ and
$$\Deg(\calL_4) = \left\{\vec d \in \{0,\ldots,Q-1\}^m ~|~ \forall~\vec e \leq_2
\vec d, \sum_{i=1}^m e_i \modstar Q \in \{0,\ldots,d\}\right\}.$$
\end{proposition}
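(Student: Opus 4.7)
The plan is to reduce both claims directly to the machinery developed in Section~\ref{sec:prelim}. The first equality is essentially by definition; the second follows by observing that $\calf_4$ is a \emph{univariate} linear affine-invariant code and invoking Proposition~\ref{prop:univ-lift-deg} (the univariate lift-degree identity), then unfolding Definition~\ref{def:univ-lift-deg} in the specific setting $p=2$.

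First I would dispose of $\Deg(\calf_4) = \{0,\ldots,d\}$. By definition $\calf_4$ is the $\F_Q$-span of the monomials $\{x^0, x^1, \ldots, x^d\}$, so each such monomial lies in $\calf_4$ (giving the $\supseteq$ direction), while no element of $\calf_4$ has a nonzero coefficient on any $x^e$ with $e > d$ (giving $\subseteq$). I would also note that $\calf_4$ is linear and affine-invariant: for any affine permutation $x \mapsto \alpha x + \beta$ of $\F_Q$ (with $\alpha \neq 0$), composition preserves degree, so $f \circ A \in \calf_4$ whenever $f \in \calf_4$. This verifies the hypotheses of Proposition~\ref{prop:univ-lift-deg}.

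Next I would apply Proposition~\ref{prop:univ-lift-deg} (with $t = 1$, range field $\F_Q$, characteristic $p = 2$, since $Q = 2^s$) to get
\[
\Deg(\calL_4) = \Deg(\Lift_m(\calf_4)) = \Lift_m(\Deg(\calf_4)) = \Lift_m(\{0,\ldots,d\}).
\]
Finally, plugging $D = \{0, \ldots, d\}$ and $p = 2$ into Definition~\ref{def:univ-lift-deg} gives exactly the set
\[
\{\vec d \in \{0,\ldots,Q-1\}^m \mid \forall\, \vec e \leq_2 \vec d,\ \textstyle\sum_{i=1}^m e_i \modstar Q \in \{0,\ldots,d\}\},
\]
which matches the claim.

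There is no real obstacle: the statement is a direct instantiation of the general univariate-lift characterization applied to the base code $\calf_4$, and the only facts to check by hand are the trivial description of $\Deg(\calf_4)$ and the affine-invariance of degree-bounded univariate polynomials.
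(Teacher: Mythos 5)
Your proof is correct and takes exactly the approach the paper implicitly relies on (the paper simply asserts this as an ``obvious fact'' without proof): read off $\Deg(\calf_4)$ from the monomial basis, note that affine permutations preserve univariate degree so $\calf_4$ is a linear affine-invariant code, and then apply Proposition~\ref{prop:univ-lift-deg} together with Definition~\ref{def:univ-lift-deg} in characteristic $p=2$. The only subtlety you handle correctly is that here $q=Q$, so the trace in the proof of Proposition~\ref{prop:univ-lift-deg} is just the identity, and the hypothesis of that proposition is still satisfied.
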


\begin{lemma}
\label{lemma:dim-four}
The dimension of $\calL_4$ is at least $(1 - \epsilon)N$.
\end{lemma}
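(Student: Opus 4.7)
By the preceding proposition, $\Deg(\calL_4) = \{\vec d \in \{0,\ldots,Q-1\}^m : \forall\,\vec e \leq_2 \vec d,\ \sum_i e_i \modstar Q \leq d\}$, and the dimension of $\calL_4$ equals $|\Deg(\calL_4)|$, so the task reduces to showing that the number of \emph{bad} $\vec d$ (violating this condition) is at most $\epsilon N$. Since $d = (1 - 2^{-c})Q$ sits very close to $Q$, the forbidden window $(d, Q-1]$ has size only $\approx 2^{-c}Q$.

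The plan is to isolate a rich combinatorial sufficient condition for goodness and then count its complement. Writing each $d_i$ in binary over $s = \log Q$ bits, I claim $\vec d$ is good whenever there is a cyclic window of $b := 1 + \lceil \log m \rceil$ consecutive bit positions on which every coordinate $d_i$ has a zero. Indeed, by the $q$-shift closure of $\Deg(\calL_4)$ (Proposition~\ref{prop:q-shift}) we may cyclically shift $\vec d$ to a $\vec{d'}$ placing the zero window at the top $b$ bits, yielding $d'_i < Q/2^b \leq Q/(2m)$ (using $2^b \geq 2m$). Then for every $\vec e \leq_2 \vec{d'}$,
\[
\sum_i e_i \;\leq\; \sum_i d'_i \;<\; Q/2 \;\leq\; d,
\]
and this sum never wraps around modulo $Q-1$, so $\sum_i e_i \modstar Q = \sum_i e_i \leq d$.

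For the count, partition $\{0,\ldots,s-1\}$ into $\lfloor s/b \rfloor$ disjoint consecutive blocks of size $b$ (plus a remainder). A bad $\vec d$ must contain at least one nonzero bit across the $m$ coordinates in every full block, so, by independence across blocks, the number of bad $\vec d$ is at most
\[
2^{(s \bmod b)m}(2^{bm}-1)^{\lfloor s/b\rfloor} \;\leq\; N\,\exp\!\bigl(-\lfloor s/b\rfloor \cdot 2^{-bm}\bigr).
\]
As the theorem quantifies over infinitely many $N$, we take $s$ large---the condition $s \geq c = \lceil b\,2^{bm}\log(1/\epsilon)\rceil$ more than suffices to drive this count below $\epsilon N$---giving $\dim(\calL_4) \geq (1-\epsilon)N$.

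The main subtlety lies in the modular behaviour of $\modstar Q$: a $2$-shadow subset sum could, a priori, land in the forbidden window $(d, Q-1]$ by wrap-around rather than by native magnitude, which would require delicate case analysis. The exponential-in-$m$ gap engineered into $b$ is precisely what rules out any wrap-around after cyclic alignment, so the zero-window sufficient condition cleanly certifies lift-membership and the counting step reduces to a single geometric-series estimate.
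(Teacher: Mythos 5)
Your proof is correct and follows essentially the same argument as the paper: use the $q$-shift closure of $\Deg(\calL_4)$ to align an all-zero window of $b = 1 + \lceil\log m\rceil$ consecutive bit positions to the top $b$ bits, observe that every shadow sum is then bounded by $m\cdot 2^{s-b} \le 2^{s-1} \le d$ with no wrap-around, and count the bad degree vectors by partitioning bit positions into disjoint blocks of size $b$. The only minor variation is that you allow the zero window to occur anywhere cyclically and hence count over $\lfloor s/b\rfloor$ blocks of all $s$ bit positions, whereas the paper restricts the window to the top $c$ bits (so it can reason via a single zero bit in $e$'s top $c$ bits certifying $e \le d$) and counts over $c/b$ blocks; both versions deliver the same $(1-\epsilon)N$ bound once $s \ge c$.
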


\begin{proof}
For non-negative integer $b$, let $b^{(j)}$ denote its binary expansion
so that $b = \sum_j b^{(j)} 2^j$.
Recall $d = (1 - 2^{-c})Q$. Letting $d^{(j)}$ denoting its
binary expansion, we note an integer $e \in \{0,\ldots,Q-1\}$
is at most $d$ if (and only if) one of the bits $e^{(s-c)},\ldots,e^{(s-1)}$
is zero. We use this to reason about $\Deg(\calL_4)$.

Let $\vec d = \langle d_1,\ldots,d_m \rangle$ and let $d_i^{(j)}$
denote the $j$th bit in the binary expansion of $d_i$.

\begin{claim}
Let $b = 1 + \lceil \log m \rceil$.
If there exists $j \in \{s - c,\ldots,s-b\}$ such that
for every $i \in [m]$ and
every $\ell \in \{0,\ldots,b-1\}$
we have
$d_i^{(j+\ell)} = 0$,
then
$\vec d \in \Deg(\calL_4)$.
\end{claim}

\begin{proof}
Let $\vec e = \langle e_1,\ldots,e_m \rangle \le_2 \vec d$ and let
$e = \sum_{i=1}^m e_i \modstar{Q}$. We claim that $e^{(j+b-1)} = 0$,
which suffices to show that $e \le d$.
Let $\overline e_i = 2^{s-(j+b)}e_i \modstar{Q}$ for all $i \in [m]$,
and let $\overline e = \sum_{i=1}^m \overline e_i \modstar{Q}$.
For every $i \in [m]$ and every $k \in [s]$,
$e_i^{(k)} = \overline e_i^{(k+s-(j+b) \pmod{s})}$ and similarly
$e^{(k)} = \overline e^{(k+s-(j+b) \pmod{s})}$. Therefore it suffices to show
that $\overline e^{(s-1)} = 0$ or equivalently $\overline e < 2^{s-1}$.
By our assumption on $\vec d$,
$e_i^{(j+\ell)} = 0$ for all $\ell \in \{0,\ldots,b-1\}$, so
$\overline e_i^{(k)} = 0$ for all $k \in \{s-b,\ldots,s-1\}$ and
thus $\overline e_i < 2^{s-b}$ for all $i \in [m]$. By our choice of $b$,
$m \le 2^{b-1}$, and thus
$\sum_{i=1}^m \overline e_i < m2^{s-b} \le 2^{s-1}$.
\end{proof}

We now consider picking $\vec d$ at random.
By partitioning the $c$ most significant bits into
disjoint blocks of $b$ bits each, we get
that any such block is all zero with
probability at least $2^{-mb}$. Thus the probability there exists
a block which is all zero is at least $1 - (1 - 2^{-mb})^{c/b}
\geq 1 - e^{-c/(b2^{mb})}$.
By choice of $c$ we have that $c/(b2^{mb}) \geq \ln (1/\epsilon)$
and so $e^{-c/(b 2^{mb})} \leq \epsilon$ and thus the dimension
is lower bounded by $(1 - \epsilon)N$.
\end{proof}

\begin{proof}[Proof of Theorem~\ref{thm:four}]
Follows immediately from Lemmas~\ref{lemma:lcc-four}
and~\ref{lemma:dim-four}.
\end{proof}

We remark that the construction of this section is somewhat
contrary to folk belief, which tends to suggest that
generalized Reed-Muller
codes (evaluations of $m$-variate polynomials of degree at most $d$)
are equivalently defined by requiring that their restriction
to lines are Reed-Solomon codewords (evaluations of univariate
degree $d$ polynomials). As pointed out earlier this folk statement
is true only with some restrictions on $d$ and $Q$, and our construction
benefits by violating the restrictions. While the fact that there exist
functions that are not degree $d$ polynomials, for $d \geq Q - Q/p$,
which are degree $d$ polynomials on every line has been known for
a while~\cite{FriedlS}, presumably it was suspected that the
effect on the dimension of the lifted family was negligible. Fortunately for this work,
this presumption turned out to be false.

We also give below an example of some concrete setting of parameters
for which this construction works.

\begin{example}
For every $N = 2^{2n}$, for $n \geq 7$, there exists a code of 
length $N$ over the alphabet $\F_{2^n}$
of dimension $.77N$ that is 
decodable from $0.26\%$ fraction errors with $\sqrt{N}$ queries
\end{example}

The example is obtained by setting $c = 6$, $m=2$ and $Q = 2^n$
in the construction. The fraction of errors is $2^{-6}/6 \approx 0.26\%$.
The rate follows from the following claim.

\begin{claim}
The dimension of the code is at least $((4^c - (5/4) 3^c + 1/4)/4^c) N$.
\end{claim}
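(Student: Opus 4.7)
By Proposition~\ref{prop:lift-deg} together with $\Deg(\calf_4) = \{0,\ldots,d\}$ where $d = (1-2^{-c})Q$, a pair $\vec d = (d_1, d_2) \in \{0,\ldots,Q-1\}^2$ lies in $\Deg(\calL_4)$ iff every $\vec e \le_2 \vec d$ satisfies $e_1 + e_2 \modstar Q \le d$. Call $\vec d$ \emph{bad} otherwise; equivalently, some $\vec e \le_2 \vec d$ has $e_1 + e_2 \modstar Q$ whose binary expansion has its top $c$ bits all $1$ and some nonzero bottom bit. The claim reduces to bounding the bad count by $\bigl(\tfrac54 \cdot 3^c - \tfrac14\bigr) \cdot 4^{n-c}$.

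The plan is to decompose each $d_i = D_i \cdot 2^{n-c} + r_i$ into top $c$ bits $D_i \in \{0,\ldots,2^c-1\}$ and bottom bits $r_i \in \{0,\ldots,2^{n-c}-1\}$, and to count bad $(r_1,r_2)$ for each fixed top configuration $(D_1,D_2)$. Call $(D_1,D_2)$ \emph{covering} if every one of the $c$ top bit positions is a $1$ in at least one of $D_1, D_2$; there are exactly $3^c$ covering configurations (three admissible pairs $\{(0,1),(1,0),(1,1)\}$ per top position). For each covering configuration, the witness $\vec e = \vec d$ itself yields $e_1+e_2 \modstar Q > d$ whenever $(r_1,r_2)\ne(0,0)$, since the top $c$ bits of $e_1+e_2$ (possibly after a carry from the bottom) saturate to $\underbrace{1\cdots 1}_c$; thus covering configurations contribute at most $3^c \cdot 4^{n-c}$ to the bad count.

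For the $4^c - 3^c$ non-covering configurations, the top bits have at least one position with both $D_1^{(j)} = D_2^{(j)} = 0$, providing slack. Using $2$-shift closure of the degree set (Proposition~\ref{prop:q-shift}) to push the $00$-column to a canonical top position, and carefully tracking the carries that can cross the top-bottom boundary, I would bound the total non-covering contribution by $\tfrac14(3^c-1)\cdot 4^{n-c}$. Combining,
\[
|\{\text{bad }\vec d\}| \;\le\; 3^c \cdot 4^{n-c} + \tfrac14(3^c-1)\cdot 4^{n-c} \;=\; \bigl(\tfrac54\cdot 3^c - \tfrac14\bigr)\cdot 4^{n-c},
\]
giving the desired dimension bound $\bigl(4^c - \tfrac54\cdot 3^c + \tfrac14\bigr)/4^c \cdot N$.

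The principal obstacle is the non-covering case: configurations like $(D_1,D_2)=(2^{c-1},2^{c-1})$ do not cover all top bits yet can admit bad $(r_1,r_2)$ via sub-shadow witnesses such as $\vec e = (d_1, r_2)$ whose sum does not wrap and can still exceed $d$. A clean per-configuration bound would enumerate bad witnesses by the Hamming weight of $D_1 \wedge D_2$ (the top-bit overlap) and use the resulting binomial-like identities; an alternative approach would compute the weighted sum $\sum_{\vec e \text{ bad}} 2^{-\wt(e_1)-\wt(e_2)}$ directly via the generating function $\bigl(\prod_{j=0}^{n-1}(1+x^{2^j}/2)\bigr)^2$ and extract the coefficient sum over the bad range $\{d+1,\ldots,Q-1\}\cup\{d+Q,\ldots,2Q-2\}$.
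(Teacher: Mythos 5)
The paper states this claim without proof, so there is no authoritative argument to compare against; the question is simply whether your proposal is sound and complete.

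Your decomposition and the overall accounting are the right shape: write $d_i = D_i \cdot 2^{n-c} + r_i$, partition the top configurations $(D_1,D_2)$ into covering and non-covering, and bound the bad count by $\bigl(\tfrac54 3^c - \tfrac14\bigr) 4^{n-c}$. The arithmetic at the end is consistent with the claim. However, there are two problems. First, a technical wrinkle in the covering case: using $\vec e = \vec d$ itself as the bad witness does not work when $D_1$ and $D_2$ overlap in some top bit, because then $d_1 + d_2 \geq Q$ and the $\modstar Q$ subtraction of $Q-1$ can bring the value back below $d$; the correct move is to choose a sub-shadow $(E_1,E_2) \leq_2 (D_1,D_2)$ with $E_1 + E_2 = 2^c - 1$ exactly (take $E_2 = D_2$, $E_1 = D_1 \wedge \overline{D_2}$), together with a single nonzero bottom bit of one of the $r_i$, so the sum lands in $\{d+1,\ldots,Q-1\}$ with no wraparound. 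This is a fixable slip, not a gap.

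The real gap is the non-covering bound. You assert that the $4^c - 3^c$ non-covering configurations contribute at most $\tfrac14(3^c - 1) \cdot 4^{n-c}$ bad pairs, but you do not prove it; you explicitly flag it as ``the principal obstacle'' and sketch two possible routes (per-column enumeration by the overlap $D_1 \wedge D_2$, or a generating-function coefficient extraction) without carrying either out. This is the crux of the claim: a trivial bound gives $(4^c - 3^c) 4^{n-c}$ for the non-covering contribution, which is far larger than what is needed, so one must actually show that most non-covering configurations contribute few (or zero) bad pairs, and the contribution is governed by carry propagation across the top/bottom boundary. Your numerics for $c = 1$ are consistent with the asserted bound, which is a good sanity check, but as written the argument is incomplete: the hard half of the count remains a conjecture.
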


While the error-correction rate of the code is smaller than that in
\cite{KSY}, it does seem to start working at much smaller lengths
and with much smaller alphabet sizes.

\section{Nikdoym sets}
\label{sec:nikodym}

A \emph{Nikodym set} $N \subseteq \F_q^m$ is a set such that for all
$x \in \F_q^m$, there exists $y \in \F_q^m$ such that the punctured line
$\{x + ty \mid t \in \F_q \setminus \{0\}\} \subseteq N$.



The following proposition strengthens and generalizes the result usually
obtained via the polynomial method~\cite{Dvir}.


\begin{prop}
\label{prop:nikodym}
If $\calL \subset \{\F_q^m \to \F_q\}$ is the lift of some univariate linear
affine-invariant family $\calf \subsetneq \{\F_q \to \F_q\}$, and $N \subseteq \F_q^m$
is a Nikodym set, then $|N| \ge \dim \calL$.
\end{prop}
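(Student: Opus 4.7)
The plan is a polynomial-method argument built on top of the lifting formalism developed in the paper. Suppose for contradiction that $|N| < \dim\calL$. Since vanishing at a point of $\F_q^m$ is a single linear condition on $\calL$, and $|N|$ such conditions leave a space of dimension at least $\dim\calL - |N| > 0$, there exists a nonzero $f \in \calL$ with $f|_N \equiv 0$. The goal will be to use the Nikodym property together with the distance of the base code $\calf$ to upgrade this to $f \equiv 0$ on all of $\F_q^m$.

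Fix an arbitrary $x \in \F_q^m$ and let $y \in \F_q^m \setminus \{\vec 0\}$ be such that $\{x+ty : t \in \F_q^*\} \subseteq N$; such $y$ exists by the Nikodym property. Let $\ell$ be the full affine line $\{x + ty : t \in \F_q\}$, viewed canonically as an affine copy of $\F_q$ parametrized by $t$. Since $f \in \calL = \Lift_m(\calf)$, the restriction $f|_\ell$, viewed as a univariate function $\F_q \to \F_q$, lies in $\calf$. By hypothesis $f$ vanishes on all $q-1$ points of $\ell$ corresponding to $t \in \F_q^*$, so $f|_\ell$ has at least $q-1$ zeros.

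The key observation is that $\calf \subsetneq \{\F_q \to \F_q\}$ is a proper linear affine-invariant code, so Proposition~\ref{prop:distance-simple} (with $m=1$, $Q=q$) gives $\delta(\calf) \geq 2q^{-1}$. Thus any nonzero element of $\calf$ has at most $q-2$ zeros. But $f|_\ell$ has at least $q-1$ zeros, forcing $f|_\ell \equiv 0$; in particular $f(x) = 0$. As $x$ was arbitrary, $f \equiv 0$, contradicting our choice of $f$. Hence $|N| \geq \dim\calL$, as claimed.

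The one step that deserves slight care is the canonical identification used when invoking the lift condition: the definition of $\Lift_m(\calf)$ quantifies over affine copies of $\F_q$ and requires $f|_V \in \calf$ under some affineness-preserving isomorphism, and one must check that this is consistent with the parametrization $t \mapsto x+ty$ chosen above. Since $\calf$ is affine-invariant, the choice of isomorphism is immaterial, and the argument goes through unchanged; this is the only nontrivial bookkeeping in what is otherwise a short polynomial-method argument.
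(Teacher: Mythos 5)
Your proof is correct and is essentially the same argument the paper gives: assume a nonzero $f \in \calL$ vanishes on $N$, use the Nikodym property to get a punctured line through any $x$ on which $f$ vanishes, restrict $f$ to that line to get a codeword of $\calf$ with $\geq q-1$ zeros, invoke Proposition~\ref{prop:distance-simple} to conclude the restriction is identically zero and hence $f(x)=0$, and deduce $f \equiv 0$. The only additions are a brief justification of why a nonzero $f$ vanishing on $N$ exists and a remark on the parametrization of the line, both implicit in the paper.
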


\begin{proof}
Suppose for sake of contradiction that $|N| < \dim \calL$. Then there exists
nonzero $f \in \calL$ such that $f|_N \equiv 0$. Let $x \in \F_q^m$. Then there
is $y \in \F_q^m$ such that $x + ty \in N$ for every $t \in \F_q \setminus \{0\}$.
Define $g(t) = f(x + ty)$. By definition of $\calL$, we have $g \in \calf$, and moreover
$\calf$ is a nontrivial, so by Proposition~\ref{prop:distance-simple}, either $g = 0$ or $\wt(g) \ge 2$. But $g(t) = 0$ for
every $t \ne 0$, hence $g = 0$, and in particular $f(x) = g(0) = 0$.
Since $x$ was arbitrary, this shows that $f$ is identically zero, a contradiction.
\end{proof}

%

We are now ready to prove Theorem~\ref{thm:nikodym}.

\begin{proof}[Proof of Theorem~\ref{thm:nikodym}]
Follows immediately by applying Proposition~\ref{prop:nikodym} to
the code $\calL$ obtained from Theorem~\ref{thm:two}, i.e. the family of $f$
taking values in $\F_p$ whose restrictions to lines are polynomials of degree
at most $q-2$.
\end{proof}

For comparison, the bound obtained by the polynomial method is
${m+q-2 \choose m} \approx q^m/m!$, which can be improved to
$q^m/2^m$ using the method of multiplicities.
Other work on finite field Nikodym sets by Li~\cite{Li} as well as
Feng, Li, and Shen~\cite{FLS} obtain lower bounds that
beat the standard polynomial method bound for $m=2$. In particular, \cite{FLS}
obtains a bound of $q^2 - q^{3/2} - q$, which is actually better than our
bound for two dimensions, which is $q^2 - O(q^{\log_2 3/4})$ for characteristic two.
Moreover, their bound applies to $q$ of any characteristic. However, our bounds
are the best known and the only ones achieving $q^m (1 - o(1))$ for $m \ge 3$.

\section{General investigation of lifting}
\label{sec:distance}

The codes of the previous section simply picked some
basic codes and lifted them to derive long codes of reasonable distance
and interesting local testability and decodability.
To go beyond this setting, we feel it is important to pick basic
codes of possibly high distance and then lift them, and this could
improve the performance of such codes. As may be observed
from the previous section most of the work needed to analyze
lifted codes is devoted to determining their dimension, and this
can be a function of the exact code chosen. Features such
as distance, decodability, and testability seem to follow more
generically. In this section, we examine the simplest of
these properties, namely the distance of the lifted code and
prove some basic facts.

\begin{theorem}
\label{thm:dist}
Let $\calf \subseteq \{\F_Q^t \to \F_q\}$ and
$\calL = \Lift_m(\calf)$ for some $m\geq t$. We have the
following:
\begin{enumerate}
\item \label{dist-one} $\delta(\calL) \leq \delta(\calf)$.
\item \label{dist-two} $\delta(\calL) \geq \delta(\calf) - Q^{-t}$.
\item \label{dist-three} If $Q \in \{2,3\}$ and $\delta(\calf) > Q^{-t}$
then $\delta(\calL) \geq \delta(\calf)$.
\end{enumerate}
\end{theorem}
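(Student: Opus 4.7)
The plan is to prove the three parts in sequence, with Part~\ref{dist-one} a quick upper bound via an explicit codeword, Part~\ref{dist-two} a double-counting lower bound on the weight of any nonzero codeword, and Part~\ref{dist-three} a small-field induction on $m$. For Part~\ref{dist-one}, I would pick a minimum-weight codeword $g \in \calf \setminus \{0\}$ (so $\wt(g)/Q^t = \delta(\calf)$) and define $f : \F_Q^m \to \F_q$ by $f(x_1,\ldots,x_m) = g(x_1,\ldots,x_t)$. The restriction of $f$ to any $t$-dimensional affine subspace $V \subseteq \F_Q^m$ equals $g \circ \phi$ for some (possibly non-invertible) affine map $\phi : \F_Q^t \to \F_Q^t$; since $\calf$ is invariant under all affine maps and not just permutations (footnote in Section~1), we get $g \circ \phi \in \calf$, so $f \in \calL$. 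Then $\wt(f) = Q^{m-t} \wt(g)$ gives $\wt(f)/Q^m = \delta(\calf)$, establishing $\delta(\calL) \leq \delta(\calf)$.

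For Part~\ref{dist-two}, fix a nonzero $f \in \calL$ and any $x^*$ with $f(x^*) \neq 0$. Let $K$ be the number of $t$-dimensional affine subspaces $V \subseteq \F_Q^m$ containing $x^*$ and $K_2$ the number containing two given distinct points (both independent of the chosen points by $2$-transitivity of the affine group). The double count $K \cdot Q^t = K + (Q^m - 1) K_2$ gives $K/K_2 = (Q^m - 1)/(Q^t - 1)$, and swapping summation order yields
\[ \sum_{V \ni x^*} \wt(f|_V) \;=\; K + K_2 (\wt(f) - 1). \]
Each $f|_V$ above is a nonzero codeword of $\calf$ (since $x^* \in V$ and $f(x^*) \ne 0$), hence of weight at least $Q^t \delta(\calf)$. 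Rearranging gives $\wt(f) \geq \frac{Q^m - 1}{Q^t - 1}(Q^t \delta(\calf) - 1) + 1$; dividing by $Q^m$ and using $Q^t \delta(\calf) \geq 2$ from Proposition~\ref{prop:distance-simple} yields $\wt(f)/Q^m \geq \delta(\calf) - Q^{-t}$ after a short algebraic simplification, and hence $\delta(\calL) \geq \delta(\calf) - Q^{-t}$.

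For Part~\ref{dist-three} I would induct on $m \geq t$. The base case $m = t$ is immediate as $\Lift_t(\calf) = \calf$. For the step from $m$ to $m+1$, observe that the restriction $f|_H$ of any nonzero $f \in \Lift_{m+1}(\calf)$ to any hyperplane $H \subseteq \F_Q^{m+1}$ lies in $\Lift_m(\calf)$, so by the inductive hypothesis every nonzero $f|_H$ has $\wt(f|_H) \geq Q^m \delta(\calf)$. If some hyperplane direction $\ell$ has all $Q$ parallel hyperplanes $\{\ell(x) = c\}_{c \in \F_Q}$ non-vanishing for $f$, then partitioning $\F_Q^{m+1}$ into these hyperplanes and summing weights gives $\wt(f) \geq Q \cdot Q^m \delta(\calf) = Q^{m+1} \delta(\calf)$, finishing the step.

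The main obstacle is producing such a direction when $Q \in \{2,3\}$. Suppose toward contradiction that no $\ell$ works: for every nonzero $\ell$ there is $c_\ell$ with $\ell(\supp(f)) \subseteq \F_Q \setminus \{c_\ell\}$, so $|\ell(\supp(f))| \leq Q-1$. For $Q = 2$ every $\ell$ is constant on $\supp(f)$, which forces $|\supp(f)| \leq 1$ (any two distinct points admit a separating linear form) and contradicts Proposition~\ref{prop:distance-simple}. For $Q = 3$ one rules out $|\supp(f)| \geq 3$ by case analysis: any three support points $x_1, x_2, x_3$ are either collinear, in which case any $\ell$ with $\ell(x_2 - x_1) \neq 0$ takes all three values of $\F_3$ on the line, or non-collinear (so $x_2 - x_1, x_3 - x_1$ are linearly independent), in which case one can solve a $2$-equation linear system for $\ell$ to attain the three distinct values of $\F_3$ on the three points. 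The remaining cases $|\supp(f)| \in \{1,2\}$ are killed directly: $|\supp(f)| = 1$ again violates Proposition~\ref{prop:distance-simple}, while $|\supp(f)| = 2$ is ruled out by choosing a $t$-dimensional affine subspace $V$ through one of the two support points but not the other (possible since $m+1 > t$), making $f|_V \in \calf$ a weight-$1$ codeword and contradicting $\delta(\calf) > Q^{-t}$. This completes the induction.
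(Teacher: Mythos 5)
Your proofs of Parts~\ref{dist-one} and~\ref{dist-two} follow the paper's own arguments essentially step for step: Part~\ref{dist-one} lifts a minimum-weight codeword by ignoring the extra coordinates and invokes Theorem~\ref{thm:invariance} to handle the non-invertible affine restrictions (exactly as the paper does), and Part~\ref{dist-two} is the deterministic double-count version of the paper's expectation argument in Lemma~\ref{lem:genarylift}, giving the same inequality $\tau(Q^t-1) + 1 \geq \delta Q^t$ up to routine rearrangement. Part~\ref{dist-three} is where you genuinely diverge. The paper reduces to the base case $m = t+1$ via $\Lift_m = \Lift_m \circ \Lift_{m-1}$ and then proves an existential ``shrinking'' lemma (Lemmas~\ref{lemma:binarylift} and~\ref{lemma:3arylift}): if a nonzero $f$ has small relative weight, some hyperplane restriction $f|_H$ is nonzero and still has small relative weight, which it establishes by recursively splitting along coordinate slabs, averaging, and gluing a translate when one slab is identically zero. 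You instead induct directly from $m$ to $m+1$ by exhibiting a \emph{single} affine direction $\ell$ all $Q$ of whose fibers meet $\supp(f)$, so that $\wt(f) = \sum_c \wt(f|_{\{\ell = c\}}) \geq Q \cdot Q^m \delta(\calf)$; the degenerate case where no such $\ell$ exists is ruled out by the fact that, for $Q \in \{2,3\}$, affine functionals separate any $Q$ points (with the small-support cases killed via Proposition~\ref{prop:distance-simple} and a weight-one restriction). Your route avoids the fiddly recursive translate-and-glue step and in particular gives a noticeably shorter argument for $Q=3$ than the paper's Lemma~\ref{lemma:3arylift}; the paper's approach, on the other hand, localizes the combinatorics entirely inside the single base case $m = t+1$. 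Both are correct, and both make transparent why the argument is limited to $Q \in \{2,3\}$.
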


\subsection{Proof of Theorem~\ref{thm:dist}}
\label{ssec:dist-proof}

We divide the proof of Theorem~\ref{thm:dist} into several parts.
We start by proving that distance does not increase under
lifting (Theorem~\ref{thm:dist}, Part~\ref{dist-one}).

\begin{lemma}
Let $\cF \subseteq \{\F_Q^t \to \F_q\}$ be a linear affine-invariant
code with
lift $\calL = \lift_m(\cF)$. Then $\delta(\calL) \le \delta(\cF)$.
\end{lemma}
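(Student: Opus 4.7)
The plan is to exhibit a codeword of $\calL$ whose relative weight is at most $\delta(\calf)$, by pulling back a minimum-weight codeword of $\calf$ along the projection onto the first $t$ coordinates.

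First I would pick a nonzero codeword $g \in \calf$ of minimum weight, so that $\wt(g)/Q^t = \delta(\calf)$. Then I would define $f : \F_Q^m \to \F_q$ by
\[
f(x_1,\ldots,x_m) \;=\; g(x_1,\ldots,x_t).
\]
Clearly $f$ is nonzero, and its weight is $\wt(f) = \wt(g) \cdot Q^{m-t}$, since for each input $(x_1,\ldots,x_t)$ on which $g$ is nonzero there are exactly $Q^{m-t}$ extensions to $\F_Q^m$. Hence $\wt(f)/Q^m = \wt(g)/Q^t = \delta(\calf)$, which will give $\delta(\calL) \le \delta(\calf)$ once we show $f \in \calL$.

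The main task is therefore to verify $f|_V \in \calf$ for every $t$-dimensional affine subspace $V \subseteq \F_Q^m$. Parametrize $V$ by $u \mapsto a + \sum_{i=1}^t u_i b_i$ for some $a \in \F_Q^m$ and linearly independent $b_1,\ldots,b_t \in \F_Q^m$. Let $B \in \F_Q^{t \times t}$ be the matrix whose $i$-th column is the vector of first $t$ coordinates of $b_i$, and let $a'$ be the first $t$ coordinates of $a$. Then
\[
f|_V(u) \;=\; g(a' + B u),
\]
i.e.\ $f|_V = g \circ A$ for the affine map $A(u) = a' + B u$ on $\F_Q^t$.

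The key subtlety is that $B$ need not be invertible: if the projection of the span of $b_1,\ldots,b_t$ onto the first $t$ coordinates is degenerate, then $A$ is an affine map with nontrivial kernel, not an affine permutation. I would expect this to be the one place that requires care. It is resolved by invoking the equivalence established in Section~\ref{app:perm}: a linear code that is invariant under all affine permutations is automatically invariant under all affine maps. Applying this equivalence to $\calf$ gives $g \circ A \in \calf$ regardless of whether $B$ is invertible, hence $f|_V \in \calf$ for every $t$-dimensional affine subspace $V$. Thus $f \in \calL$, and combining with the weight computation completes the proof.
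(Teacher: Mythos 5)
Your proposal is correct and takes essentially the same approach as the paper: extend a minimum-weight codeword of $\calf$ to a function on $\F_Q^m$ that is constant in the extra coordinates, and verify membership in the lift by observing that each restriction factors as $g \circ A$ for an affine (not necessarily invertible) map $A$, which lies in $\calf$ by the equivalence of invariance under affine permutations and general affine maps (Theorem~\ref{thm:invariance}). The only cosmetic difference is that the paper reduces by induction to the case $m = t+1$ whereas you handle general $m$ directly.
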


\begin{proof}
By induction, it suffices to show the assertion for the case $m=t+1$.
Let $f \in \cF$ and let $\delta = \delta(f,0)$.
Let $\vec x = \langle x_1,\ldots,x_t \rangle$.
Now consider the function $g(\vec x,y) = f(\vec x)$.
Clearly we have $\delta(g,0) = \delta$. We claim
that $g \in \calL$, which completes the proof.
To do so we will show that $g|_H \in \calf$ for every
$t$-dimensional affine subspace $H \subseteq \F_Q^m$.
Fix such a subspace $H$ and let $A:\F_Q^t \to \F_Q^m$
be an affine map whose image is $H$ (such a map does exist).
Note that $g|_H(\vec z) = f(A(\vec z)_1,\ldots,A(\vec z)_t)$.
Thus if we let $A':\F_Q^t \to \F_Q^t$ be the affine map
given by the projection of $A$ to its first $t$ coordinates,
we have that $g|_H = f \circ A'$. By Theorem~\ref{thm:invariance}
$f \circ A' \in \calf$ and so we have $g \in \calf$ as claimed.
(Note that we need to use Theorem~\ref{thm:invariance} since
$A'$ need not be an affine permutation but it is an affine
transformation.)
\end{proof}

Next we prove Part~\ref{dist-three} of Theorem~\ref{thm:dist}
which asserts that the distance of non-trivial binary codes
does not decrease with lifting.

\begin{lemma}
\label{lem:binarylift}
If $\cF \subseteq \{\F_2^t \to \F_2\}$ has distance $\delta(\cF) > \frac{1}{2^t}$, then
$\delta(\lift_m(\cF)) \ge \delta(\cF)$ for all $m \ge t$.
\end{lemma}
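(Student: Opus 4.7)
The plan is strong induction on the ambient dimension $m$, allowing the base dimension $t$ and the base code $\cF$ to vary within the induction hypothesis. The base case $m=t$ is trivial since $\Lift_t(\cF) = \cF$. For the inductive step, suppose $m > t$ and let $g \in \Lift_m(\cF)$ be nonzero. Pick any affine hyperplane $H \subseteq \F_2^m$ with complementary hyperplane $H^c$; since every $t$-dimensional affine subspace of $H$ is also a $t$-dimensional affine subspace of $\F_2^m$, both $g|_H$ and $g|_{H^c}$ lie in $\Lift_{m-1}(\cF)$ (after identifying $H, H^c$ with $\F_2^{m-1}$).

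If both $g|_H$ and $g|_{H^c}$ are nonzero, the induction hypothesis applied to each gives $|\supp(g|_H)|, |\supp(g|_{H^c})| \ge \delta(\cF)\cdot 2^{m-1}$, so $|\supp(g)| \ge \delta(\cF)\cdot 2^m$. Otherwise WLOG $g|_{H^c} = 0$ and $\supp(g) \subseteq H$. The key observation is that for every $t$-dimensional affine subspace $V \subseteq \F_2^m$ with $V \not\subseteq H$, the intersection $W := V \cap H$ is $(t-1)$-dimensional and $g|_V$ is supported on $W$; hence $g|_V$ is the zero-extension of $g|_W$ to $V$. The requirement $g|_V \in \cF$ therefore becomes a constraint on $g|_W$: writing $\mathcal{G} := \{h \in \{\F_2^{t-1} \to \F_2\} : \tilde h \in \cF\}$, where $\tilde h : \F_2^t \to \F_2$ is the canonical zero-extension defined by $\tilde h(x,0) = h(x)$ and $\tilde h(x,1) = 0$, we conclude that $g|_H$ lies in the lift of $\mathcal{G}$ from dimension $t-1$ up to dimension $m-1$. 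One checks that $\mathcal{G}$ is linear and affine-invariant, since any affine permutation $\pi$ of $\F_2^{t-1}$ extends to the affine permutation $(x,y) \mapsto (\pi(x), y)$ of $\F_2^t$.

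It remains to analyze $\delta(\mathcal{G})$ and invoke the induction hypothesis a second time. Because zero-extension preserves Hamming weight, the minimum nonzero weight of $\mathcal{G}$ is at least the minimum nonzero weight of $\cF$, so $\delta(\mathcal{G}) \ge \delta(\cF)\cdot 2^t / 2^{t-1} = 2\delta(\cF)$; in particular $\delta(\mathcal{G}) > 1/2^{t-1}$ (and if $\mathcal{G} = \{0\}$ then $g|_H = 0$, contradicting the case assumption). Applying the induction hypothesis to $\mathcal{G}$ with parameters $(t-1, m-1)$ then yields $|\supp(g|_H)| \ge 2\delta(\cF)\cdot 2^{m-1} = \delta(\cF)\cdot 2^m$. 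The main obstacle is precisely this second case: naive use of the induction hypothesis on $g|_H \in \Lift_{m-1}(\cF)$ only gives $\delta(\cF)\cdot 2^{m-1}$, which falls short by a factor of two; the missing factor is recovered by recognizing that vanishing on $H^c$ forces $g|_H$ to satisfy a strictly stronger lift condition whose base code $\mathcal{G}$ has (at least) double the fractional distance of $\cF$.
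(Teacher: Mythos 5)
Your proof is correct, and it takes a genuinely different route from the paper. The paper first reduces (by composition of lifts) to the case $m=t+1$ and then proves, by a separate induction on $m$, a purely combinatorial lemma about Boolean functions: any $f:\F_2^m\to\F_2$ of small positive weight has an affine hyperplane $H$ on which $f|_H$ still has small positive weight; this restriction then contradicts the distance of $\cF$. Your argument instead inducts directly on $m$ and, in the interesting case where $g$ vanishes on one of the two parallel hyperplanes, replaces $\cF$ by the derived code $\mathcal G = \{h : \F_2^{t-1}\to\F_2 \mid \tilde h\in\cF\}$, observing that $g|_H\in\Lift_{m-1}(\mathcal G)$ and that $\delta(\mathcal G)\ge 2\delta(\cF)$ (zero-extension preserves Hamming weight while the domain halves), so the lost factor of two is recovered by the strengthened induction hypothesis. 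This is a nice structural reformulation: the paper's inner lemma argues about arbitrary functions without reference to the code, whereas you carry the code along and let the base code itself improve in the induction. One small caveat in your write-up: when you assert that $W=V\cap H$ is $(t-1)$-dimensional for every $t$-dimensional $V\not\subseteq H$, this fails when $V\subseteq H^c$ (then $W=\varnothing$); the correct statement is the converse direction --- for every $(t-1)$-dimensional $W\subseteq H$ one can choose a $t$-dimensional $V\supset W$ with $V\not\subseteq H$, which is all the argument needs. You also implicitly use the affine-invariance of $\cF$ to fix the identification $V\cong\F_2^t$ sending $W$ to $\F_2^{t-1}\times\{0\}$; worth making explicit, but it is fine. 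The edge case $\mathcal G=\{0\}$, which you flag, also covers $t=1$ correctly since then $\delta(\mathcal G)\ge 2\delta(\cF)>1$ forces $\mathcal G$ to be trivial.
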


We prove the above lemma by stating and proving
the following stronger lemma first.

\begin{lemma}\label{lemma:binarylift}
For all $m \ge 2$, if $\delta > \frac{1}{2^{m-1}}$ and $f : \F_2^m \to \F_2$ such that
$0 < \Pr_{x \in \F_2^m} [f(x) \ne 0] < \delta$,
then there exists an $(m-1)$-dimensional affine subspace $H \subsetneq \F_2^m$
such that $0 < \Pr_{x \in H} [f(x) \ne 0] < \delta$.
\end{lemma}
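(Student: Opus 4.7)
The plan is a pairing argument over parallel affine hyperplanes in $\F_2^m$, followed by a rigidity deduction on $\supp(f)$. Let $w = |\{x \in \F_2^m : f(x) \ne 0\}|$ and for each affine hyperplane $H$ let $w_H = |\{x \in H : f(x) \ne 0\}|$. The $2(2^m-1)$ affine hyperplanes partition into $2^m-1$ pairs $(H,H')$ of parallel complementary hyperplanes, and for each such pair $w_H + w_{H'} = w$. The goal is to produce an $H$ with $0 < w_H < \delta \cdot 2^{m-1}$.

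Assume for contradiction that no affine hyperplane satisfies $0 < w_H < \delta \cdot 2^{m-1}$. Since $w < \delta \cdot 2^m = 2\delta \cdot 2^{m-1}$, the identity $w_H + w_{H'} = w$ forces at least one member of each parallel pair to satisfy $w_H < \delta \cdot 2^{m-1}$, which by the contradiction hypothesis must mean $w_H = 0$. Because $w > 0$, the two members of a pair cannot both vanish, so each pair consists of one hyperplane with $w_H = 0$ and one with $w_H = w$. Translating this to linear functionals: for every nonzero $\ell : \F_2^m \to \F_2$, exactly one of the cosets $\ell^{-1}(0), \ell^{-1}(1)$ is disjoint from $\supp(f)$, so $\ell$ is constant on $\supp(f)$.

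The rigidity step concludes that $|\supp(f)| = 1$: if $x, y \in \supp(f)$, then $\ell(x - y) = 0$ for every nonzero linear functional $\ell$, which forces $x - y = 0$. Hence $\supp(f) = \{x_0\}$ and $w = 1$. Now the hypothesis $\delta > 1/2^{m-1}$ kicks in to deliver the contradiction: any of the $2^m - 1$ hyperplanes $H$ through $x_0$ has $w_H = 1$, and $0 < 1 < \delta \cdot 2^{m-1}$ contradicts our standing assumption. The only mildly delicate point, and really the only thing worth verifying, is the division of labor between the two hypotheses: $\Pr[f \ne 0] < \delta$ is what drives the pairing dichotomy, while $\delta > 1/2^{m-1}$ is precisely what is needed to recognize the single surviving support point as a legitimate witness hyperplane in the $w = 1$ case.
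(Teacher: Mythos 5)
Your proof is correct, and it takes a genuinely different route from the paper. The paper proves the lemma by induction on $m$: for $m > 2$ it slices along $x_m$, handles the case where both slices are nonempty by averaging, and in the degenerate case (one slice vanishing) invokes the inductive hypothesis at threshold $2\delta$ and pastes the resulting $(m-2)$-dimensional subspace together with its translate in the vanishing slice. Your argument instead is global and non-inductive: summing $w_H + w_{H'} = w$ over the $2^m-1$ parallel pairs, the contradiction hypothesis forces exactly one member of each pair to be disjoint from $\supp(f)$, hence every nonzero linear functional is constant on $\supp(f)$, hence $|\supp(f)| = 1$; the hypothesis $\delta > 1/2^{m-1}$ then disposes of the weight-one case. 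Your version is shorter, avoids the case analysis, and makes the role of the threshold $1/2^{m-1}$ transparent as precisely the barrier for single-point supports. The paper's inductive scheme has the advantage of adapting more readily (if still laboriously) to the $q=3$ analogue (Lemma~\ref{lemma:3arylift}), where the three-way partition into $\ell^{-1}(0), \ell^{-1}(1), \ell^{-1}(2)$ does not give the same clean dichotomy, whereas your pairing argument leans essentially on the two-element nature of $\F_2$. Both are valid; yours is the more elegant proof of the binary statement as stated.
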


\begin{proof}
We proceed by induction on $m$. The base case $m=2$
is straightforward to verify.

Now suppose $m > 2$ and our assertion holds for $m-1$.
Let $H_0, H_1$ be the affine subspaces given by $x_m = 0$ and $x_m = 1$
respectively.
Let $\delta_0, \delta_1$ denote $\delta(f|_{H_0},0), \delta(f|_{H_1},0)$ respectively.
Note that $\delta > \delta(f,0) = (\delta_0 + \delta_1)/2$.
If both $\delta_0, \delta_1 > 0$, then by averaging we have $0 < \delta_i < \delta$
and so $H = H_i$ does the job. Otherwise, suppose w.l.o.g. that $\delta_1 = 0$.
Note that $0 < \delta_0 < 2\delta$ and $2\delta > \frac{1}{2^{m-2}}$.
Thus, by the induction hypothesis, there exists an $(m-2$)-dimensional affine
subspace $H'_0 \subset H_0$ such that $0 < \delta(f|_{H'},0) < 2\delta$.
Let $H'_1 = \{(a_1,\ldots,a_{m-1},1) \in \F_2^m \mid (a_1,\ldots,a_{m-1},0) \in H'_0\}$
be the translate of $H'_0$ in $H_1$, and note that
$\delta(f|_{H'_1},0) = 0$. Let $H = H'_0 \cup H'_1$. Then $H$ is
an $(m-1)$-dimensional subspace of $\F_2^m$ such that
$0 < \delta(f|_H,0) = (\delta(f|_{H'_0},0) + \delta(f|_{H'_1},0))/2 < \delta$.
\end{proof}

\begin{proof}[Proof of Lemma~\ref{lem:binarylift}]
We prove the lemma by induction on $m-t$. Indeed the inductive step
is straightforward since $\Lift_m(\calf) = \Lift_m(\Lift_{m-1}(\calf))$
and by induction both lifts on the RHS have smaller value of $m-t$
and so the distance does not reduce in either step. The main case
is thus the base case with $m = t+1$.

Suppose $f \in \lift_m(\cF) \subsetneq \{\F_2^{m} \to \F_2\}$ such that
$0 < \delta(f,0) < \delta(\calf)$.
By~Lemma~\ref{lemma:binarylift}, there exists an
$(m-1)$-dimensional affine subspace
$H \subset \F_2^{m}$ such that
$0 < \delta(f|_H,0) \le \delta(f,0) < \delta$, contradicting the fact that
$f|_H \in \calf$.
\end{proof}

A similar approach works for $q=3$, thus we have the following.

\begin{lemma}\label{lem:3arylift}
If $\calf \subseteq \{\F_3^t \to \F_3\}$ has distance $\delta(\cF) > \frac{1}{3^t}$, then
$\delta(\lift_m(\cF)) \ge \delta(\cF)$ for all $m \ge t$.
\end{lemma}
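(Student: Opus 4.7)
My plan is to mirror the proof of Lemma~\ref{lem:binarylift}. First, using the identity $\lift_m(\calf) = \lift_m(\lift_{m-1}(\calf))$ and induction on $m-t$, the statement reduces to the one-step case: it suffices to prove a ternary analog of Lemma~\ref{lemma:binarylift}, namely that for $m \geq 2$ and $\delta > 1/3^{m-1}$, any $f : \F_3^m \to \F_3$ with $0 < \wt(f) < \delta$ admits an $(m-1)$-dimensional affine subspace $H \subsetneq \F_3^m$ with $0 < \wt(f|_H) < \delta$. The outer induction then uses that $\delta(\lift_{m-1}(\calf)) \geq \delta(\calf) > 1/3^t \geq 1/3^{m-1}$, so the hypothesis of the ternary analog is met at each step.

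The main obstacle is that the inductive case analysis used in the proof of Lemma~\ref{lemma:binarylift} does not straightforwardly generalize. Splitting $\F_3^m$ into three parallel hyperplanes $H_0, H_1, H_2$, the weights $\delta_0, \delta_1, \delta_2$ average to $\wt(f) < \delta$; if all three are positive or if two of them are zero, averaging or a single recursion with bound $3\delta$ suffices. But the intermediate case where exactly one weight is zero and the other two are both at least $\delta$ only yields that each of the latter two is strictly less than $2\delta$, and recursing inside such a hyperplane requires $2\delta > 1/3^{m-2}$, which does not follow from $\delta > 1/3^{m-1}$.

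To bypass this obstacle, I would replace the case analysis with a direct structural argument on $S := \supp(f)$. Assume for contradiction that no $(m-1)$-dimensional affine subspace has weight strictly between $0$ and $\delta$. For every non-zero $a \in \F_3^m$, the three parallel hyperplanes $\{x : a^\top x = c\}_{c \in \F_3}$ have weights averaging to $\wt(f) < \delta$, so at least one is less than $\delta$ and hence, by assumption, equals $0$; equivalently, the image of $S$ under $a^\top$ is a proper subset of $\F_3$ of size at most $2$. I would then argue that $|S| \leq 2$ by showing that any three distinct points $x_1, x_2, x_3 \in S$ admit some $a$ with $a^\top x_1, a^\top x_2, a^\top x_3$ all distinct in $\F_3$: if the points are collinear along direction $d \neq 0$, any $a$ with $a^\top d \neq 0$ works, and otherwise, setting $v_1 = x_2 - x_1$ and $v_2 = x_3 - x_1$ linearly independent, the three subspaces $\{a : a^\top v = 0\}$ for $v \in \{v_1, v_2, v_1 - v_2\}$ share the common subspace $\{a : a \perp v_1, v_2\}$ of dimension $m-2$, so inclusion-exclusion shows their union omits $2 \cdot 3^{m-2} > 0$ vectors. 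Finally, with $|S| \in \{1,2\}$ (and $|S| \geq 1$ since $f \neq 0$), I would pick an $(m-1)$-dimensional affine subspace $H$ containing exactly one point of $S$, which exists because any two distinct points can be separated by a hyperplane; then $\wt(f|_H) = 1/3^{m-1}$, lying strictly between $0$ and $\delta$, giving the desired contradiction.
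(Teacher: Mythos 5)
Your proposal is correct, and it takes a genuinely different route from the paper. The paper proves its inner lemma (Lemma~\ref{lemma:3arylift}) by induction on $m$, slicing $\F_3^m$ into three parallel hyperplanes and doing a case analysis on how many of the three slices carry nonzero weight; the recursive case where exactly one slice is zero then requires a somewhat delicate gluing of $(m-2)$-dimensional subspaces across the remaining slices, and the very-low-weight regime (at most two nonzero points) is handled separately in the outer proof. You correctly observe that the naive transplant of the binary induction would fail in the intermediate case (needing $2\delta > 3^{-(m-2)}$, which $\delta > 3^{-(m-1)}$ does not supply), and you replace the whole inner induction with a single non-recursive counting argument: if no hyperplane has weight strictly between $0$ and $\delta$, then every nonzero linear functional $a$ collapses $\supp(f)$ into at most two fibers, and a short inclusion--exclusion count of $\{a : a^\top(x_i - x_j) = 0\}$ shows three distinct support points cannot all be merged this way; hence $|\supp(f)| \le 2$, at which point separating a single point by a hyperplane gives weight exactly $3^{-(m-1)} < \delta$, a contradiction. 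This absorbs the paper's separate low-weight case into the same argument and dispenses entirely with the inner induction and its case analysis. The trade-off is that your structural claim (three distinct points of $\F_3^m$ are simultaneously separated by some $a$) exploits the field size $3$ very specifically, so it is unlikely to generalize beyond $q = 3$ --- but neither does the paper's proof, and indeed the theorem's Part~3 is stated only for $Q \in \{2,3\}$, so nothing is lost.
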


Again, we prove this by stating and proving the following analogue of
Lemma~\ref{lemma:binarylift}.

\begin{lemma}
\label{lemma:3arylift}
For all $m \ge 2$, if $f : \F_3^m \to \F_3$ such that
$\delta(f,0) \ge \frac{1}{3^{m-1}}$, then there exists an
$(m-1)$-dimensional affine subspace $H \subset \F_3^m$ such that
$0 < \delta(f|_H,0) \le \delta(f,0)$.
\end{lemma}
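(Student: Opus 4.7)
The plan is to prove Lemma~\ref{lemma:3arylift} directly, rather than via the inductive reduction used for the binary analog. The natural inductive approach---splitting $\F_3^m$ by the last coordinate into three parallel hyperplanes---encounters a troublesome case where one hyperplane is empty of $\{f \ne 0\}$ while the other two both have density strictly exceeding $\delta(f,0)$, and it is unclear how to recurse. Instead, I would choose a direction $n$ based on the global structure of $S := \{x \in \F_3^m : f(x) \ne 0\}$, whose size is at least $3$ by hypothesis.

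The key reduction is: if I can find a nonzero $n \in \F_3^m$ such that $x \mapsto n \cdot x$ takes all three values of $\F_3$ on $S$, then the three hyperplanes $H_c = \{x : n \cdot x = c\}$ each meet $S$ (so all have positive density), and since the three densities sum to $3\delta(f,0)$, the minimum is at most $\delta(f,0)$. Any minimizing $H_c$ is then the desired affine subspace. To produce such an $n$, I would pick three distinct points $p_0, p_1, p_2 \in S$. If they are collinear on a line of direction $v$, then any $n \notin v^\perp$ works, since $n \cdot p_0, n \cdot p_1, n \cdot p_2$ forms an arithmetic progression in $\F_3$ with nonzero common difference $n \cdot v$, and such $n$ exists because $|v^\perp| = 3^{m-1} < 3^m$. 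Otherwise, the differences $p_0 - p_1$ and $p_0 - p_2$ are linearly independent and span a $2$-dimensional subspace $W$ (and the third difference lies in $W$ as well); here $n$ fails the requirement precisely when $n \in A \cup B \cup C$, where $A, B, C$ are the orthogonal hyperplanes of the three differences. The crucial observation is that each pairwise intersection---and the triple intersection---of $A, B, C$ equals $W^\perp$, since any two of the three differences already span $W$. Inclusion-exclusion then yields
\[
|A \cup B \cup C| = 3 \cdot 3^{m-1} - 3 \cdot 3^{m-2} + 3^{m-2} = 3^m - 2 \cdot 3^{m-2},
\]
which is strictly less than $3^m$, so a suitable nonzero $n$ exists.

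The main obstacle is the inclusion-exclusion count in the non-collinear case---verifying that the three orthogonal hyperplanes of the differences have coinciding intersection structure, so that their union is a proper subset of $\F_3^m$. The hypothesis $\delta(f,0) \ge 1/3^{m-1}$ (equivalently $|S| \ge 3$) is sharp for this method: it is exactly what provides three points to work with, and with only one or two points one cannot control enough bad directions. This mirrors why the binary Lemma~\ref{lemma:binarylift}---whose hypothesis permits $|S| = 1$---uses induction instead.
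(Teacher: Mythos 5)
Your proof is correct and takes a genuinely different, cleaner route than the paper's. The paper fixes the coordinate hyperplane family $\{x_m = i\}_{i \in \F_3}$ and runs an induction on $m$ whose inductive step requires a delicate construction (translating $(m-2)$-dimensional subspaces across slices) precisely because one of the three coordinate slices may miss the support entirely. You instead choose the parallel hyperplane family adaptively: the hypothesis $\delta(f,0) \ge 3^{-(m-1)}$ gives $|S| \ge 3$, and you show that any three distinct points of $\F_3^m$ can be separated by some parallel class of hyperplanes --- trivially in the collinear case, and via the inclusion-exclusion count $3\cdot 3^{m-1} - 3\cdot 3^{m-2} + 3^{m-2} = 7\cdot 3^{m-2} < 3^m$ in the affinely independent case, where the key structural fact (correctly identified) is that any two of the three pairwise differences already span the same plane $W$, so every pairwise intersection and the triple intersection of the three orthogonal hyperplanes all equal $W^\perp$. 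Once all three slices have positive density, the averaging identity $\delta_0 + \delta_1 + \delta_2 = 3\delta(f,0)$ finishes; the reliance on $m \ge 2$ (so that $W^\perp$ makes sense and the complement has size $2\cdot 3^{m-2} \ge 2$) matches the statement's hypothesis. This eliminates both the induction and the paper's hardest subcase. One small side remark: your observation that the binary Lemma~\ref{lemma:binarylift} ``uses induction instead'' because its hypothesis permits $|S|=1$ is not quite the reason --- your separation method adapts to the binary case too, since $|S|=1$ is immediate (any hyperplane through the lone point has density $2^{-(m-1)} < \delta$) and $|S| \ge 2$ admits a two-point separation. The paper's use of induction there is a stylistic choice, not a necessity forced by $|S|=1$.
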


\begin{proof}
Let $\delta = \delta(f,0)$.
We proceed by induction on $m$. For the base case $m=2$,
$\delta \ge \frac{1}{3}$. Suppose $f = f(x,y)$ and consider $f|_{y=i}$
for $i \in \F_3$. If $f|_{y=i}$ is not identically zero for all $i \in \F_3$,
then by averaging there is some $i \in \F_3$ for which
$0 < \Pr_{x \in \F_3}[f(x,i) \ne 0] \le \delta$. Otherwise, w.l.o.g.
suppose $f|_{y=2} \equiv 0$. Further, w.l.o.g. suppose
$f|_{y=0} \not\equiv 0$ and $f(0,0) \ne 0$. Now, if $\delta \ge \frac23$, then
the line $H = \{(x,y) \in \F_3^2 \mid x=0\}$ does the job, since
$0 < \Pr_{y \in \F_3} [f(0,y) \ne 0] \le \frac23 \le \delta$. If $\delta < \frac23$,
then there must exist some $a,b \in \F_3$ and $c \in \{0,1\}$ such that
$f(a,c) \ne 0$ and $f(b,1-c) = 0$. Then the line $H = \{(a,c), (b,1-c), (2b-a,2)\}$
does the job, since $0 < \Pr_{(x,y) \in H} [f(x,y) \ne 0] = \frac13 \le \delta$.

Now suppose $m > 2$ and the assertion holds for $m-1$.
For $i \in \F_3$, let $H_i$ be the hyperplane cut out by $x_m = i$ and
let $\delta_i = \delta(f|_{H_i},0)$. Then $\delta_1 + \delta_2 + \delta_3 = 3\delta$.
If $\delta_i > 0$ for all $i \in \F_3$, then by simple averaging for some $i \in \F_3$
we have $0 < \delta_i \le \delta$, so assume w.l.o.g. $\delta_2 = 0$
and $\delta_0 \ge \delta_1$. First suppose $\delta_0 \ge \frac{1}{3^{m-2}}$.
Then, by the inductive hypothesis, there exists an $(m-2)$-dimensional
affine subspace $H \subset H_0$ such that $0 < \delta(f|_{H},0) \le \delta_1$.
Let $H^{(0)}$ be defined by the linear equations $\sum_{i=1}^m a_i x_i - a_0 = 0$
and $x_m = 0$ for some $\langle a_0,\ldots,a_m \rangle \in \F_3^{m+1}$.
For each $i,j \in \F_3$, let $H^{(i)}+j \subset H_1$ denote the affine subspace defined by
$\sum_{i=1}^m a_ix_i - a_0 = j$ and $x_m = i$. By averaging, for some $i \in \F_3$,
$\delta(f|_{H^{(1)}+i},0) \le \delta_2$.
Take $H = H^{(0)} \cup (H^{(1)}+i) \cup (H^{(2)}+2i)$. Then
$0 < \delta(f|_H,0) \le \delta$. Otherwise, suppose
$\frac{1}{3^{m-2}} > \delta_0$, so $\delta_0, \delta_1 \le \frac{2}{3^{m-1}}$.
There exists $H^{(0)} \subset H_0$ be an $(m-2)$-dimensional
affine subspace such that $\delta(f|_{H^{(0)}},0) = \frac{1}{3^{m-1}}$.
To see this, let $a,b \in H_0$ such that $f(a), f(b)$ are nonzero, and suppose
$a$ and $b$ differ in the $k$-th coordinate. Then take $H^{(0)}$ defined by
$x_k = a_k$ and $x_m = 0$. Again, for $i,j \in \F_3$ let $H^{(j)}+i$ be the
$(m-2)$-dimensional affine subspace defined by $x_k = a_k+i$ and $x_m = j$.
Since $\delta_2 \le \frac{2}{3^{m-2}}$, there is $i \in \F_3$ such that
$f|_{H^{(1)}+i} \equiv 0$. Then, taking
$H = H^{(0)} \cup (H^{(1)}+i) \cup (H^{(2)}+2i)$, we have
$0 < \delta(f|_H,0) = \frac{1}{3^{m-1}} \le \delta$.
\end{proof}

\begin{proof}
[Proof of Lemma~\ref{lem:3arylift}]
We prove the lemma by induction on $m-t$. The inductive step is
straightforward since $\lift_m(\calf) = \lift_m(\lift_{m-1}(\calf))$ and by induction
both lifts on the RHS have smaller value of $m-t$ and so the distance does not
reduce in either step. The main case is thus the base case with $m=t+1$.

Suppose $f \in \lift_m(\calf) \subsetneq \{\F_3^m \to \F_3\}$ such that
$0 < \delta(f,0) < \delta(\calf)$. If $\delta(f,0) \ge \frac{1}{3^{m-1}}$, then,
by Lemma~\ref{lemma:3arylift}, there exists
an $(m-1)$-dimensional affine subspace $H \subset \F_3^m$ such that
$0 < \delta(f|_H,0) \le \delta(f,0) < \delta(\calf)$, contradicting the fact that
$f|_H \in \calf$.
If $\delta(f,0) < \frac{1}{3^{m-1}}$, then there are at most two points $a,b \in \F_3^m$
such that $f(a),f(b)$ are nonzero. Let $i \in [m]$ such that $a_i \ne b_i$ and
let $H$ be the hyperplane defined by $x_i = a_i$. Then $f|_H$ is nonzero only on $a$,
so $0 < \delta(f|_H) = \frac{1}{3^{m-1}} < \delta(\calf)$, again contradicting the fact
that $f|_H \in \calf$.
\end{proof}

For general $q > 3$, we have the following.

\begin{lemma}\label{lem:genarylift}
If $\cF \subseteq \{\F_Q^t \to \F_q\}$ has distance $\delta(\cF) = \delta$,
then $\delta(\lift_m(\cF)) > \delta - \frac{1-\delta}{Q^t-1}$.
\end{lemma}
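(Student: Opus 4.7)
My plan is to run a double-counting argument on pairs (affine subspace, point), using a single ``base point'' $x$ where $f$ is nonzero as an anchor. The intuition is that every $t$-flat through $x$ restricts to a \emph{nonzero} codeword of $\calf$ (since $f(x) \neq 0$), so each such flat is forced to carry at least $\delta Q^t - 1$ additional nonzero values of $f$ off of $x$, and counting these pairs two ways converts that local density estimate into a global lower bound on $|\supp(f)|$.

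Concretely, let $f \in \Lift_m(\calf)$ be nonzero and set $\eta = \delta(f,0) = |\supp(f)|/Q^m$. First I would pick any $x$ with $f(x) \neq 0$ and let $\mathcal{H}$ be the set of all $t$-dimensional affine subspaces $H \subseteq \F_Q^m$ containing $x$. Since $f|_H \in \calf$ is nonzero, $|\supp(f|_H)| \geq \delta Q^t$, so
\[
|\{y \in H \setminus \{x\} : f(y) \neq 0\}| \geq \delta Q^t - 1.
\]
Summing over $H \in \mathcal{H}$ gives a lower bound of $|\mathcal{H}|(\delta Q^t - 1)$ on the number of pairs $(H,y)$ with $H \in \mathcal{H}$, $y \in H \setminus \{x\}$, and $f(y) \neq 0$. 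By the $\GL$-symmetry that fixes $x$, every $y \neq x$ lies in the same number $M$ of members of $\mathcal{H}$, so the same count is at most $(|\supp(f)| - 1) \cdot M$.

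The ratio $|\mathcal{H}|/M$ is computed by the same double-counting identity applied without the restriction $f(y) \neq 0$: $|\mathcal{H}|(Q^t - 1) = (Q^m - 1) M$, so $|\mathcal{H}|/M = (Q^m - 1)/(Q^t - 1)$. Combining,
\[
|\supp(f)| - 1 \;\geq\; \frac{(Q^m - 1)(\delta Q^t - 1)}{Q^t - 1}.
\]
Dividing by $Q^m$ and using the identity $\frac{\delta Q^t - 1}{Q^t - 1} = \delta - \frac{1 - \delta}{Q^t - 1}$, one gets
\[
\eta \;\geq\; \frac{1}{Q^m} + \left(1 - \frac{1}{Q^m}\right)\left(\delta - \frac{1-\delta}{Q^t - 1}\right),
\]
which is strictly greater than $\delta - \frac{1-\delta}{Q^t - 1}$ provided $\delta < 1$ (the case $\delta = 1$ forces $\calf = \{0\}$ and is vacuous).

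There is really no main obstacle here: the argument is elementary. The one step that needs a little care is justifying that the count $M$ is the same for every $y \neq x$; this follows either from the transitivity of the affine group stabilizing $x$ on $\F_Q^m \setminus \{x\}$, or, more concretely, from the fact that the $t$-flats through $x$ and $y$ are in bijection with $(t-1)$-dimensional linear subspaces of $\F_Q^m$ containing the nonzero vector $y - x$, whose count depends only on $m$ and $t$. Everything else is routine algebra.
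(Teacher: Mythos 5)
Your proof is correct and is essentially the paper's argument in a different guise: the paper fixes a nonzero point $a$, takes the expectation of $X(A) = |\{x \in A : f(x) \neq 0\}|$ over a uniformly random $t$-flat $A$ through $a$, and compares this to the pointwise bound $X(A) \geq \delta Q^t$; your double count of pairs $(H,y)$ and the ratio $|\mathcal{H}|/M = (Q^m-1)/(Q^t-1)$ is exactly that expectation computation written out. One small quibble: your aside that $\delta = 1$ forces $\calf = \{0\}$ is not quite right (the code of constant functions is a nontrivial linear affine-invariant code with distance $1$, and its lift again has distance $1$, so the strict inequality in the lemma actually fails there), but the paper's own proof has the same degenerate-case gap, so this is a defect of the lemma's statement rather than of your argument.
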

\begin{proof}
Fix a non-zero $f \in \lift_m(\cF)$ and let $\tau = \delta(f,0)$.
Fix $a \in \F_Q^m$ such that $f(a) \ne 0$.
Now let $A$ be a $t$-dimensional affine subspace containing $a$ 
chosen uniformly at random from all such subspaces.
Let $X(A) = |\{x \in A \mid f(x) \ne 0\}|$ be the random variable 
denoting the number of non-zero points of $f$ on $A$.
Since $A$ samples every point of $\F_Q^n - \{a\}$ uniformly,
we have
$$
\E_A[X(A)] = 1 + \frac{\tau Q^m-1}{Q^m-1}(Q^t-1) < 1 +
\tau(Q^t-1).$$
Therefore there must exist a $t$-dimensional subspace $A$
containing $a$ with $X(A) < \tau(Q^t - 1) + 1$. Since $f|_A$
is a non-zero function in $\calf$, we have $\tau(Q^t - 1) + 1 \geq 
\delta Q^t$ and thus we conclude that $\tau \geq \delta - 
\frac{1 - \delta}{Q^t-1}$. In other words every non-zero 
function in $\calf$ is non-zero on $\delta - \frac{1 - \delta}{Q^t - 1}$
fraction of the points, as asserted.
\end{proof}

Finally we mention examples which show that, in some senses
the gaps in
Theorem~\ref{thm:dist}, Parts~\ref{dist-two}~and~\ref{dist-three}
are inherent.

First note that if $\calf = \{F_Q^t \to \F_q\}$ then
$\Lift_m(\calf) = \{\F_Q^m \to \F_q\}$ whose distance is
$Q^{-m}$, and so some loss in the distance is inherent in
Part~\ref{dist-two} of Theorem~\ref{thm:dist}.
However, one could hope that if $\calf \subsetneq
\{F_Q^t \to \F_q\}$ then its distance is preserved by
lifting (as in Part~\ref{dist-three} of Theorem~\ref{thm:dist}).
Unfortunately (actually fortunately, since this is where
the rate improvement of codes in Theorem~\ref{thm:one} comes from)
even this hope is not true.
If one takes $\calf$ to be the binary code with degree set being all
weight one integers, then its lift contains all the
weight one integers as well as some integers of
weight greater than one. The code
consisting of only weight one integers in its degree set
has distance exactly $1/2$ while codes that have rate greater
than these must have distance strictly smaller than $1/2$ (by
the Plotkin bound). This suggests that distances can reduce
under lifts. A search reveals that the code $\calf \subseteq
\{\F_4 \to \F_2\}$ with degree set $\Deg(\calf) = \{0,1,2\}$
has distance $1/2$ while its lift $\calL = \Lift_2(\calf)$
has distance $3/8$.

\section*{Acknowledgments}

We would like to thank Sergey Yekhanin for introducing us to
the projective space codes which led to the parameter settings of
Section~\ref{ssec:two}.
We would like to thank Elad Haramaty for clarifying discussions
on the relationship between
the definition of lifting in \cite{BGMSS11-ECCC} and
in this work.

\bibliographystyle{plain}
\bibliography{coding}

\appendix

\section{Equivalence of invariance under affine transformations
and permutations}
\label{app:perm}

In their work initiating the study of the testability 
of affine-invariant properties
(codes), Kaufman and Sudan~\cite{KS08-ECCC} studied properties
closed under general affine transformations and not just permutations.
While affine transformations are nicer to work with when available,
they are not mathematical elegant (they don't form a group under
composition). Furthermore in the case of codes they also do not
preserve the code - they only show that every codeword stays in
the code after the transformation. Among other negative features
affine transformations do not even preserve the weight of non-zero
codewords, which can lead to some rude surprises.
Here we patch the gap by showing that families closed under
affine permutations are also closed under affine transformations.
So one can assume the latter, without restricting the class
of properties under consideration.
We note that such a statement was proved in \cite{BGMSS11-ECCC} for the
case of univariate functions. Unfortunately their proof does
not extend to the multivariate setting and forces us to rework
many steps from \cite{KS08-ECCC}.

\begin{theorem}\label{thm:invariance}
If $\cF \subseteq \{\F_Q^m \to \F_q\}$ is an $\F_q$-linear code
invariant under affine permutations, then
$\cF$ is invariant under all affine transformations.
\end{theorem}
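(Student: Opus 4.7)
The approach is to rework the Kaufman--Sudan structural analysis of affine-invariant codes under the weaker hypothesis of affine-permutation-invariance, in three steps.

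Step 1 (Monomial structure). I would first show that $\cF$ is spanned, as an $\F_q$-vector space, by the ``trace-monomials'' $\Tr_{\F_Q/\F_q}(\lambda \vec{x}^{\vec{d}})$ with $\vec{d} \in \Deg(\cF)$ and $\lambda \in \F_Q$ (those that themselves lie in $\cF$). The tool is the family of diagonal scaling permutations $D_{\vec{\alpha}}(\vec{x}) = (\alpha_1 x_1,\ldots,\alpha_m x_m)$ for $\vec{\alpha} \in (\F_Q^*)^m$, which are affine permutations. For any $f \in \cF$, $f \circ D_{\vec{\alpha}}$ multiplies each monomial $\vec{x}^{\vec{d}}$ by $\vec{\alpha}^{\vec{d}}$, and appropriate $\F_q$-linear combinations over $\vec{\alpha}$ project out individual Frobenius orbits of monomials. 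The $q$-shift closure of $\Deg(\cF)$ (Proposition~\ref{prop:q-shift}), which requires only $\F_q$-linearity, identifies these orbits as the atomic units extractable by $\F_q$-linear combinations.

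Step 2 ($p$-shadow closure). Next, I would show that $\Deg(\cF)$ is $\leq_p$-closed: if $\vec{d} \in \Deg(\cF)$ and $\vec{e} \leq_p \vec{d}$, then $\vec{e} \in \Deg(\cF)$. Given a trace-monomial $\Tr(\lambda\vec{x}^{\vec{d}}) \in \cF$ (from Step 1), translations $\tau_{\vec{b}}$ are affine permutations, so $\Tr(\lambda(\vec{x}+\vec{b})^{\vec{d}}) \in \cF$ for every $\vec{b} \in \F_Q^m$. The multinomial expansion, together with Lucas's theorem (Lemma~\ref{lemma:genlucas}), gives
\[
\Tr(\lambda(\vec{x}+\vec{b})^{\vec{d}}) = \sum_{\vec{e} \leq_p \vec{d}} \tbinom{\vec{d}}{\vec{e}} \Tr\bigl(\lambda \vec{b}^{\vec{d}-\vec{e}} \vec{x}^{\vec{e}}\bigr).
\]
Choosing $\F_q$-valued weights $c_{\vec{b}}$ so that $\sum_{\vec{b}} c_{\vec{b}} \vec{b}^{\vec{d}-\vec{e}}$ vanishes except for one distinguished $\vec{e}_0 \leq_p \vec{d}$ (a Vandermonde-type duality), the corresponding $\F_q$-linear combination of translates isolates a nonzero trace-monomial $\Tr(\mu \vec{x}^{\vec{e}_0}) \in \cF$, hence $\vec{e}_0 \in \Deg(\cF)$.

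Step 3 (Conclusion). For any affine transformation $A(\vec{x}) = M\vec{x}+\vec{b}$ and any trace-monomial generator $\Tr(\lambda \vec{x}^{\vec{d}}) \in \cF$, the composition $\Tr(\lambda(M\vec{x}+\vec{b})^{\vec{d}})$ expands (exactly as in the proof of Proposition~\ref{prop:lift-deg}) as an $\F_q$-linear combination of trace-monomials $\Tr(\mu \vec{x}^{\vec{e}})$ with $\vec{e} \leq_p \vec{d}$. By Step 2 each such $\vec{e}$ lies in $\Deg(\cF)$, and by Step 1 the relevant trace-monomials all lie in $\cF$, so the composition does too. Since every $f \in \cF$ is an $\F_q$-linear combination of trace-monomials (Step 1), $f \circ A \in \cF$ follows.

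The main obstacle is Step 2 in the multivariate setting. As the paper flags, the univariate argument of \cite{BGMSS11-ECCC} does not extend directly: one must analyze multivariate character-like sums $\sum_{\vec{b} \in \F_Q^m} c_{\vec{b}} \vec{b}^{\vec{d}-\vec{e}}$ and arrange the desired isolation while restricting to $\F_q$-valued coefficients $c_{\vec{b}}$, even though $\vec{b}^{\vec{d}-\vec{e}}$ lives in $\F_Q$. The trace operator is the essential bridge between $\F_Q$-valued inner coefficients and $\F_q$-linearity of $\cF$, but the combinatorial bookkeeping (separating all $\vec{e}$'s $\leq_p \vec{d}$ simultaneously, while respecting the fact that outputs of the sum live in the larger field $\F_Q$) is the delicate part that replaces the corresponding single-variable step.
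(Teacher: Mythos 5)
Your three-step outline is the natural multivariate generalization of the standard univariate argument (extract trace-monomials via scalings, prove $\leq_p$-closure via translations, conclude by the multinomial expansion), and it is a genuinely different route from the paper's, which reduces the theorem to the single degenerate substitution $x_m \mapsto 0$ and then works at the level of Frobenius orbits. However, there are two gaps, one of which you did not flag.

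\emph{Gap in Step 1.} Diagonal scalings alone cannot produce a trace-monomial decomposition, because they cannot separate \emph{weakly equivalent} monomials --- pairs $\vec d, \vec e$ with $d_i \equiv q^j e_i \pmod{Q-1}$ for all $i$, which allows $0 \leftrightarrow Q-1$ swaps in any coordinate. Concretely, $\vec\alpha^{\vec d} = \vec\alpha^{\vec e}$ for every $\vec\alpha \in (\F_Q^*)^m$ whenever $\vec d$ and $\vec e$ are weakly equivalent, so no $\F_q$- (or even $\F_Q$-) linear combination of the $f \circ D_{\vec\alpha}$ can project them apart. Scalings thus only reduce $f$ to a sum supported on single weak-equivalence classes, not to genuine Frobenius orbits. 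The paper expends substantial effort (Lemma~\ref{lem:zeroQ-1}, a triple induction using translations to exploit $\sum_{\beta}(z+\beta)^{Q-1}=-1$) precisely to resolve this residual case, and this is where most of the work in the multivariate extension actually lies. You cannot get Step 1 from scalings alone.

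\emph{Gap in Step 2, and how the paper sidesteps it.} You correctly flag the $\F_q$-coefficient constraint in the Vandermonde-type isolation as the obstruction, but the paper does not merely handle it more carefully --- it \emph{avoids proving $\leq_p$-closure of $\Deg(\cF)$ altogether}. Once $f$ is written as a sum of basic functions, the support of each piece is an exact $q$-shift orbit, and such orbits respect the predicate ``$d_m = 0$'' (since $q^j \cdot 0 \modstar Q = 0$), whereas neither weak equivalence nor the $\leq_p$ order does. So the reduction to showing $f(x_1,\ldots,x_{m-1},0)\in\cF$ succeeds with only orbit-level separation, never needing the full downward closure your Step 2 demands. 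Your approach asks for strictly more than the theorem requires, and the extra demand is exactly where the multivariate difficulty concentrates; resolving it along the lines you sketch remains open in your writeup.
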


The central lemma (Lemma~\ref{lem:split})
that we prove is that every non-trivial function
can be split into more basic ones.
This leads to a proof of Theorem~\ref{thm:invariance}
fairly easily.

We first start with the notion of a basic function.
For $Q = q^n$, let $\Tr:\F_Q \to \F_q$ denote the {\em trace} function
$\Tr(x) = x + x^q + \cdots + x^{q^{n-1}}$.
We say that $f:\F_Q^m \to \F_q$ is a {\em basic} function if
$f(\vec x) = \Tr(\lambda \vec x^{\vec d})$ for some
$\vec d \in \{0,\ldots,Q-1\}^m$.
For $\calf \subseteq \{\F_Q^m \to \F_q\}$ and $f \in \calf$
we say $f$ can be {\em split} (in $\calf$) if there exist
functions $g$ and $h$ such that $f = g+h$ and
$\supp(g), \supp(h) \subsetneq \supp(f)$.

\begin{lemma}
\label{lem:split}
If $\cF \subseteq \{\F_Q^m \to \F_q\}$ is an $\F_q$-linear code
invariant under affine permutations, then for
every function $f \in \calf$, $f$ is either basic or $f$ can be
split.
\end{lemma}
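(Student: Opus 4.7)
The plan is to find, for any $f \in \calf$ whose support $\supp(f)$ spans more than one Frobenius $q$-shift orbit (i.e., $f$ is not basic), a non-zero $g \in \calf$ with $\supp(g) \subsetneq \supp(f)$ whose coefficients on $\supp(g)$ are a common $\F_q$-scalar multiple of the corresponding coefficients of $f$. Inverting that scalar then yields a valid split $f = \lambda g + (f - \lambda g)$ in which both pieces have support strictly inside $\supp(f)$.

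The main tool is a character projection using the diagonal-scaling subgroup of affine permutations, $\vec x \mapsto (\alpha_1 x_1, \ldots, \alpha_m x_m)$ for $\vec \alpha \in (\F_Q^*)^m$. For each $\vec d \in \supp(f)$, I would form
\[
g_{\vec d}(\vec x) \,:=\, \sum_{\vec \alpha \in (\F_Q^*)^m} \Tr_{\F_Q/\F_q}\!\left(\prod_{i=1}^m \alpha_i^{-d_i}\right) f(\alpha_1 x_1, \ldots, \alpha_m x_m) \in \calf.
\]
Using the identity $\sum_{\alpha \in \F_Q^*} \alpha^k = -1$ if $(Q-1) \mid k$ and $0$ otherwise, one computes that $g_{\vec d}$ is supported exactly on those $\vec d' \in \supp(f)$ with $\vec d' \equiv q^j \vec d \pmod{Q-1}$ coordinate-wise for some $j$ (the character-class of $\vec d$ intersected with $\supp(f)$), and its coefficient on each such $\vec d'$ is the common $\F_q$-multiple $(-1)^m(n/s) \cdot c_{\vec d'}(f)$, where $s$ is the character-orbit size of $\vec d$. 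Provided this support is a proper subset of $\supp(f)$ and $n/s \not\equiv 0$ in $\F_q$, a suitable rescaling of $g_{\vec d}$ produces the desired split.

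The main obstacle is the degenerate case where $\supp(f)$ lies in a single character class yet spans multiple $q$-shift orbits. A short analysis (using $\gcd(q, Q-1) = 1$) shows that this occurs precisely when two orbits differ in coordinates $i$ whose values both lie in $\{0, Q-1\}$, the only Frobenius-fixed residues identified under the character map. To handle this, I would use translations $\vec x \mapsto \vec x + \beta \vec e_i$ in each such ambiguous coordinate: the difference $f(\vec x + \beta \vec e_i) - f(\vec x) \in \calf$ converts $x_i^{Q-1}$ into $\sum_{k<Q-1} \binom{Q-1}{k} \beta^{Q-1-k} x_i^k$, a sum of strictly lower $x_i$-powers, without disturbing monomials having $d_i = 0$. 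Averaging these translated differences over $\beta$ with an appropriate $\F_q$-valued Vandermonde-type kernel produces new elements of $\calf$ whose supports distinguish $d_i = 0$ from $d_i = Q-1$ in each ambiguous coordinate; iterating over each such coordinate and then applying the scaling projection above reduces the problem to the non-degenerate case. The technically demanding part is the bookkeeping needed to simultaneously preserve $\F_q$-linearity and break all collision-inducing degeneracies in this combined scaling-and-translation extraction.
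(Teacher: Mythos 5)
Your overall plan mirrors the paper's: use diagonal-scaling character projections to isolate the weak-equivalence class of a degree $\vec d$ inside $\supp(f)$, and then handle the remaining degenerate case (all monomials weakly equivalent but not all equivalent) via translations. That is indeed the right architecture. However, there are two genuine gaps in the way you execute it.

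\textbf{The kernel can vanish identically.}
Your projector uses the full trace $\Tr_{\F_Q/\F_q}\!\bigl(\prod_i \alpha_i^{-d_i}\bigr)$ as the $\F_q$-valued weight. As you yourself compute, the resulting coefficient on each surviving monomial picks up a factor of $n/s$ (with $s$ the $q$-shift orbit length of $\vec d$ mod $Q-1$). When $p \mid (n/s)$ this factor is $0$ in $\F_q$, and $g_{\vec d}$ is identically zero. You flag this as a side condition (``provided $n/s \not\equiv 0$'') but never resolve it, and it is not confined to your degenerate case: for instance with $q=2$, $n=4$, $Q=16$ the orbit $\{5,10\}$ has $s=2$, so $n/s \equiv 0$ even though the coordinate values are in $\{1,\dots,Q-2\}$. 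With $\supp(f) = \{0,5,10\}$ (not a single orbit, not a single weak class), every one of your projectors vanishes, so the approach stalls. The paper avoids this by taking a partial trace: it forms $g = \sum_{j=0}^{s-1} f_j$ with $f_j(\vec x) = \sum_{\vec a}\prod_i a_i^{-q^j d_i}\,f(a_1x_1,\dots,a_m x_m)$. Each $f_j$ individually has $\F_Q$-valued weights, but summed over exactly one period $j = 0,\dots,s-1$ the combined weight $\sum_{j=0}^{s-1}\prod_i a_i^{-q^j d_i}$ is $\F_q$-valued (it is a $\Tr_{\F_{q^s}/\F_q}$), and there is no $n/s$ multiplier to kill it. This is the fix you need, but it is not what you wrote down.

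\textbf{The degenerate case is not adequately argued.}
Your plan is to take differences $f(\vec x + \beta \vec e_i) - f(\vec x)$ and average over $\beta$ with a Vandermonde-type kernel. The difficulty is that this operation enlarges the support: $(x_i+\beta)^{Q-1} - x_i^{Q-1}$ introduces new monomials with $x_i$-degree strictly between $0$ and $Q-1$. So the functions you produce do not have support inside $\supp(f)$, and it is not clear how they let you write $f$ as a sum of two pieces each with strictly smaller support. The paper handles this by proving (Lemma~\ref{lem:zeroQ-1}) that when $f = \Tr(\vec y^{\vec d}\, p(\vec x))$ with $p$ supported on exponents in $\{0,Q-1\}$, the individual basic pieces $(x_1\cdots x_b)^{Q-1}\Tr(\lambda \vec y^{\vec d})$ themselves lie in $\calf$. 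That proof is a careful triple induction, and it combines the translation $\sum_\beta f(x_1+\beta,\dots)$ (to collapse $x_1^{Q-1}$ to $-1$) with a further diagonal-scaling sum $\sum_{\alpha\in\F_Q^*} g(\alpha x_1,\dots)$ (to annihilate the stray intermediate-degree terms in $r$). Your sketch names the translation step but omits the scaling cleanup, and it does not make the inductive structure visible; in its current form the claim that ``iterating \dots reduces the problem to the non-degenerate case'' is not substantiated.

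In short: right skeleton, but (i) the projector must use a partial trace over one Frobenius period rather than the full trace, and (ii) the degenerate weakly-equivalent case needs the full inductive argument of Lemma~\ref{lem:zeroQ-1}, which your translation heuristic only gestures at.
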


We first prove Theorem~\ref{thm:invariance} from Lemma~\ref{lem:split}.

\begin{proof}[Proof of Theorem~\ref{thm:invariance}]
First we assert that it suffices to prove that for every function
$f \in \calf$ the function $\tilde{f} =
f(x_1,\ldots,x_{m-1},0)$ is also in $\calf$.
To see this, consider $f \in \calf$ and
$A:\F_Q^m \to \F_Q^m$ which is not
a permutation. Then there exists affine permutations
$B,C: \F_Q^m \to \F_Q^m$ such that
$A(\vec x) = B(C(\vec x)_1,\ldots,C(\vec x)_r,0,\ldots,0)$
where $r<m$ is the dimension of the image of $A$.
By closure under affine permutations, it follows
$f \circ C \in \calf$. Applying the assertion above $m-r$
times we have that $f'(\vec x) = f\circ C (x_1,\ldots,x_r,0,\ldots,0)$
is also in $\calf$. Finally $f\circ A = f' \circ B$ is also in $\calf$.
So we turn to proving that
for every $f \in \calf$ the function $\tilde{f} =
f(x_1,\ldots,x_{m-1},0)$ is also in $\calf$.

Let $f(\vec x) = \sum_{\vec d} c_{\vec d} \vec x^{\vec d}$.  Notice
$\tilde{f}(\vec x) = \sum_{\vec d | d_m = 0} c_{\vec d} \vec x^{\vec d}$.
Writing $f = \tilde{f} + f_1$, we use Lemma~\ref{lem:split} to
split $f$ till we express
it as a sum of basic functions $f = \sum_{i=1}^N b_i$, where each
$b_i$ is a basic function in $\calf$. Note that for every $b_i$,
we have $\supp(b_i) \subseteq \supp(\tilde{f})$ or
$\supp(b_i) \subseteq \supp(f_1)$ (since the trace preserves
$d_m = 0$).
By reordering the $b_i$'s assume the first $M$ $b_i$'s have their
support in the support of $\tilde{f}$. Then
we have $\tilde{f} = \sum_{i=1}^M b_i \in \calf$.
\end{proof}

We thus turn to the proof of Lemma~\ref{lem:split}.
We prove the lemma in a sequence of cases, based on the
kind of monomials that $f$ has in its support.

We say that $\vec d$ and $\vec e$ are equivalent (modulo $q$),
denoted $\vec d \equiv_q \vec e$ if there exists a $j$
such that for every $i$, $d_i = q^j e_i \modstar Q$.
The following proposition is immediate from previous works
(see, for example, \cite{BGMSS11-ECCC}). We include a
proof for completeness.

\begin{proposition}
If every pair $\vec d,\vec e$ in the support of
$f:\F_Q^m \to \F_q$ are equivalent, then $f$ is a basic function.
\end{proposition}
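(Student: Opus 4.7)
The plan is to exploit the constraint imposed on the coefficients of $f$ by the fact that $f$ takes values in $\F_q \subseteq \F_Q$, which forces $f(\vec x)^q = f(\vec x)$ as a function on $\F_Q^m$. Writing $f(\vec x) = \sum_{\vec d} c_{\vec d} \vec x^{\vec d}$ with $c_{\vec d} \in \F_Q$, applying Frobenius term-by-term and reducing $x_i^Q$ to $x_i$ yields the relation $c_{q \vec d \modstar Q} = c_{\vec d}^{\,q}$ for every $\vec d$. First I would record this identity as the single structural fact driving the argument.

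Next I would use the hypothesis that all elements of $\supp(f)$ lie in one $\equiv_q$-class. Fix any representative $\vec d_0 \in \supp(f)$ and let $n'$ be the least positive integer with $q^{n'} \vec d_0 \equiv \vec d_0 \pmod{Q-1}$; the orbit is then $\{q^j \vec d_0 \modstar Q : 0 \le j < n'\}$. By the Frobenius relation above, every coefficient is determined by $\lambda_0 := c_{\vec d_0}$ via $c_{q^j \vec d_0 \modstar Q} = \lambda_0^{q^j}$, and setting $j = n'$ forces $\lambda_0 \in \F_{q^{n'}}$. Substituting and collecting gives the clean expression
\[
f(\vec x) \;=\; \sum_{j=0}^{n'-1} \bigl(\lambda_0 \vec x^{\vec d_0}\bigr)^{q^j} \;=\; \Tr_{\F_{q^{n'}}/\F_q}\!\bigl(\lambda_0 \vec x^{\vec d_0}\bigr).
\]

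The final step, which is the only place requiring care, is to promote this partial trace to the full trace $\Tr = \Tr_{\F_Q/\F_q}$ demanded by the definition of ``basic.'' For this I would invoke the surjectivity of the intermediate trace $\Tr_{\F_Q/\F_{q^{n'}}}: \F_Q \twoheadrightarrow \F_{q^{n'}}$ to pick $\lambda \in \F_Q$ with $\Tr_{\F_Q/\F_{q^{n'}}}(\lambda) = \lambda_0$. Then by transitivity of the trace,
\[
\Tr(\lambda \vec x^{\vec d_0}) \;=\; \Tr_{\F_{q^{n'}}/\F_q}\!\bigl(\Tr_{\F_Q/\F_{q^{n'}}}(\lambda) \cdot \vec x^{\vec d_0}\bigr) \;=\; \Tr_{\F_{q^{n'}}/\F_q}\!\bigl(\lambda_0 \vec x^{\vec d_0}\bigr) \;=\; f(\vec x),
\]
so $f$ is basic as claimed. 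The main (mild) obstacle is precisely this last transitivity/surjectivity maneuver, since the orbit may be a proper divisor of $n$; everything else is a direct unpacking of the $f^q = f$ identity.
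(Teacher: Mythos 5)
Your proof is correct, and it takes a genuinely different route from the paper's. The paper's argument is ``top-down'': it first observes that surjectivity of $\Tr:\F_Q\to\F_q$ lets one write $f=\Tr\circ g$ for some polynomial $g\in\F_Q[\vec x]$, restricts the support of $g$ to the equivalence class, and then collapses $g$ to a single monomial using the identity $\Tr(\alpha\vec x^{\vec d})=\Tr(\alpha^q\vec x^{q\vec d \modstar Q})$. Your argument is ``bottom-up'': you extract the coefficient relation $c_{q\vec d\modstar Q}=c_{\vec d}^{\,q}$ directly from $f^q=f$, observe that this forces $\supp(f)$ to be a single Frobenius orbit with all coefficients determined by one $\lambda_0\in\F_{q^{n'}}$, write $f$ as a partial trace $\sum_{j=0}^{n'-1}(\lambda_0\vec x^{\vec d_0})^{q^j}$, and then lift to the full $\Tr_{\F_Q/\F_q}$ by surjectivity of $\Tr_{\F_Q/\F_{q^{n'}}}$ together with transitivity of the trace. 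Your version is more explicit about the case where the orbit length $n'$ is a proper divisor of $n$ --- exactly the subtlety you flag --- and spells out details that the paper's proof compresses (e.g.\ why $g$ can be taken with support inside the orbit, and how the coefficients recombine). The paper's version is shorter but relies on the reader to see that these choices can be made consistently. Your approach is in fact closer in spirit to the coefficient-counting argument the paper uses to prove that $\dim\calf=|\Deg(\calf)|$, so it arguably unifies the two better. One tiny omission: you should dispatch $f\equiv 0$ (trivially basic with $\lambda=0$) before speaking of ``the orbit'' of a support element, but this is cosmetic.
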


\begin{proof}
We first note that since the $\Tr:\F_Q \to \F_q$ is a
$(Q/q)$-to-one function, we have in particular that for
every $\beta \in \F_q$ there is an $\alpha \in \F_Q$ such
that $\Tr(\alpha) = \beta$.
As an immediate consequence we have that
every function $f:\F_Q^m \to \F_q$
can be expressed $\Tr \circ g$ where $g:\F_Q^m \to \F_Q$.
Finally we note that we can view $g$ as an element of
$\F_Q[\vec x]$, to conclude that $f = \Tr \circ g$ for some
polynomial $g$.

Now fix $f:\F_Q^m \to \F_q$ all of whose monomials are equivalent.
By the above we can express $f = \Tr \circ g$ for some polynomial
$g$. By inspection we can conclude that all monomials in the support
of $g$ are equivalent to the monomials in the support of $f$.
Finally,
using the fact that $\Tr(\alpha \vec x^{\vec d})
= \Tr(\alpha^q \vec x^{q \vec d \modstar Q})$ we can assume that
$g$ is supported on a single monomial and so $f = \Tr(\lambda \vec x^{\vec
d}$ for some $\lambda \in \F_Q$.
\end{proof}

So it suffices to show that every function that contains
non-equivalent degrees in its support can be split. We
first prove that functions with ``non-weakly-equivalent'' monomials
can be split.

We say that $\vec d$ and $\vec e$ are weakly equivalent
if there exists a $j$ such that for every $i$,
$d_i = q^j e_i (\mod Q-1)$.

\begin{lemma}
\label{lem:non-weak-equiv-split}
If $\cF \subseteq \{\F_Q^m \to \F_q\}$ is an $\F_q$-linear code
invariant under affine permutations and $f \in \calf$ contains
a pair of non-weakly equivalent monomials in its support, then
$f$ can be split.
\end{lemma}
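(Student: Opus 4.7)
The plan is to exploit the action of diagonal scalings (a subgroup of affine permutations) and invoke representation theory of the abelian group $G = (\F_Q^\ast)^m$ over $\F_q$. For each $\lambda = (\lambda_1,\ldots,\lambda_m) \in G$, the map $A_\lambda \colon \vec{x} \mapsto (\lambda_1 x_1,\ldots,\lambda_m x_m)$ is an affine permutation, so $f_\lambda := f \circ A_\lambda \in \calf$; any $\F_q$-linear combination $g_\mu := \sum_{\lambda \in G} \mu(\lambda)\, f_\lambda$ with $\mu \colon G \to \F_q$ therefore also lies in $\calf$.

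Writing $f = \sum_{\vec{d}'} c_{\vec{d}'}\, \vec{x}^{\vec{d}'}$, one has $f_\lambda = \sum_{\vec{d}'} c_{\vec{d}'}\, \chi_{\vec{d}'}(\lambda)\, \vec{x}^{\vec{d}'}$ where $\chi_{\vec{d}'}(\lambda) = \prod_i \lambda_i^{d'_i}$, so the coefficient of $\vec{x}^{\vec{d}'}$ in $g_\mu$ equals $c_{\vec{d}'} \cdot \sum_\lambda \mu(\lambda)\,\chi_{\vec{d}'}(\lambda)$. The goal is to choose $\mu$ so that this factor equals $1$ when $\vec{d}'$ lies in the weak equivalence class $[\vec{d}]$ of the given $\vec{d}$, and equals $0$ otherwise; granted such $\mu$, the decomposition $f = g_\mu + (f - g_\mu)$ is the desired split, since $\vec{d} \in \supp(g_\mu)$ and $\vec{e} \notin \supp(g_\mu)$, while $\vec{e} \in \supp(f - g_\mu)$ and $\vec{d} \notin \supp(f - g_\mu)$.

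For $\mu$ I would take the primitive central idempotent of the group algebra $\F_q[G]$ associated to the Frobenius orbit of $\chi_{\vec{d}}$, namely $\mu(\lambda) = |G|^{-1}\, \Tr_{\F_{q^r}/\F_q}\!\bigl(\chi_{\vec{d}}(\lambda)^{-1}\bigr)$ where $r$ is the orbit size. This is $\F_q$-valued: $|G| = (Q-1)^m$ is coprime to $q$, and $\chi_{\vec{d}}(\lambda) \in \F_{q^r}$ for every $\lambda$ because $q^r \vec{d} \equiv \vec{d} \pmod{Q-1}$ componentwise. The projection identity then follows by expanding $\Tr_{\F_{q^r}/\F_q}\!\bigl(\chi_{\vec{d}}(\lambda)^{-1}\bigr) = \sum_{j=0}^{r-1} \chi_{q^j \vec{d}}(\lambda)^{-1}$ and invoking the orthogonality relation $|G|^{-1}\sum_\lambda \chi(\lambda)^{-1} \chi'(\lambda) = \delta_{\chi,\chi'}$ for $\F_Q^\ast$-valued characters of $G$.

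The main technical obstacle will be this orthogonality calculation, together with the bookkeeping required to identify weak equivalence classes on $\{0,\ldots,Q-1\}^m$ with Frobenius orbits of characters of $G$---the mild subtlety being that coordinates equal to $0$ and $Q-1$ yield the same character in that position, though this is harmless since non-weakly-equivalent $\vec{d}$ and $\vec{e}$ still define genuinely distinct Frobenius orbits. Once the projection is established, the desired splitting of $f$ follows immediately from the disjointness of $[\vec{d}]$ and $[\vec{e}]$.
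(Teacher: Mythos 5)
Your proposal is correct and takes essentially the same approach as the paper. The paper's $g = \sum_{j=0}^{b-1}\sum_{\vec a\in(\F_Q^*)^m}\bigl(\prod_i a_i^{-q^j d_i}\bigr) f(a_1x_1,\dots,a_mx_m)$ is, up to the nonzero scalar $|G|^{-1}$, exactly your $g_\mu$; the paper performs the same diagonal-scaling average over the Frobenius orbit of $\vec d$ and verifies the same support properties directly, just without the group-algebra-idempotent framing.
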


\begin{proof}
Let $\vec d$ and $\vec e$ be two non weakly-equivalent
monomials in the support of $f$.
Fix $j$ and
consider the function $f_j(\vec x) = \sum_{\vec a \in (\F^*_Q)^m}
\prod a_i^{- q^j d_i} f(a_1 x_1,\ldots,a_m x_m)$.
We claim that (1) the support of $f_j$ is a subset of the support
of $f$, (2) $q^j \vec d$ is in the support of $f_j$,
(3) $\vec f$ is in the support of $f_j$ only if for every $i$
$f_i = q^j d_i (\mod Q-1)$ and in particular
(4) $\vec e$ is not in the support of $f_j$.

Now let $b= b(\vec d)$ be the smallest positive integer such
that $q^b d_i = d_i \modstar Q$ for every $i$.
Now consider the function $g = \sum_{j=0}^{b-1} f_j$.
We have that $g \in \calf$ since it is an $\F_q$-linear
combination of linear transforms of functions in $\calf$.
By the claims about the $f_j$'s we also have
that $\vec d$ is in the support of $g$, the support of $g$ is
contained in the support of $f$ and $\vec e$ is not in
the support of $f$. Expressing $f = g + (f - g)$ we
now have that $f$ can be split.
\end{proof}

The remaining cases are those where some of coordinates
of $\vec d$ are zero or $Q-1$ for every $\vec d$ in the support
of $f$. We deal with a special
case of such functions next.

\begin{lemma}
\label{lem:zeroQ-1}
Let $\calf$ be a linear
affine-invariant code. Let $f \in \calf$ be given
by $f(\vec x,\vec y) = \Tr(\vec y^{\vec d} p(\vec x))$ where
every variable in $p(\vec x)$ has degree in $\{0,Q-1\}$ in every
monomial, and $\vec d$ is arbitrary.
Further, let degree of $p(\vec x)$ be $a(Q-1)$. Then for every
$0 \leq b \leq a$ and for every $\lambda \in \F_Q$,
the function $(x_1\cdots x_b)^{Q-1} \Tr(\lambda \vec y^{\vec d}) \in
\calf$.
\end{lemma}

Note that in particular the lemma above implies that such $f$'s
can be split into basic functions.

\begin{proof}
We prove the lemma by a triple induction, first on $a$,
then on $b$, and then on the number of monomials in $p$.
The base case is $a = 0$ and that is trivial.
So we consider general $a > 0$.

First we consider the case $b < a$.
Assume w.l.o.g. that the monomial $(x_1\cdots x_a)^{Q-1}$
is in the support of $p$
and write $p = p_0 + x_1^{Q-1} p_1$ where $p_0,p_1$ do
not depend on $x_1$. Note that $p_1 \ne 0$ and
$\deg(p_1) = (a-1)(Q-1)$. We will prove that $- \Tr(\vec y^{\vec d} p_1(\vec x)) \in \calf$
and this will enable us to apply the inductive hypothesis
to $p_1$. Let $g(\vec x, \vec y) = \sum_{\beta \in \F_Q}
f(x_1+\beta,x_2,\ldots,x_m,\vec y)$.
By construction $g \in \calf$.
By linearity of the Trace we have
$$g = \Tr\left(\vec y^{\vec d}
  \left(\sum_{\beta \in \F_Q} p_0 + (x_1+\beta)^{Q-1}p_1 \right)\right)
  = \Tr(\vec y^{\vec d} (-p_1(\vec x))),$$
where the second equality follows from the fact that
$\sum_{\beta \in \F_Q} (z+\beta)^{Q-1} = -1$.
Thus we can now use induction to claim
$(x_1\ldots x_b)^{Q-1} \Tr(\lambda \vec y^{\vec d}) \in \calf$.

Finally we consider the case $b=a$.
Now note that since the case $b<a$ is known, we can assume
w.l.o.g that $p$ is homogenous (else we can subtract off the
lower degree terms). Now if $a = m$ there is nothing to be
proved since $p$ is just a single monomial. So assume $a < m$.
Also if $p$ has only one monomial then there is nothing to
be proved, so assume $p$ has at least two monomials. In particular
assume $p$ is supported on some monomial that depends on $x_1$
and some monomial that does not depend on $x_1$. Furthermore,
assume w.l.o.g. that a monomial depending on $x_1$ does not
depend on $x_2$. Write
$p = x_1^{Q-1} p_1 + x_2^{Q-1} p_2 + (x_1 x_2)^{Q-1} p_3 + p_4$
where the $p_i$'s don't depend on $x_1$ or $x_2$.
By assumption on the monomials of $p$ we have that $p_1 \ne 0$ and
at least one of $p_2,p_3,p_4 \ne 0$.
Now consider the affine
transform $A$ that sends $x_1$ to $x_1 + x_2$ and preserves all other
$x_i$'s.
We have
$g = f \circ A = \Tr\left(\vec y^{\vec d} (x_1^{Q-1} p_1 +
x_2^{Q-1}(p_1 + p_2) + (x_1x_2)^{Q-1}p_3 +
p_4 + r)\right)$
where the $x_1$-degree of every monomial in $r$ is in
$\{1,\ldots,Q-2\}$.
Now consider $g'(\vec x,\vec y) = \sum_{\alpha \in \F_Q^*} g(\alpha
x_1,x_2,\ldots,x_m,\vec y)$. The terms of $r$ vanish in $g'$
leaving $g' =
-(f \circ A - r) =\Tr\left(\vec y^{\vec d} \left(- x_1^{Q-1} p_1 -
x_2^{Q-1}(p_1 + p_2) - (x_1x_2)^{Q-1} p_3 -
p_4\right)\right)$.
Finally we consider the function $\tilde{g} = f + g' =
\Tr(\vec y^{\vec d} (- x_2^{Q-1} p_1))$
which is a function in $\calf$ of degree $a(Q-1)$ supported on a smaller
number of monomials than $f$, so by applying the inductive hypothesis
to $\tilde{g}$ we have that $\calf$ contains the monomial
$(x_1\cdots x_a)^{Q-1}$.
\end{proof}

The following lemma converts the above into
the final piece needed to prove Lemma~\ref{lem:split}.

\begin{lemma}
\label{lem:weak-equiv-split}
If $\cF \subseteq \{\F_Q^m \to \F_q\}$ is an $\F_q$-linear code
invariant under affine permutations and all monomials in
$f \in \calf$ are weakly equivalent, then $f$ can be split.
\end{lemma}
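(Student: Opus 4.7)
The plan is to reduce $f$ to the form $\Tr(\vec y^{\vec d} p(\vec x))$ handled by Lemma~\ref{lem:zeroQ-1}, and then use the basic functions produced by that lemma to split $f$. First, I would pick a base monomial $\vec d^* \in \supp(f)$ and partition $[m] = S \sqcup \bar S$, where $S$ collects the ``free'' coordinates ($d^*_i \in \{0, Q-1\}$) and $\bar S$ collects the ``fixed'' coordinates ($d^*_i \in \{1, \ldots, Q-2\}$). Since $\gcd(q, Q-1) = 1$, every monomial weakly equivalent to $\vec d^*$ is forced to have $d_i = q^{j(\vec d)} d^*_i \modstar Q$ on fixed coordinates and $d_i \in \{0, Q-1\}$ on free coordinates. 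The assumption that not all monomials are equivalent guarantees that at least two distinct patterns on $S$ occur in $\supp(f)$.

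Next, using an affine permutation to reorder so that $S = \{1, \ldots, |S|\}$, I would write the coordinates as $\vec x$ (free) and $\vec y$ (fixed). Combining the closure of $\supp(f)$ under $q$-shifting (Proposition~\ref{prop:q-shift}), the fact that $q$-shifting fixes the free pattern while cyclically permuting the $\vec y$-part, and the constraint $f_{\vec d}^q = f_{q\vec d \modstar Q}$, one obtains the canonical decomposition
\[
f(\vec x, \vec y) = \Tr_{\F_Q/\F_q}\!\left(\vec y^{\vec d^*|_{\bar S}} \cdot p(\vec x)\right), \qquad p(\vec x) = \sum_T c_T \prod_{i \in T} x_i^{Q-1},
\]
where $T$ ranges over free patterns appearing in $\supp(f)$ and $c_T \in \F_Q$. (If the $q$-orbit of $\vec d^*|_{\bar S}$ has size $n^* < n$, the outer trace should be $\Tr_{\F_{q^{n^*}}/\F_q}$; the remainder of the argument is unchanged.) This is precisely the hypothesis of Lemma~\ref{lem:zeroQ-1}, with $\deg p = a(Q-1)$ for $a = \max\{|T| : c_T \ne 0\}$.

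Invoking Lemma~\ref{lem:zeroQ-1} then yields $(x_1 \cdots x_b)^{Q-1} \Tr(\lambda \vec y^{\vec d^*|_{\bar S}}) \in \calf$ for every $0 \le b \le a$ and every $\lambda \in \F_Q$. Further permuting the free coordinates via affine permutations upgrades this to $g_T := \prod_{i \in T} x_i^{Q-1} \Tr(c_T \vec y^{\vec d^*|_{\bar S}}) \in \calf$ for each $T$ with $c_T \ne 0$, and the canonical decomposition shows $f = \sum_T g_T$. Since at least two distinct $T$'s contribute nonzero $g_T$, choosing any one such $T_1$ and setting $g = g_{T_1}$, $h = f - g$ produces a valid split: both lie in $\calf$, and both supports are strictly contained in $\supp(f)$. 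The main obstacle is proving the canonical decomposition cleanly, particularly in the partial-orbit case; this is where the interaction between $q$-shift closure, the $\F_q$-valued constraint, and the free/fixed dichotomy must be pinned down carefully.
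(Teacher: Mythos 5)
Your proof follows the same route as the paper: partition the variables into those with degrees in $\{0,Q-1\}$ versus $\{1,\ldots,Q-2\}$, write $f$ as $\Tr(\vec y^{\vec d}\,p(\vec x))$ with $p$ a $\{0,Q-1\}$-degree polynomial, and then invoke Lemma~\ref{lem:zeroQ-1} to extract basic summands that split $f$. The only difference is that you spell out two points the paper leaves implicit (the possibility that the $q$-orbit of the $\vec y$-degree is short, and how exactly Lemma~\ref{lem:zeroQ-1} plus coordinate permutations yield the summands $g_T$ and thus the split, along with the implicit hypothesis that $f$ is not already basic), but the approach is the same.
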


\begin{proof}
First we describe the structure of a function $f:\F_Q^m \to \F_q$
that consists only of weakly equivalent monomials. First we
note that the $m$ variables can be separated into those in which
every monomial has degree in $\{1,\ldots,Q-2\}$ and those in which
every monomial has degree in $\{0,Q-1\}$ (since every monomial is
weakly equivalent). Let us denote by $\vec x$ the variables in which
the monomials of $f$ have degree in $\{0,Q-1\}$ and $\vec y$ be the
remaining monomials. Now consider some monomial of the form
$M = c \vec x^{\vec e} \vec y^{\vec d}$ in $f$. Since
$f$ maps to $\F_q$ we must have that the coefficient of $(\vec x^{\vec
e} \vec y^{\vec d})^{q^j}$ is $c^{q^j}$.
Furthermore, we have every other monomial $M'$ in the support of
$f$ is of the form $c' \vec y^{q^j \vec d} \vec x^{\vec e'}$.
Thus $f$ can be written as $\Tr(\vec y^{\vec d} p(\vec x))$ where
$p(x_1,\ldots,x_m) = \tilde{p}(x_1^{Q-1},\ldots,x_m^{Q-1})$.
But, by Lemma~\ref{lem:zeroQ-1}, such an $f$ can be split.
\end{proof}

\begin{proof}[Proof of Lemma~\ref{lem:split}]
If $f$ contains a pair of non-weakly equivalent monomials
then $f$ can be split by Lemma~\ref{lem:non-weak-equiv-split}.
If not, then $f$ is either basic or, by Lemma~\ref{lem:weak-equiv-split}
is can be split.
\end{proof}

We also prove an easy consequence of Lemma~\ref{lem:split}.

\begin{lemma}
\label{lem:monomial}
Let $\calf \subseteq \{\F_Q^m \to \F_q\}$ be affine invariant.
If $\vec d \in \Deg(\calf)$, then $\Tr(\lambda \vec x^{\vec d}) \in \cF$
for all $\lambda \in \F_Q$.
\end{lemma}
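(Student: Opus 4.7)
The plan is to extract $\vec d$ from the support of some code element and then use affine-permutation invariance to sweep over all trace values. Since $\vec d \in \Deg(\calf)$, pick $f \in \calf$ with $\vec d \in \supp(f)$. My first move would be to apply Lemma~\ref{lem:split} iteratively to write $f$ as an $\F_q$-linear combination $\sum_i b_i$ of basic functions $b_i = \Tr(\mu_i \vec x^{\vec e_i})$, each in $\calf$. Since the coefficient of $\vec x^{\vec d}$ in $f$ is nonzero, some $b_i$ has $\vec d \in \supp(b_i)$; by the definition of a basic function, this forces $\vec e_i \in S(\vec d) = \{q^j \vec d \modstar Q\}$. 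Writing $\vec e_i = q^k \vec d \modstar Q$ and using $\vec x^{q^k \vec d \modstar Q} = (\vec x^{\vec d})^{q^k}$ together with $\Tr(z^{q^k}) = \Tr(z)$ for $z \in \F_Q$, I rewrite this basic function as $\Tr(\mu_0 \vec x^{\vec d}) \in \calf$ with $\mu_0 = \mu_i^{q^{n-k}}$. A direct coefficient extraction shows that the coefficient of $\vec x^{\vec d}$ in $\Tr(\mu_0 \vec x^{\vec d})$ equals $\Tr_{\F_Q/\F_{q^r}}(\mu_0)$, where $r := |S(\vec d)|$; this is nonzero since $\vec d \in \supp(b_i)$.

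Next I would exploit affine-permutation invariance via the diagonal scalings $A_{\vec\alpha}: \vec x \mapsto (\alpha_1 x_1, \ldots, \alpha_m x_m)$ for $\vec\alpha \in (\F_Q^*)^m$. Composition yields $\Tr(\mu_0 \vec\alpha^{\vec d} \vec x^{\vec d}) \in \calf$, and $\F_q$-linearity of $\calf$ then gives $\Tr(\mu_0 \beta \vec x^{\vec d}) \in \calf$ for every $\beta$ in the $\F_q$-span $K$ of the multiplicative subgroup $G := \{\vec\alpha^{\vec d} : \vec\alpha \in (\F_Q^*)^m\} \subseteq \F_Q^*$. Since $G$ is closed under multiplication, $K$ is closed under multiplication too, hence is a subfield $\F_{q^s}$ of $\F_Q$ containing $\F_q$. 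A prime-by-prime valuation computation shows $|G| = (Q-1)/g$ with $g = \gcd(Q-1, d_1, \ldots, d_m)$, and that the orbit size $r$ is precisely the minimal $j$ with $(Q-1)/g \mid q^j - 1$, so $s = r$ and $K = \F_{q^r}$.

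To close, consider the $\F_q$-linear map $\phi: \F_Q \to \{\F_Q^m \to \F_q\}$ given by $\phi(\lambda) = \Tr(\lambda \vec x^{\vec d})$. The earlier coefficient computation identifies $\ker\phi$ with $\ker(\Tr_{\F_Q/\F_{q^r}})$, which is an $\F_{q^r}$-subspace of $\F_Q$ by $\F_{q^r}$-linearity of the intermediate trace. Thus the image $V_{\vec d}$ is naturally a $1$-dimensional $\F_{q^r}$-vector space. The previous paragraph places $\phi(\mu_0 \F_{q^r}) \subseteq \calf$, and this is a nonzero $\F_{q^r}$-submodule of $V_{\vec d}$ (nonzero because $\mu_0 \notin \ker\phi$), hence equals all of $V_{\vec d}$. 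Therefore $\Tr(\lambda \vec x^{\vec d}) \in \calf$ for every $\lambda \in \F_Q$. The main subtlety I anticipate lies in the $\F_{q^r}$-module bookkeeping — pinning down the span $K$ as exactly the right subfield and matching it with the kernel of $\phi$; once that is in place, the conclusion reduces to the observation that a $1$-dimensional $\F_{q^r}$-module admits no nonzero proper submodules.
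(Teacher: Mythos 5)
Your proof is correct and follows the same overall route as the paper's: (i) use Lemma~\ref{lem:split} to extract a nonzero basic function $\Tr(\mu_0 \vec x^{\vec d}) \in \calf$; (ii) apply diagonal affine permutations $\vec x \mapsto (\alpha_1 x_1,\ldots,\alpha_m x_m)$ together with $\F_q$-linearity to show that the $\F_q$-span $K$ of $G = \{\vec\alpha^{\vec d} : \vec\alpha \in (\F_Q^*)^m\}$ can be absorbed into the coefficient; (iii) identify $K$ as $\F_{q^r}$ with $r = |S(\vec d)|$; (iv) lift to all $\lambda \in \F_Q$ via the factorization of $\Tr$ through $\F_{q^r}$. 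The one genuinely different tactic is step (iii): you compute $|G| = (Q-1)/\gcd(Q-1,d_1,\ldots,d_m)$ explicitly and match the minimal $j$ with $|G| \mid q^j - 1$ against the orbit size, whereas the paper shows $K \subseteq \F_{q^r}$ directly and then rules out a strict subfield $\F_{q^c}$, $c < r$, by deriving $\gamma^{q^c d_i} = \gamma^{d_i}$ for all $\gamma \in \F_Q$ and contradicting the minimality of $r$. Your computation is more constructive and makes the relation between $G$, $\gcd$, and the $q$-orbit explicit; the paper's contradiction argument is more self-contained and avoids the $\gcd$ bookkeeping. The final step is also packaged differently — you as the statement that a $1$-dimensional $\F_{q^r}$-module has no proper nonzero submodules, the paper as a two-stage argument through $\Tr = \Tr_1 \circ \Tr_2$ — but these are equivalent. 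One small point to flag: your first step writes $f$ as a sum of basic functions ``each in $\calf$,'' which relies on the split pieces from Lemma~\ref{lem:split} lying in $\calf$. The paper's formal definition of ``split'' omits this requirement, but the splitting lemmas do in fact produce pieces in $\calf$, and the paper's own proof of Theorem~\ref{thm:invariance} relies on the same strengthening, so your use of it is sound.
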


\begin{proof}
We first claim that Lemma~\ref{lem:split} implies
that there exists $\beta \in \F_Q$ such that $\Tr(\beta \vec x^{\vec d})$ is
a non-zero function in $\calf$.
To verify this, consider a ``minimal'' function
(supported on fewest monomials) $f \in \calf$ with $\vec d \in \supp(f)$. Since $f$ can't be split in $\calf$ (by minimality), by
Lemma~\ref{lem:split} $f$ must be basic and so equals (by definition
of being basic) $\Tr(\beta \vec x^{\vec d})$.

Now let $b = b(\vec d)$ be the smallest positive integer such
that $q^b {\vec d} \modstar Q = \vec d$. If $Q = q^n$, note that
$b$ divides $n$ and so one can write $\Tr:\F_Q \to \F_q$
as $\Tr_1 \circ \Tr_2$ where $\Tr_1:\F_{q^b} \to \F_q$
is the function $\Tr_1(z) = z + z^q + \cdots + z^{q^{b-1}}$
and $\Tr_2:\F_Q \to \F_{q^b}$ is the function
$\Tr_2(z) = z + z^{q^b} + \cdots + z^{Q/q^b}$.
(Both $\Tr_1$ and $\Tr_2$ are trace functions mapping the
domain to the range.)
It follows that $\Tr(\beta \vec x^{\vec d}) =
\Tr_1(\Tr_2(\beta) \vec x^{\vec d})$.

We first claim that $\Tr_1(\tau \vec x^{\vec d}) \in \calf$
for every $\tau \in \F_{q^b}$.
Let $S =
\{ \sum_{\alpha \in (\F_Q^*)^m} a_{\alpha} \cdot \alpha^{\vec d} \mid a_{\alpha} \in \F_q\}$.
We note that by linearity and affine-invariance of $\calf$,
we have that $\Tr_1(\Tr_2(\beta) \cdot \eta \vec x^{\vec d}) \in\calf$
for every $\eta \in S$. By definition $S$ is closed under
addition and multiplication and so is a subfield of
$\F_Q$. In fact, since every $\eta \in S$ satisfies
$\eta^{q^b} = \eta$ (which follows from the fact that
$\alpha^{\vec d} = \alpha^{q^b \vec d}$), we have that
$S \subseteq \F_{q^b}$. It remains to show $S = \F_{q^b}$.
Suppose it is a strict subfield of size $q^c$ for $c < b$.
Consider $\gamma^{d_i}$ for $\gamma \in \F_Q$
and $i \in [m]$. Since $\gamma^{d_i} \in S$, we have that
$\gamma^{d_i q^c} = \gamma^{d_i}$ for every $\gamma \in \F_Q$
and so we get $x_i^{q^c d_i} = x_i \mod (x_i^Q - x_i)$.
We conclude that $\vec x^{q^c \vec d} = \vec x^{\vec d} \pmod{\vec x^Q - \vec
x}$ which contradicts the minimality of $b = b(\vec d)$. We conclude
that $S = \F_{q^b}$. Since $\Tr_2(\beta) \in \F_{q^b}^*$, we
conclude that the set of coefficients
$\tau$ such that $\Tr_1(\tau \vec x^{\vec d}) \in \calf$
is all of $\F_{q^b}$ as desired.

Finally consider any $\lambda \in \F_Q$.
since $\Tr_2(\lambda) \in \F_{q^b}$, we have that
$\Tr_1(\Tr_2(\lambda) \vec x^{\vec d}) \in \calf$ (from the
previous paragraph), and
so
$\Tr(\lambda \vec x^{\vec d}) =
\Tr_1(\Tr_2(\lambda) \vec x^{\vec d}) \in \calf$
\end{proof}

\section{Coefficients of multinomial expansions modulo a prime}
\label{app:lucas}

For an integer $d$, let $d^{(i)}$ denote the $i$th digit in the $p$-ary
expansion of $d$, so that $d = \sum_{i=1}^\infty d^{(i)}p^i$.
Let $\equiv_p$ denote equivalence modulo $p$.
The following is a well known theorem of Lucas.

\begin{theorem}[Lucas' theorem]
If $d,e \in \Z$, then ${d \choose e} \equiv_p \prod_i {d_i \choose e_i}$.
\end{theorem}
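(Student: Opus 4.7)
The plan is to prove Lucas' theorem via the standard generating-function argument, exploiting the ``Frobenius'' congruence $(1+X)^p \equiv 1 + X^p \pmod{p}$ that holds in $\F_p[X]$.

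First, I would establish the base identity: for any prime $p$, $(1+X)^p \equiv 1 + X^p \pmod{p}$. This follows immediately from the binomial theorem together with the elementary fact that $\binom{p}{k} \equiv 0 \pmod p$ for $0 < k < p$, since $p$ appears in the numerator $p!$ but not in the denominator $k!(p-k)!$. Iterating (by raising both sides to the $p$-th power and invoking the Freshman's Dream in characteristic $p$), one obtains $(1+X)^{p^i} \equiv 1 + X^{p^i} \pmod{p}$ for every $i \geq 0$.

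Next, I would expand $(1+X)^d$ using the $p$-ary digits of $d$. Writing $d = \sum_i d^{(i)} p^i$, we compute in $\F_p[X]$:
\[
(1+X)^d = \prod_i (1+X)^{d^{(i)} p^i} = \prod_i \bigl((1+X)^{p^i}\bigr)^{d^{(i)}} \equiv \prod_i (1 + X^{p^i})^{d^{(i)}} \pmod{p}.
\]
Now I would compare the coefficient of $X^e$ on both sides. On the left this coefficient is $\binom{d}{e}$. On the right, expanding each factor $(1+X^{p^i})^{d^{(i)}}$ by the ordinary binomial theorem, a monomial of total degree $e$ arises by choosing an exponent $c_i \in \{0,1,\ldots,d^{(i)}\}$ from the $i$-th factor, contributing $\binom{d^{(i)}}{c_i} X^{c_i p^i}$, subject to $\sum_i c_i p^i = e$ with $c_i \leq d^{(i)} < p$. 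By uniqueness of the $p$-ary representation of $e$, this forces $c_i = e^{(i)}$ for all $i$, so the coefficient equals $\prod_i \binom{d^{(i)}}{e^{(i)}}$. Equating the two coefficients modulo $p$ yields the claim.

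The only mild subtlety is the uniqueness-of-representation step: if some $e^{(i)} > d^{(i)}$, then there is no valid choice of $c_i$ at all, and both sides of the congruence correctly evaluate to $0$ (the right-hand side because $\binom{d^{(i)}}{e^{(i)}} = 0$). Apart from this observation, every step is a routine manipulation, so I anticipate no real obstacle.
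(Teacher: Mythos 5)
The paper states Lucas' theorem without proof (``The following is a well known theorem of Lucas''), so there is no in-paper argument to compare against. Your proof is the standard generating-function argument and it is correct: the key identity $(1+X)^{p^i} \equiv 1 + X^{p^i} \pmod p$ reduces $(1+X)^d$ to the product $\prod_i (1+X^{p^i})^{d^{(i)}}$, and comparing coefficients of $X^e$, using the uniqueness of base-$p$ representation together with the constraint $c_i \le d^{(i)} < p$, yields $\binom{d}{e} \equiv \prod_i \binom{d^{(i)}}{e^{(i)}} \pmod p$. Your handling of the degenerate case $e^{(i)} > d^{(i)}$, where both sides vanish, is also correct. One small remark: the paper's statement quantifies over $d,e \in \Z$, but as written the $p$-ary digit expansion, and hence the theorem, only makes sense for non-negative integers; your proof (correctly) treats that case, and for the degenerate values $e < 0$ or $e > d$ both sides are zero by the usual convention $\binom{d}{e} = 0$.
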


In particular, ${d \choose e} \not\equiv 0 \pmod{p}$ if and only if $e \le_p d$,
so we have $(x+y)^d \equiv_p \sum_{e \le_p d} x^ey^{d-e}$.
More generally, we would like to know when ${\vec d \choose \vec E}$ vanishes
modulo~$p$. To this end, we use the following claim.

\begin{lemma}
\label{lemma:genlucas}
If $d \in \Z$ and $\vec e \in \Z^t$, then ${d \choose \vec e} \not\equiv_p 0$
only if $\vec e \le_p d$.
More generally, if $\vec d \in \Z^m$ and $\vec E \in \Z^{m \times t}$, then
${\vec d \choose \vec E} \not\equiv_p 0$ only if $\vec E \le_p \vec d$.
\end{lemma}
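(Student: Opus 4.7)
The plan is to handle the vector case first and then reduce the matrix case to it. For the vector case, I would rewrite the multinomial coefficient as the telescoping product
$$\binom{d}{\vec e} \;=\; \prod_{i=1}^t \binom{d - e_1 - \cdots - e_{i-1}}{e_i}$$
of ordinary binomial coefficients, and invoke Lucas' theorem on each factor. Equivalently (and more convenient here) I would use Kummer's theorem, which asserts that $\binom{a+b}{a}$ is nonzero modulo $p$ if and only if no carries occur in the base-$p$ addition of $a$ and $b$.

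Under the hypothesis $\binom{d}{\vec e} \not\equiv_p 0$, every factor in the telescope is nonzero mod $p$, so for each $i$ there are no carries in the base-$p$ addition $(d - e_1 - \cdots - e_i) + e_i = d - e_1 - \cdots - e_{i-1}$. Chaining these identities digit by digit yields
$$d^{(j)} \;=\; e_1^{(j)} + e_2^{(j)} + \cdots + e_t^{(j)} + \bigl(d - e_1 - \cdots - e_t\bigr)^{(j)}$$
for every $j$, and in particular $\sum_{i=1}^t e_i^{(j)} \le d^{(j)} < p$.

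From this digit-sum bound I would deduce $\vec e \le_p d$. For any $\vec f \le_p \vec e$ we have $f_i^{(j)} \le e_i^{(j)}$ for all $i,j$, so $\sum_i f_i^{(j)} \le \sum_i e_i^{(j)} \le d^{(j)} < p$. Hence the base-$p$ addition $f_1 + \cdots + f_t$ introduces no carries, its $j$-th digit equals $\sum_i f_i^{(j)} \le d^{(j)}$, and therefore $\sum_i f_i \le_p d$. This is exactly the paper's definition of $\vec e \le_p d$.

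For the matrix case, since $\binom{\vec d}{\vec E} = \prod_{i=1}^m \binom{d_i}{\vec e_i}$ where $\vec e_i$ is the $i$-th row of $\vec E$, the hypothesis that the product is nonzero mod $p$ forces every factor $\binom{d_i}{\vec e_i}$ to be nonzero mod $p$. Applying the vector case yields $\vec e_i \le_p d_i$ for every $i$, which is by definition $\vec E \le_p \vec d$. The only mildly subtle step in the plan is translating the Kummer/Lucas digit-sum condition into the universally quantified definition of $\vec e \le_p d$ (the third paragraph above); everything else is routine.
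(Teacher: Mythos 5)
Your proof is correct and follows essentially the same route as the paper: both factor the multinomial coefficient as the telescoping product $\prod_{i=1}^t \binom{d - e_1 - \cdots - e_{i-1}}{e_i}$ and apply Lucas (equivalently, Kummer) to each factor. The only difference is that the paper ends with the bare assertion that $e_i \le_p d - \sum_{j<i} e_j$ for all $i$ ``implies $\vec e \le_p d$,'' whereas you explicitly unpack this into the digit-sum inequality $\sum_i e_i^{(j)} \le d^{(j)}$ and then verify the paper's universally quantified definition of $\vec e \le_p d$ --- a welcome elaboration of a step the paper leaves to the reader.
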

\begin{proof}
We have ${d \choose \vec e} = \prod_{i=1}^{t-1} {d - \sum_{j=1}^{i-1} e_j \choose e_i}$.
For this to be nonzero modulo~$p$, by Lucas' theorem we have
$e_i \le_p d-\sum_{j=1}^{i-1} e_j$, from which it follows that $\vec e \le_p d$.
The more general statement then follows immediately from definition.
\end{proof}

\begin{lemma}
\label{lemma:expansion}
Let $\vec A \in \Z^{m \times t}$ and let $\vec d, \vec b \in \Z^m$ and $\vec x \in \Z^t$.
Then
$$
(\vec A \vec x + \vec b)^{\vec d} =
\sum_{\vec E} {\vec d \choose \vec E}
\vec A^{\vec E} \vec x^{\Sigma(\vec E)} \vec b^{\vec d - \Sigma(\vec E^{\top})}.
$$
\end{lemma}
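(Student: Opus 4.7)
The plan is to prove the identity by direct application of the multinomial theorem to each of the $m$ factors $(\vec A\vec x + \vec b)_i^{d_i}$ separately, then multiply the resulting expansions together and match terms against the definitions of $\binom{\vec d}{\vec E}$, $\vec A^{\vec E}$, $\vec x^{\Sigma(\vec E)}$, and $\vec b^{\vec d - \Sigma(\vec E^\top)}$ given earlier in the paper.

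First I would write
$$(\vec A\vec x + \vec b)^{\vec d} \;=\; \prod_{i=1}^m\Bigl(\textstyle\sum_{j=1}^t A_{ij}x_j + b_i\Bigr)^{d_i}.$$
For each fixed $i$, the multinomial theorem, treating $b_i$ as the ``constant'' term and $A_{i1}x_1, \ldots, A_{it}x_t$ as the remaining $t$ terms, gives
$$\Bigl(\textstyle\sum_j A_{ij}x_j + b_i\Bigr)^{d_i} \;=\; \sum_{\vec e_i}{d_i \choose \vec e_i}\,\prod_{j=1}^t (A_{ij}x_j)^{e_{ij}}\,b_i^{\,d_i - |\vec e_i|},$$
where $\vec e_i = (e_{i1},\ldots,e_{it})$ ranges over nonnegative integer vectors with $|\vec e_i| := \sum_j e_{ij} \le d_i$, and ${d_i \choose \vec e_i}$ is exactly the multinomial coefficient used in the paper (the coefficient of $\vec x^{\vec e_i}$ in $(1 + \sum_j x_j)^{d_i}$).

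Next I would multiply these $m$ expansions together, re-indexing by assembling the rows $\vec e_1,\ldots,\vec e_m$ into a single matrix $\vec E \in \Z^{m\times t}$ with $E_{ij} = e_{ij}$. Collecting the four kinds of factors, the product becomes
$$\sum_{\vec E}\,\Bigl(\prod_{i=1}^m{d_i \choose \vec e_i}\Bigr)\;\Bigl(\prod_{i,j}A_{ij}^{E_{ij}}\Bigr)\;\Bigl(\prod_{j=1}^t x_j^{\sum_i E_{ij}}\Bigr)\;\Bigl(\prod_{i=1}^m b_i^{\,d_i - \sum_j E_{ij}}\Bigr).$$
Finally I would match each group to the claimed notation: by definition of ${\vec d \choose \vec E}$, the first factor is ${\vec d \choose \vec E}$; the second is $\vec A^{\vec E}$; using $\Sigma(\vec E)_j = \sum_i E_{ij}$, the third is $\vec x^{\Sigma(\vec E)}$; and using $\Sigma(\vec E^\top)_i = \sum_j (\vec E^\top)_{ji} = \sum_j E_{ij}$, the fourth is $\vec b^{\,\vec d - \Sigma(\vec E^\top)}$.

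There is no real obstacle: the identity is a purely formal consequence of the multinomial theorem. The only point requiring care is the bookkeeping for $\Sigma$, namely the (slightly counterintuitive) fact that $\Sigma(\vec E)$ is a vector in $\Z^t$ recording column sums of $\vec E$ while $\Sigma(\vec E^\top)$ is a vector in $\Z^m$ recording its row sums, which is exactly what the two factors in $\vec x$ and $\vec b$ require.
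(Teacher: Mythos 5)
Your proof is correct and is essentially the same as the paper's: expand each factor $(\sum_j A_{ij}x_j + b_i)^{d_i}$ by the multinomial theorem, multiply the $m$ expansions, re-index the product over matrices $\vec E$, and match each block of factors to the definitions of $\binom{\vec d}{\vec E}$, $\vec A^{\vec E}$, $\vec x^{\Sigma(\vec E)}$, and $\vec b^{\vec d - \Sigma(\vec E^\top)}$. The paper's calculation is this exact sequence of algebraic manipulations, with the same bookkeeping for $\Sigma(\vec E)$ (column sums) versus $\Sigma(\vec E^\top)$ (row sums).
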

\begin{proof}
For matrices $\vec A, \vec E$, let $a_{ij}, e_{ij}$ denote their entries respectively.
The lemma follows by straightforward calculation. We have
\begin{eqnarray*}
(\vec A \vec x + \vec b)^{\vec d}
&=&
\prod_{i=1}^m \left( \sum_{j=1}^t a_{ij}x_j + b_i \right)^{d_i} \\
&=&
\prod_{i=1}^m
\left( \sum_{e_{i1},\ldots,e_{it}} 
{d_i \choose \langle e_{i1},\ldots,e_{it} \rangle}
\left(
\prod_{j=1}^t a_{ij}^{e_{ij}}x_j^{e_{ij}}
\right) b_i^{d_i - \sum_{j=1}^t e_{ij}}\right) \\
&=&
\sum_{\vec E}
\prod_{i=1}^m
\left(
{d_i \choose \langle e_{i1},\ldots,e_{it} \rangle}
\left(
\prod_{j=1}^t
a_{ij}^{e_{ij}} x_j^{e_{ij}}
\right)b_i^{d_i - \sum_{j=1}^t e_{ij}}
\right) \\
&=&
\sum_{\vec E}
{\vec d \choose \vec E}
\left( \prod_{i=1}^m \prod_{j=1}^t a_{ij}^{e_{ij}} \right)
\left( \prod_{j=1}^t x_j^{\sum_{i=1}^m e_{ij}} \right)
\left( \prod_{i=1}^m b_i^{d_i - \sum_{j=1}^t e_{ij}} \right)
\\
&=&
\sum_{\vec E} {\vec d \choose \vec E}
\vec A^{\vec E} \vec x^{\Sigma(\vec E)} \vec b^{\vec d - \Sigma(\vec E^{\top})}.
\end{eqnarray*}
\end{proof}

\end{document}